\documentclass[a4paper, 11pt, headings = small, abstract]{scrartcl}

\usepackage[english]{babel}
\usepackage[utf8]{inputenc}
\usepackage{geometry}

\usepackage[semibold]{libertine}
\usepackage[upint,amsthm]{libertinust1math}

\usepackage{amsmath}
\usepackage{amsfonts}
\usepackage{amssymb}
\usepackage{amsthm}

\usepackage{paralist}
\usepackage{comment}
\usepackage{graphicx}     % Pour inclure des graphiques
\usepackage{caption}      % Pour gérer les légendes
\usepackage{subcaption}   % Pour les sous-figures
\usepackage{multirow}
\usepackage{float} % For precise placement with [H]

\usepackage[authoryear]{natbib}
\usepackage[dvipsnames]{xcolor}
\usepackage[colorlinks=true,linkcolor=blue,citecolor=blue,pdfborder={0 0 0}]{hyperref}
\usepackage{orcidlink}

\usepackage{bm}
\usepackage{graphicx}

\usepackage{algorithm}
\usepackage{algpseudocode}

\usepackage{tikz}
\usetikzlibrary{positioning}
\usetikzlibrary{backgrounds}
\usepackage{pgfplots}
\usepackage{pgfplotstable} 
\usepgfplotslibrary{fillbetween}
\pgfplotsset{compat=1.18}

\newtheorem{theorem}{Theorem}[section]
\newtheorem{proposition}[theorem]{Proposition}
\newtheorem{lemma}[theorem]{Lemma}

\theoremstyle{definition}
\newtheorem{definition}[theorem]{Definition}

\newtheorem{remark}[theorem]{Remark}

\allowdisplaybreaks[4]

\numberwithin{equation}{section}

\newcommand{\R}{\mathbb{R}}

\newcommand{\N}{\mathbb{N}}
\newcommand{\eps}{\varepsilon}

\newcommand{\Eb}{\mathbb {E}}

\newcommand{\Mb}{\mathbb {M}}

\newcommand{\Ic}{\mathcal{I}}

\newcommand{\weak}{\rightsquigarrow}

\newcommand{\Prob}{\mathbb{P}}   
\newcommand{\Exp}{\operatorname{E}}

\newcommand{\diag}{\operatorname{diag}}
\newcommand{\argmin}{\operatornamewithlimits{\arg\min}}
\newcommand{\argmax}{\operatornamewithlimits{\arg\max}}
\newcommand{\diff}{\mathrm{d}}

\def\@tvsp{\mathchoice{{}\mkern-4.5mu}{{}\mkern-4.5mu}{{}\mkern-2.5mu}{}}
\def\ltrivert{\left|\@tvsp\left|\@tvsp\left|}
\def\rtrivert{\right|\@tvsp\right|\@tvsp\right|}
\makeatother

\newcommand{\loadn}{A}
\newcommand{\load}{B}
\newcommand{\setloadn}{\mathcal A_s(d,K)}
\newcommand{\setloadnpure}{\mathcal A_{\mathrm{Pure}}(d,K)}
\newcommand{\setloadpurerv}{\mathcal B_{\mathrm{Pure}}(d,K)}

\newcommand{\setstdf}[1]{\mathcal L(#1)}
\newcommand{\simplexpos}[1]{\mathbb S_{+}^{#1}}
\newcommand{\simplex}[1]{\mathbb S^{#1}}
\newcommand{\setspectraldepposK}[1]{\mathcal S_{\mathrm d}(\simplexpos{K-1}, {#1})}
\newcommand{\setspectraldepposd}[1]{\mathcal S_{\mathrm d}(\simplexpos{d-1}, {#1})}
\newcommand{\setspectrald}[1]{\mathcal S(\simplex{d-1}, {#1})}
\newcommand{\setspectralK}[1]{\mathcal S(\simplex{K-1}, {#1})}
\newcommand{\Phifactor}{\phi}
\newcommand{\Psifactor}{\psi}
\newcommand{\diagdom}[1]{\mathcal D_{\mathrm{DiagDom}}({#1})}
\newcommand{\paraspacetaildep}{\Theta_L}
\newcommand{\paraspacespectral}{\Theta_\Phi}

\newcommand{\TPDM}{\Sigma}
\newcommand{\shittyTPDM}{\Xi}
\newcommand{\indicator}[1]{\bm 1_{{#1}}}

\newcommand{\be}{{\bm e}}
\newcommand{\bE}{{\bm E}}
\newcommand{\bW}{{\bm W}}

\newcommand{\bX}{{\bm X}}
\newcommand{\bx}{{\bm x}}
\newcommand{\bY}{{\bm Y}}
\newcommand{\by}{{\bm y}}
\newcommand{\bZ}{{\bm Z}}
\newcommand{\bz}{{\bm z}}

\newcommand{\shear}{\alpha_{\operatorname{sh}}}

\begin{document}

\title{\fontsize{16}{19} Dimension Reduction in Multivariate Extremes via Latent Linear Factor Models}

\author{
Alexis Boulin\thanks{Ruhr-Universität Bochum, Fakultät für Mathematik. Email: \href{mailto:alexis.boulin@rub.de}{alexis.boulin@rub.de}}
%~\orcidlink{XXX}
\and
Axel B\"ucher\thanks{Ruhr-Universität Bochum, Fakultät für Mathematik. Email: \href{mailto:axel.buecher@rub.de}{axel.buecher@rub.de}}
~\orcidlink{0000-0002-1947-1617}
}

\date{\today}

\maketitle

\begin{abstract}
We propose a new and interpretable class of high-dimensional tail dependence models based on latent linear factor structures. Specifically, extremal dependence of an observable vector is assumed to be driven by a lower-dimensional latent $K$-factor model, where $K \ll d$, thereby inducing an explicit form of dimension reduction. Geometrically, this is reflected in the support of the associated spectral dependence measure, whose intrinsic dimension is at most $K-1$. The loading structure may additionally exhibit sparsity, meaning that each component is influenced by only a small number of latent factors, which further enhances interpretability and scalability. 
Under mild structural assumptions, we establish identifiability of the model parameters and provide a constructive recovery procedure based on a margin-free tail pairwise dependence matrix, which also yields practical rank-based estimation methods. The framework combines naturally with marginal tail models and is particularly well suited to high-dimensional settings. We illustrate its applicability in a spatial wind energy application, where the latent factor structure enables tractable estimation of the risk that a large proportion of turbines simultaneously fall below their cut-in wind speed thresholds.
\end{abstract}

\noindent\textit{Keywords.} 
Dimension reduction, High-dimensional extremes, Identifiability, Latent factor models, Tail dependence.

\section{Introduction}

Extreme value statistics investigates the probabilistic behavior of rare events, that is, realizations of a random sample that occur at unusually high or low levels \citep{BeiGoeSegTeu04}. A central concept in this field is \emph{tail dependence}, which describes the strength and structure of dependence between components of a random vector when one or more coordinates take extreme values. In contrast to classical correlation-based measures, tail dependence captures co-movements in the far tails of a joint distribution and is therefore essential for risk assessment in areas such as finance, environmental science, and engineering.

After standardizing the margins to a common reference distribution, tail dependence admits several equivalent representations. Important characterizations include the Pickands dependence function \citep{Pic81}, the stable tail dependence function \citep{Hua92}, the tail copula \citep{SchSta06}, as well as formulations based on multivariate regular variation, exponent measures, and spectral measures \citep{Res87}. These objects are in one-to-one correspondence and offer complementary perspectives on extremal dependence; see Section~\ref{sec:mathematical-preliminaries} for a concise overview.

The present paper focuses on the construction of new and interpretable tail dependence models that are particularly suitable for high-dimensional settings. It is convenient to introduce these models at the level of the \emph{stable tail dependence function} (STDF) $L:[0,\infty)^d \to [0, \infty)$, formally defined by
\begin{align} 
\label{eq:definition-stdf}
        L(\bx) &= \lim_{n \to \infty} n \mathbb{P}\left\{ F_1(X_1) > 1-\frac{x_1}{n} \textrm{ or } \dots \textrm{ or } F_d(X_d) > 1-\frac{x_d}{n} \right\}, \qquad \bx \in [0, \infty)^d
\end{align}
where $F_1, \dots, F_d$ are the marginal cdfs of $F$. 
At a conceptual level, $L$ plays a role for tail dependence modeling analogous to that of the copula $C$ in general dependence modeling. Parametric modeling of $L$ can hence draw inspiration from parametric modeling of $C$, an approach that has, for instance, been successfully carried out for vine copula models recently, with their extremal analogues called X-vines \citep{KirilioukSegers2025}.  Our paper follows similar lines, and is based on transferring the concept of factor copulas \citep{OhPat17} to the realm of tail dependence; see also \cite[Section 6]{EinmahlKrajinaSegers2012} and \cite{boulin2026structured} for some constructions of that form.

The concept of factor copulas is simple and intuitive: the copula $C$ of some observable $\bX$ is assumed to be the same copula that arises in a classical linear latent factor model. The parameter of the copula  hence consists of the factor loading matrix in the latent linear model, as well as on further, possibly nonparameteric assumptions on the latent factors. We follow this construction:
suppose that some random vector $\bY' \in [0,\infty)^d$ is generated by the following simple linear factor model:
\begin{align} \label{eq:genuine-linear-factor-dependence}
\bY' := \loadn \bZ \quad \text{ with } \quad \bZ := K R \bm \Lambda,
\end{align}
where $\loadn \in [0,1]^{d \times K}$ with $K \in [d]$ is a factor loading matrix (with row sums 1 and full column rank),
$R$ is a standard Pareto random variable, and $\bm \Lambda$ is a random vector on the unit simplex $\simplexpos{K-1}$ in $[0,\infty)^K$ whose margins $\Lambda_j$ have expectation  $1/K$ and which is independent of $R$. 
One should think of $K$ being much smaller than $d$, such that the dependence within $\bY'$ is essentially induced by some $K$-dimensional dependence only; some form of dimension reduction.
A standard calculation (see Proposition~\ref{prop:genuine-linear-factor-dependence} below) then shows that the STDF of $\bY'$ is given by
\begin{align} \label{eq:stdf-factor-introduction}
    L_{K, \loadn, \Psifactor}(\bx)  
    =
    K  \int_{\simplexpos{K-1}} \bigvee_{j \in [d]} \Big( x_j\sum_{a\in [K]} \loadn_{ja} z_a\Big)  \, \Psifactor(\diff\bz), \quad \bx\in [0,\infty)^d,
\end{align}
where $\Psifactor$ is the distribution of $\bm \Lambda$. Our modeling assumption for the observable $\bX$ is now as follows: we assume that, for some unknown triple $(K, \loadn, \Psifactor)$, the STDF $L$ of $\bX$ satisfies $L=L_{K, \loadn, \Psifactor}$, an assumption which we call  \textit{latent linear $K$-factor tail dependence}; see Definition~\ref{def:latent-linear-factor-tail-dependence} below for details. The model allows for similar interpretation as a factor copula model in the sense that the tail dependence arises from a lower dimensional latent vector $\bZ$, with the influence of each factor on the coordinates of $\bX$ controlled by a loading matrix $\loadn$. 

The dimension reduction inherent in the above model assumption admits a geometric interpretation at the level of spectral (dependence) measures. Specifically, let $\Psi_{\bm X}$ denote the spectral measure associated with $L$; see \eqref{eq:relation-stdf-spectral-dependence-measure} for the general relationship between $L$ and $\Psi_{\bm X}$, and Lemma~\ref{lem:stdf-spectral-dependence-measure-factor-model} for its specific form under the model in \eqref{eq:stdf-factor-introduction}. In this setting, the support of $\Psi_{\bm X}$ is contained in the column space of $\loadn$; see Figure~\ref{fig:sphere-plane} for an illustration. The intrinsic dimension of the support is (at most) $K-1$.

\begin{figure}[t!]
    \centering    
    \includegraphics[width=.5\textwidth, keepaspectratio]{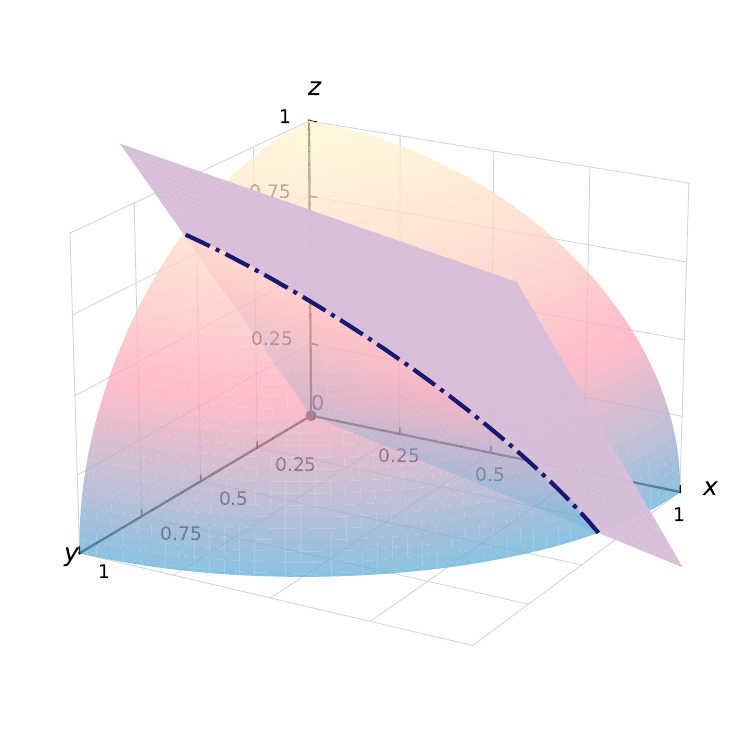}
    \vspace{-1.1cm}
    \caption{Illustration of the support of a spectral dependence measure with respect to the Euclidean norm (dashed black) when $\loadn \in \R^{3\times 2}$ has columns $(1, 1/3, 0)$ and $(0,2/3,1)$.}
    \label{fig:sphere-plane}
\end{figure}

In the general, the model in \eqref{eq:stdf-factor-introduction} with unknown $(K, \loadn, \Psifactor)$  is not identifiable; in fact, it is easy to see that any STDF $L$ can be written as $L=L_{d,I_d, \Psi_{\bX}}$ with $I_d$ the $d\times d$ identity matrix and $\Psi_\bX$ the spectral dependence measure of $\bX$ (see Section~\ref{sec:mathematical-preliminaries} for details). Hence, further restrictions on the parameter space are necessary to ensure that $(K, \loadn, \Psifactor)$ can be reconstructed from $L$. A step in that direction was taken in \cite{boulin2026structured}, who additionally assume that $\Psifactor$ is equal to $K^{-1}\sum_{a \in [K]} \delta_{\bm e_a}$ with $\delta_{\bm e_a}$ the dirac measure at the $a$th unit vector in $\R^K$ (equivalently, the latent factors $Z_1, \dots, Z_K$ are independent, which is a standard assumption in related work as explained below) and that each unit vector $\bm e_a$ appears at least once among the rows of $\loadn$. Under these assumptions, the parameter $(K, \loadn)$ can be identified from the tail correlation matrix $\mathcal X$ with entries $\chi(j,\ell) = 2 - L(\bm e_j + \bm e_\ell)$ for $j,\ell \in [d]$. However, the resulting model was found to be too restrictive for many practical applications. One of the main mathematical results of this paper is that, under the same assumption on $\loadn$ and an additional assumption on $\Psifactor$, the full triple $(K, \loadn, \Psifactor)$ becomes identifiable. 

The proof for the identifiability result is constructive, and relies on the fact that $(K,\loadn)$ can be identified from a margin-free version of the \textit{tail pairwise dependence matrix} from \cite{larsson2012extremal}, formally defined in Definition~\ref{def:tail-pairwise-dependence-matrix} below. Since the latter can be easily estimated from an observed sample, we immediately obtain an estimator for $(K,\loadn)$. In particular, we then know the rows of $\loadn$ which are unit vectors, which allows for estimation of $\Psifactor$, either nonparametrically \citep{EinmahldeHaanPiterbarg2001, EinmahlSegers2009}, or parametrically if one imposes a parametric model \citep{EinmahlKrajinaSegers2012}.

The proposed tail dependence model can be naturally combined with marginal tail models, yielding a wide range of applications, particularly in high dimensions with, say, $d \ge 500$. We develop such an application in detail for a spatial setting.
Suppose that each coordinate of an observable random vector $\bW$ represents the maximal daily wind speed measured at wind turbine $j$, where $d$ denotes the total number of turbines in a region of interest. Wind turbines operate only above a cut-in speed threshold and generate no electricity below this level. For long-term planning and control of energy production, it is therefore important to assess the risk that a large proportion (say, more than 80\%) of turbines simultaneously fail to produce electricity because wind speeds fall below their respective cut-in thresholds \citep{ohlendorf2020frequency}.
Formally, let $\tau_j$ denote the cut-in threshold of turbine $j$. Since $\tau_j$ typically lies in the lower tail of the distribution of $W_j$, this naturally motivates the use of extreme value methods. We are interested in the probability
\begin{align} \label{eq:tail-parameter-introduction}
p 
&= \Prob\Big( \exists J \subset [d] \text{ with } |J| \ge 0.8\, d 
: W_j \le \tau_j \ \text{for all } j \in J \Big).
\end{align}
Below, we show that under a latent linear $K$-factor model for lower-tail dependence as introduced above, and assuming that $\loadn$ satisfies the stated unit-row constraint, the probability $p$ can be approximated by a conditional expectation involving essentially only $K$ coordinates of the observable vector $\bW$. This reduction enables straightforward estimation, which we illustrate in Section~\ref{sec:case-study}.

\medskip
\noindent
\textbf{Related work.} The problem of dimension reduction for multivariate extremes has attracted considerable attention in recent years, motivated by the need to model tail dependence structures in high-dimensional settings. Most contributions focus on the unsupervised setting, where the objective is to characterize and simplify the geometry of the joint tail of a high-dimensional random vector. Existing approaches include clustering methods that group components exhibiting simultaneous extreme behavior \citep{chautru2015dimension, chiapino2019identifying, janssen2020kmeans}, support and sparse extremal directions identification procedures based on the spectral measure \citep{goix2017sparse, simpson2020determining, meyer2024multivariate}, principal component techniques applied to the angular component of extremes \citep{cooley2019decompositions, drees2021dimension, reinbott2025principal}, and graphical models that exploit conditional independence structures in the tail to obtain sparse and interpretable dependence representations \citep{engelke2020graphical, EngelkeVolgushev2022, engelke2025learningextremalgraphicalstructures}. 

Variants of the models in \eqref{eq:genuine-linear-factor-dependence} and \eqref{eq:stdf-factor-introduction} have been widely studied, typically under (asymptotic) independence of the latent factors, leading to a discrete spectral measure \citep{EinmahlKrajinaSegers2012, boulin2026structured}. In this case, the STDF in \eqref{eq:stdf-factor-introduction} has a max-linear form, and the corresponding class of multivariate extreme value distributions is dense in the full model class \citep{Fougeres2013}, highlighting its flexibility.
When $K=d$, such models arise naturally from structural equation models (SEMs); see \cite{gissibl2018maxlinear, kluppelberg2021estimating} for max-linear and \cite{GneccoMeinshausenPetersEngelke2021} for linear formulations. In these and subsequent works, however, the SEM structure is typically imposed on the entire distribution rather than restricted to the tail. Linear and max-linear models have also served as benchmarks for spherical clustering methods designed to recover the atom locations of a discrete spectral measure \citep{janssen2020kmeans, medina2024spectral}.

\medskip
\noindent
\textbf{Outline.} 
We review preliminaries on tail dependence and multivariate regular variation in Section~\ref{sec:mathematical-preliminaries}. Latent linear factor models for tail dependence of the form \eqref{eq:stdf-factor-introduction} are formally introduced in Section~\ref{sec:latent-linear-factor-models}, where we also establish a key identifiability result and summarize relevant tail properties of the model. 
Section~\ref{sec:estimating-latent-linear-factor-models} is devoted to statistical inference for latent linear factor models. In subsequent subsections, we address estimation of $K$, $\loadn$, and $\Psifactor$, as well as of derived tail characteristics that additionally require marginal tail estimation. 
An extensive case study is presented in Section~\ref{sec:case-study}, demonstrating that the proposed methods perform well in practice and yield reasonable extrapolation beyond the range of the observed data.
Section~\ref{sec:conclusion} concludes. All proofs are deferred to a sequence of appendices, which also includes a discussion of latent linear factor modeling for observable regularly varying random vectors and additional details on selected aspects of the case study.

\section{Mathematical preliminaries on tail dependence and multivariate regular variation}
\label{sec:mathematical-preliminaries}

A key mathematical tool for understanding the model for $L$ from \eqref{eq:stdf-factor-introduction} is the concept of multivariate regular variation of a $d$-variate random vector. We briefly summarize that concept, following the exposition in \cite{kulik2020heavy}: let $\Eb_0^d = \R^d \setminus \{ \bm 0\}$, and write $\Mb(\Eb_0^d)$ for the set of Borel measures on $\Eb_0^d$ that are finite for any Borel set that is bounded away from zero. We denote by $\mathcal{C}(\Eb_0^d)$ the set of continuous, bounded, positive functions on $\Eb_0^d$ whose supports are bounded away from zero. For $\nu, \nu_1, \nu_2, \dots \in \Mb(\Eb_0^d)$, we say $\nu_n \rightarrow \nu$ in $\Mb(\Eb_0^d)$ if $\int f d\nu_n \rightarrow \int f d \nu$ for all $f \in \mathcal{C}(\Eb_0^d)$; see Definition B.1.16 in \cite{kulik2020heavy}.

A $d$-variate random vector $\bY$ is regularly varying on $\R^d$ (Definition 2.1.2 and Theorem 2.1.3 in \citealp{kulik2020heavy}) if there exists $\alpha>0$ (the \textit{tail index}), a  function $g:(0,\infty) \to (0,\infty)$ which is regularly varying at infinity with index $\alpha$ (the \textit{auxiliary function}) and a non-zero measure $\nu_\bY \in \Mb(\Eb_0^d)$ (the \textit{exponent measure associated with $g$}) such that
\begin{equation} 
\label{eq:rv} 
g(x) \Prob\left\{ x^{-1} \bY \in \cdot \right\} \rightarrow \nu_{\bY}(\cdot) \quad \textrm{in } \Mb(\Eb_0^d) \quad \text{as }  x \to \infty.
\end{equation} 
In that case, the limit measure $\nu_{\bY}$ has the scaling property $\nu_{\bY}(cA) = c^{-\alpha} \nu_{\bY}(A)$ for any $c > 0$ and any Borel set $A$ bounded away from zero. Moreover, $g$ and $\nu_\bY$ are unique up to a factor in the sense that, if the definition above holds with another choice $(g', \nu_\bY')$, then there exists a constant $\xi>0$ such that $\nu_\bY = \xi \nu_\bY'$ and $\lim_{x \to \infty}g(x) / g'(x) = \xi$ (Theorem B.2.2 in \citealp{kulik2020heavy}).

Let $\| \cdot \|$ be an arbitrary norm on $\R^d$, and let $\simplex{d-1}=\simplex{d-1}(\|\cdot\|)$ be the associated unit sphere; often, we suppress the norm from the notation when it is clear from the context. Consider the bijective transform $T : \Eb_0^d \mapsto (0,\infty) \times \simplex{d-1}$ defined by $T(\bx)=( \| \bx\|, \bx / \| \bx \|)$,
whose inverse $T^{-1} : (0,\infty) \times \simplex{d-1} \mapsto \Eb_0^d$ satisfies $T^{-1}(r,\bm{\lambda}) = r\bm{\lambda}$. The homogeneity of $\nu_{\bY}$ implies that, for any $r > 0$ and any Borel set $A$ in $\simplex{d-1}$,
\begin{align} \label{eq:polar-decomposition}
    \nu_{\bY} \circ T^{-1}((r,\infty)\times A) 
    &= \nonumber
    \nu_{\bY}\Big( \Big\{ \bx \in \Eb_0^d : \| \bx \|> r, \frac{\bx}{\| \bx \|} \in A \Big\} \Big) 
    \\&= 
    r^{-\alpha} \nu_{\bY}\Big( \Big\{ \bx \in \Eb_0^d : \| \bx \| > 1, \frac{\bx}{\| \bx \|} \in A \Big\} \Big) = \varsigma_{\bY} \cdot (\nu_\alpha \otimes \Phi_{\bY})\big( (r,\infty) \times A\big),
\end{align}
where $\varsigma_{\bY} = \varsigma^{\| \cdot\|}_{\bY} = \nu_{\bY}(\{ \bx\in \Eb_0^d : \| \bx \| > 1 \})$ is a strictly positive constant, $\nu_\alpha$ is the measure on $(0,\infty)$ defined by $\nu_\alpha(\diff r) = \alpha r^{-\alpha -1}\diff r$ and
\begin{align} \label{eq:definition-spectral-measure}
    \Phi_{\bY}(\cdot) 
    := 
    \Phi_{\bY}^{\|\cdot\|}(\cdot) 
    := 
    \varsigma^{-1}_{\bY} \cdot \nu_{\bY}\Big( \Big\{ \bx \in \Eb_0^d : \| \bx \| > 1, \frac{\bx}{\| \bx \|} \in \cdot \Big\} \Big),
\end{align}
is the \emph{spectral measure of $\bY$ associated with the norm $\| \cdot\|$}, and where $\otimes$ is the product measure. Note that the spectral measure is a probability measure on $\simplex{d-1}$ that does not depend on the auxiliary function $g(\cdot)$. Moreover, any probability measure on $\simplex{d-1}$ can arise as a spectral measure; see Lemma~2.2.2 in \cite{kulik2020heavy}. It can further be shown that the convergence in \eqref{eq:rv} implies that
\begin{equation} 
\label{eq:spectral_measure}
\mathbb{P} \Big( \frac{\bY}{\|\bY\|} \in \, \cdot \,\, \Big| \, \|\bY\| > x \Big) \rightsquigarrow \Phi_{\bY}(\cdot) \quad \textrm{ in } \simplex{d-1} \quad \text{as }  x \to \infty,
\end{equation} 
where the arrow $\weak$ denote weak convergence of probability measures. Moreover and conversely, if \eqref{eq:spectral_measure} is met for some spectral measure $\Phi_\bY$, and if $\|\bY\|$ is regularly varying of index $\alpha>0$, and if there exists a scaling function $g$ such that $\lim_{y \to \infty}g(y) \Prob(\|\bY\|>y)$ exists and is positive, then $\bY$ is regularly varying of index $\alpha$ and with spectral measure $\Phi_\bY$.

The previous concept of regular variation is typically too strict as a model assumption for an observed random vector $\bX$. However, it can directly be linked to the assumption that the stable dependence function $L$ of $\bX$ from \eqref{eq:definition-stdf} exists. More specifically, 
if $\bX$ has continuous marginal distributions functions $F_1, \dots, F_d$, then the STDF of $\bX$ exists if and only if the random vector $\bY$ with coordinates $Y_j = 1/(1-F_j(X_j))$ (which are standard Pareto distributed) is regularly varying. 
In that case, the index of regular variation is necessarily 1, the spectral measure $\Phi_\bY$ of $\bY$ with respect to the norm $\|\cdot\|$ is concentrated on $\simplexpos{d-1}=\simplexpos{d-1}(\|\cdot\|)$ (the set of all vectors $\bx \in [0,\infty)^d$ with $\| x \|=1$) and it is linked to $L$ by
\begin{align} 
\label{eq:relation-stdf-spectral-dependence-measure}
L(\bm x) = \varsigma_{\bm Y} \int_{\simplexpos{d-1}} \max_{j \in [d]}(\lambda_j x_j) \, \Phi_\bY( \diff \bm \lambda),
\end{align}
where $\varsigma_{\bm Y} = \varsigma_{\bm Y}^{\| \cdot \|}$ is a constant depending on the norm that is explicitly given in Lemma~\ref{lem:constant-in-front-of-L}, with the two most convenient expressions being $\varsigma_{\bm Y}=d$ for the 1-norm and $\varsigma_{\bm Y}=L(\bm 1)$ for the max-norm. 
We call $\Psi_\bX:= \Phi_\bY|_{\simplexpos{d-1}}$ the \textit{spectral dependence measure of $\bX$ with respect to the norm $\| \cdot \|$}, and we write $\setspectraldepposd{\| \cdot \|}$ for the set of spectral dependence measures with respect to the norm $\| \cdot \|$; here, the index $\mathrm d$ stands for \textit{dependence}. 
Since the spectral dependence measure is uniquely determined by $L$ (see, e.g., Lemma 2.1 in \citealp{boulin2026structured}), there exists a bijection 
\begin{align} \label{eq:stdf-spectral-dependence-measure-bijection}
    S_{\| \cdot \|} : \setstdf{d} \to \setspectraldepposd{\|\cdot\|}
\end{align} 
between $\setstdf{d}$, the set of $d$-variate STDFs, and $\setspectraldepposd{\|\cdot\|}$.
Restricting attention to the $p$-norms with $p \in [1,\infty]$, it  can be shown that
\begin{align}
\label{eq:set-of-spectral-dependence-measures}
    \setspectraldepposd{\| \cdot \|_p}
    =
    \Big\{ \Psi \text{ prob.\ measure on } \simplexpos{d-1}(\| \cdot\|_p) \, \Big| \, \Exp_{\bm \Lambda \sim \Psi}[\Lambda_1] =  \dots = \Exp_{\bm \Lambda \sim \Psi}[\Lambda_d] \in [d^{-1},d^{-1/p}]\Big\},
\end{align}
where we interpret $d^{-1/\infty}=d^0=1$; see Lemma~\ref{lem:set-of-spectral-measures}. Remarkably, for $d=1$, all marginal expectations are equal to $d^{-1}$. For all choices $p>1$, the two extreme cases $\Exp_{\bm \Lambda \sim \Psi}[\Lambda_1]=d^{-1}$ and $\Exp_{\bm \Lambda \sim \Psi}[\Lambda_1]=d^{-1/p}$ correspond to tail independence and perfect tail dependence, respectively.
Finally note that, if $\bX$ itself is regularly varying and $[0,\infty)^d$-valued, the spectral measure $\Phi_\bX$ of $\bX$ (which also lives on $\simplexpos{d-1}(\| \cdot\|)$) is in general different from its spectral dependence measure $\Psi_\bX$; even their supports may be different.

\section{Latent linear factor models for tail dependence}
\label{sec:latent-linear-factor-models}

Write $\setloadn$ for the set of matrices $\loadn \in [0,1]^{d\times K}$ with row sums equal to one and with full column rank; the index $s$ stands for \textit{standardized}. Recall the set of spectral dependence measure $\setspectraldepposK{\|\cdot\|_1}$ from \eqref{eq:set-of-spectral-dependence-measures}.
We start by calculating the STDF for the simple linear factor model from \eqref{eq:genuine-linear-factor-dependence} in the introduction.

\begin{proposition}
\label{prop:genuine-linear-factor-dependence}
Let $K \in [d]$, $\loadn \in \setloadn$ and $\Psifactor\in \setspectraldepposK{\|\cdot\|_1}$. Then the random vector
\[
\bY' := \loadn \bZ \quad \text{ with } \quad \bZ := K R \bm \Lambda,
\]
where $R$ is standard Pareto and $\bm \Lambda  \sim \Psifactor$ is independent of $R$, has STDF $L=L_{K, \loadn, \Psifactor}$ as in \eqref{eq:stdf-factor-introduction}, that is, 
\begin{align} \label{eq:stdf-factor}
    L_{K, \loadn, \Psifactor}(\bx)  
    = 
    K  \int_{\simplexpos{K-1}} \bigvee_{j \in [d]} \Big( x_j\sum_{a\in [K]} \loadn_{ja} z_a\Big)  \, \Psifactor(\diff\bz), \quad \bx\in [0,\infty)^d.
\end{align}
Further, the spectral dependence measure of $\bm Z$ with respect to the 1-norm is given by $\Psi_\bZ = \Psifactor$.
\end{proposition}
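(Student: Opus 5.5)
The plan is a direct computation. The structure $\bY'=KR\,\loadn\bm\Lambda$ makes the margins of $\bY'$ exactly standard Pareto above a fixed level, so the limit in \eqref{eq:definition-stdf} is attained exactly for all large $n$.

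\textbf{Step 1 (marginal tails).} Write $W_j:=(\loadn\bm\Lambda)_j=\sum_{a\in[K]}\loadn_{ja}\Lambda_a$.
\begin{itemize}
\item Since the rows of $\loadn$ are probability vectors and $\bm\Lambda\in\simplexpos{K-1}$, we have $0\le W_j\le \max_a\Lambda_a\le 1$.
\item By \eqref{eq:set-of-spectral-dependence-measures} with $p=1$, every $\Lambda_a$ has mean $1/K$, so $\Exp[W_j]=K^{-1}\sum_a \loadn_{ja}=1/K$.
\end{itemize}
Conditioning on $\bm\Lambda$ and using $\Prob(R>t)=\min(1,1/t)$, we get, for $y\ge K$,
\[
\Prob(Y'_j>y)=\Exp\big[\min(1,KW_j/y)\big]=K\,\Exp[W_j]/y=1/y ,
\]
because $KW_j/y\le 1$ there. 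Hence $F_j(y)=1-1/y$ on $[K,\infty)$, and $F_j$ is continuous on this range. Possible atoms, for instance at $0$ when $W_j=0$ with positive probability, lie below $K$ and play no role.

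\textbf{Step 2 (the STDF).} Fix $\bx\in[0,\infty)^d$ and restrict the union to indices with $x_j>0$; coordinates with $x_j=0$ contribute neither to the event nor to the maximum. For $n\ge K\max_j x_j$, Step 1 gives $\{F_j(Y'_j)>1-x_j/n\}=\{Y'_j>n/x_j\}$. Therefore
\[
n\Prob\Big(\bigcup_j\{Y'_j>n/x_j\}\Big)=n\Prob\Big(KR\max_j x_jW_j>n\Big)=K\,\Exp\Big[\max_j x_jW_j\Big].
\]
The last equality uses $K\max_j x_jW_j\le K\max_j x_j\le n$, so the $\min(1,\cdot)$ is inactive. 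This expression does not depend on $n$, so the limit exists and equals \eqref{eq:stdf-factor} after writing the expectation as an integral against $\Psifactor$.

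\textbf{Step 3 (spectral dependence measure of $\bZ$).} The same argument as in Step 1, with $\loadn=I_K$, shows $\Prob(Z_a>y)=1/y$ for $y\ge K$. Hence the Pareto-transformed vector $\tilde\bZ$, with $\tilde Z_a=1/(1-F_a(Z_a))$, coincides with $\bZ$ coordinatewise whenever $Z_a\ge K$.

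I expect this to be the only delicate point. We must argue that $\tilde\bZ$ and $\bZ$ have the same exponent measure, even though they may differ on coordinates that are small. The plan is to note that $\|\tilde\bZ-\bZ\|_\infty\le C$ for a constant $C$ (both are bounded by roughly $K$ on the discrepancy region). Bounded perturbations do not change the limit in \eqref{eq:rv}, since $x^{-1}(\tilde\bZ-\bZ)\to0$ uniformly; this is a standard argument with test functions in $\mathcal C(\Eb_0^K)$, which are uniformly continuous.

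It then remains to identify the spectral measure of $\bZ$ itself.
\begin{itemize}
\item We have $\|\bZ\|_1=KR$ and $\bZ/\|\bZ\|_1=\bm\Lambda$, which is independent of $R$. Hence $\Prob(\bZ/\|\bZ\|_1\in\cdot\mid\|\bZ\|_1>x)=\Psifactor$ for all $x\ge K$.
\item $\|\bZ\|_1$ is Pareto-tailed with index $1$.
\end{itemize}
By the converse statement following \eqref{eq:spectral_measure}, $\bZ$ is regularly varying with spectral measure $\Psifactor$. Consequently $\Psi_\bZ=\Psifactor$.
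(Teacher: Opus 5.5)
Your argument is correct, and it follows a more elementary route than the paper. The paper works at the level of regular variation. It notes that $\bZ$ is regularly varying with 1-norm spectral measure $\Psifactor$. It then applies the linear-image result Proposition~\ref{prop:genuine-linear-factor-dependence-regular-variation} (a continuous-mapping argument for $\Mb(\Eb_0)$-convergence) to obtain the spectral measure $f_\loadn\,\#\,\Psifactor^\loadn$ of $\bY'$ with respect to an arbitrary norm on $\R^d$. It uses that the margins of $\bY'$ are Pareto to identify this with the spectral dependence measure, and finally reads off $L$ via Lemma~\ref{lem:stdf-spectral-dependence-measure-factor-model}, where the normalising constant is fixed by $L(\be_\ell)=1$.

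You bypass this machinery: because $W_j\le 1$, the margins are exactly Pareto above level $K$, so $n\Prob(\cdot)$ already equals $L_{K,\loadn,\Psifactor}(\bx)$ for all $n\ge K\max_j x_j$. Your level $K$ is in fact the correct one. The paper's marginal computation claims $\Prob(Y'_j>x)=1/x$ for $x\ge 1$, silently dropping the truncation $\min(1,\cdot)$. For $K\ge 2$, $\Psifactor=K^{-1}\sum_a\delta_{\be_a}$ and $\loadn_{j\cdot}=\be_1^\top$, one instead gets $\Prob(Y'_j>x)=1/K$ on $[1,K)$. Your bounded-perturbation step supplies exactly the argument needed to pass from Pareto tails to equality of spectral and spectral dependence measures, which the paper leaves implicit. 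What the paper's route buys in return is the pushforward representation of $\Psi_{\bY'}$ for every norm on $\R^d$. Lemma~\ref{lem:stdf-spectral-dependence-measure-factor-model} and the identifiability proof subsequently rely on that representation.

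Two small points on Step 3. First, elements of $\mathcal C(\Eb_0^K)$ need not be uniformly continuous, so your stated justification is not quite right. The conclusion still holds: $\Mb(\Eb_0^K)$-convergence may be tested on bounded Lipschitz functions with support bounded away from zero (Portmanteau theorem), or you can argue directly via Lemma~\ref{lem:vanishing_noise} with the bounded noise $\tilde{\bZ}-\bZ$. Second, Step 3 can be shortened. Your Step 2 with $d=K$ and $\loadn=I_K$ gives $L_\bZ(\bx)=K\int\max_a x_a z_a\,\Psifactor(\diff\bz)$. Since $\varsigma=K$ for the 1-norm in \eqref{eq:relation-stdf-spectral-dependence-measure} and the spectral dependence measure is uniquely determined by the STDF, $\Psi_\bZ=\Psifactor$ follows immediately.
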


The following definition contains the essential modeling assumption for the observable random vector $\bX$ employed throughout this paper.

\begin{definition}[Latent linear factor tail dependence]
\label{def:latent-linear-factor-tail-dependence}
Let $\bX$ be a $d$-variate random vector with STDF $L$ and let $K\in [d]$. The tail dependence of $\bX$ is said to be of \textit{latent (linear) $K$-factor form} with \textit{factor loading matrix} $\loadn \in \setloadn$  and \textit{factor spectral dependence measure} $\Psifactor \in \setspectraldepposK{\|\cdot\|_1}$
if $L=L_{K, \loadn, \Psifactor}$ from \eqref{eq:stdf-factor}.
\end{definition}

As explained in Section~\ref{sec:mathematical-preliminaries}, for any fixed norm $\|\cdot\|$ on $\R^d$, the STDF of $\bX$ is in one-to-one correspondence to the spectral dependence measure $\Psi_\bX$ with respect to $\|\cdot\|$. The following lemma characterizes $\Psi_\bX$ under the assumption of latent linear $K$-factor tail dependence.

\begin{lemma}[Latent linear factor tail dependence on the level of spectral dependence measures]
\label{lem:stdf-spectral-dependence-measure-factor-model}
Fix a norm  $\|\cdot\|$ on $\R^d$. The $d$-variate random vector $\bX$ has STDF $L=L_{K, \loadn, \Psifactor}$ as in \eqref{eq:stdf-factor} if and only if the spectral dependence measure $\Psi_{\bX}$ of $\bX$ with respect to $\|\cdot\|$ is the pushforward of $\Psifactor^\loadn$ by $f_{\loadn}$, i.e., 
\begin{align}
\label{eq:spectral-dependence-measure-factor}
\Psi_\bX = f_{\loadn}  \, \# \, \Psifactor^\loadn =  \Psifactor^\loadn(f^{-1}_{\loadn}(\cdot) ),
\end{align}
where $f_\loadn=f_{\loadn, \| \cdot \|_1, \| \cdot \|}:\simplexpos{K-1}(\|\cdot\|_1) \to \simplexpos{d-1}(\|\cdot\|)$ is defined by $f_\loadn(\bz) = \loadn \bz / \| \loadn \bz\|$
and where $\Psifactor^\loadn$ is the (tilted) probability measure on $\simplexpos{K-1}(\|\cdot\|_1)$ defined by
\begin{align}
\label{eq:cZA}
\Psifactor^\loadn(\diff \bz) 
&= c_{\loadn, \Psifactor}^{-1} \, \| \loadn \bz \| \,  \Psifactor(\diff z) 
\quad \text{ with } \quad
c_{\loadn, \Psifactor} = \int_{\simplexpos{K-1}(\|\cdot\|_1)}\| \loadn \bz \| \, \Psifactor(\diff z).
\end{align}
As a consequence, the support of $\Psi_\bX$ is a subset of the column span of $\loadn$, and hence has intrinsic dimension at most $K-1$; see Figure~\ref{fig:sphere-plane} for a graphical illustration.
\end{lemma}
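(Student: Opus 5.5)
Since the STDF and the spectral dependence measure with respect to $\|\cdot\|$ determine each other through the bijection $S_{\|\cdot\|}$ in \eqref{eq:stdf-spectral-dependence-measure-bijection}, it suffices to prove one direction. We show that the measure $\Psi^\ast := f_\loadn \# \Psifactor^\loadn$ is a spectral dependence measure with respect to $\|\cdot\|$, and that plugging it into \eqref{eq:relation-stdf-spectral-dependence-measure} returns $L_{K,\loadn,\Psifactor}$. The ``only if'' direction is then immediate: $L=L_{K,\loadn,\Psifactor}$ forces $\Psi_\bX = S_{\|\cdot\|}(L) = \Psi^\ast$. The ``if'' direction follows the same way, because $\Psi_\bX=\Psi^\ast$ and \eqref{eq:relation-stdf-spectral-dependence-measure} give back $L_{K,\loadn,\Psifactor}$.

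\textbf{Checking that $\Psi^\ast$ is a spectral dependence measure.} The cleanest route is to realize $\Psi^\ast$ as the spectral measure of an explicit vector. By Proposition~\ref{prop:genuine-linear-factor-dependence}, $\bY'=\loadn\bZ$ has STDF $L_{K,\loadn,\Psifactor}$. Its margins are $Y'_j = KR\sum_a \loadn_{ja}\Lambda_a$, and since $\Exp\Lambda_a = 1/K$ and the row sums of $\loadn$ equal one, each margin satisfies $\Prob(Y'_j>y)\sim 1/y$. So $\bY'$ is regularly varying of index $1$ with asymptotically standard Pareto margins.

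We compute its spectral measure via the polar decomposition. For $y\to\infty$ we have
\[
\Prob\Big(\|\bY'\|>y,\ \tfrac{\bY'}{\|\bY'\|}\in B\Big)
= \Prob\big(KR\,\|\loadn\bm\Lambda\|>y,\ f_\loadn(\bm\Lambda)\in B\big)
= \frac{K}{y}\,\Exp\big[\|\loadn\bm\Lambda\|\,\indicator{f_\loadn(\bm\Lambda)\in B}\big],
\]
which holds exactly for $y$ large, since $\|\loadn\bz\|$ is bounded on the simplex. Here we use that $\loadn\bz\neq\bm 0$ because of full column rank and $\bz\ge 0$, $\bz\neq\bm 0$. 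Normalizing gives precisely $\Psi^\ast = f_\loadn\#\Psifactor^\loadn$, with $\varsigma = K c_{\loadn,\Psifactor}$.

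The STDF is margin-free and defined through the transform $1/(1-F_j)$. We therefore need that replacing the asymptotically Pareto margins by exact ones, i.e.\ passing from $Y'_j$ to $1/(1-F_j(Y'_j))$, does not change the limit measure. This holds because both transforms are asymptotically equivalent to the identity. The spectral dependence measure of $\bY'$ is thus $\Psi^\ast$, and it lies in $\setspectraldepposd{\|\cdot\|}$.

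\textbf{Recovering the formula for the STDF.} As a check, and to obtain \eqref{eq:stdf-factor} directly, plug $\Psi^\ast$ into \eqref{eq:relation-stdf-spectral-dependence-measure}:
\[
K c_{\loadn,\Psifactor}\int \max_j \frac{x_j(\loadn\bz)_j}{\|\loadn\bz\|}\, c_{\loadn,\Psifactor}^{-1}\|\loadn\bz\|\,\Psifactor(\diff\bz),
\]
in which the factors $\|\loadn\bz\|$ and $c_{\loadn,\Psifactor}$ cancel to give $L_{K,\loadn,\Psifactor}$. The same cancellation identifies the constant as $\varsigma=Kc_{\loadn,\Psifactor}$. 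This is consistent with Lemma~\ref{lem:constant-in-front-of-L}, for instance $\varsigma=d$ under the $1$-norm because the row sums are one.

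\textbf{Support statement and expected obstacle.} Since $f_\loadn(\bz)$ is a positive multiple of $\loadn\bz$, the support of $\Psi^\ast$ is contained in $\{\loadn\bz\}\cap\simplexpos{d-1}$. This is the intersection of the $K$-dimensional column span of $\loadn$ with the unit sphere, a set of dimension at most $K-1$. The main obstacle is the bookkeeping in the realization step: justifying that the spectral measure computed for $\bY'$, whose margins are only asymptotically Pareto, coincides with the spectral dependence measure defined through exactly Pareto-transformed margins. We intend to handle this by monotone marginal equivalence, i.e.\ a uniform-convergence argument for the inverse transforms on sets bounded away from zero.
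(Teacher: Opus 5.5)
Your proposal is correct and follows essentially the paper's route: you realize $f_\loadn\#\Psifactor^\loadn$ as the spectral (dependence) measure of $\bY'=\loadn\bZ$, plug it into \eqref{eq:relation-stdf-spectral-dependence-measure}, and get the converse from the bijection $S_{\|\cdot\|}$. The differences are minor: you compute $\varsigma=Kc_{\loadn,\Psifactor}$ directly from the realization rather than via the normalization $L(\bm e_\ell)=1$, and your ``main obstacle'' is easily dispatched because $\Prob(Y_j'>y)=1/y$ holds exactly for $y\ge K$. Drop the citation of Proposition~\ref{prop:genuine-linear-factor-dependence} for the STDF of $\bY'$, since the paper proves that proposition using this very lemma; your direct computation makes the citation unnecessary anyway.
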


The identifiability results announced in the introduction heavily rely on the following margin-free version of the tail pairwise dependence matrix, whose entries essentially correspond to the extremal dependence measure from \cite{larsson2012extremal} applied to the unobservable random vector $\bY$ with coordinates $Y_j=1/(1-F_j(X_j))$.

\begin{definition}[Tail pairwise dependence matrix]
\label{def:tail-pairwise-dependence-matrix}
    Let $\bX$ be a random vector with spectral dependence measure $\Psi_\bX$ with respect to the norm $\|\cdot\|$. The \textit{tail pairwise dependence matrix (TPDM) of $\bX$ with respect to the norm $\|\cdot\|$} is defined as 
    \begin{align} \label{eq:definition-tpdm}
    \TPDM_\bX = \TPDM_\bX^{\|\cdot\|} = \int_{\simplexpos{d-1}(\|\cdot\|)} \bx \bx^\top \Psi_\bX(\diff \bx).
    \end{align}
\end{definition}

Recall $\bY$ with coordinates $Y_j = 1/(1-F_j(X_j))$.
In view of the fact that
$
\Prob(\bY/\| \bY \| \in \,\cdot\, \,|\, \| \bY \| > y)
\weak
\Phi_\bY = \Psi_\bX
$
as $y \to \infty$, 
we have
\begin{align}
\label{eq:tail-pairwise-dependence-matrix-as-limit-of-conditional-expectation}
\TPDM_\bX = \lim_{y \to \infty}\Exp\Big[ \frac{\bY \bY^\top}{\| \bY \|^2} \,\Big|\, \| \bY \| > y\Big].
\end{align}
The TPDM under the assumption of latent linear factor tail dependence can be calculated explicitly.

\begin{proposition}[TPDM for latent linear factor tail dependence]
\label{prop:tpdm-latent-factor-linear-model}
    Suppose $\bX$ has STDF $L=L_{K, \loadn, \Psifactor}$ from \eqref{eq:stdf-factor}. Then, with $c_{\loadn, \Psifactor}$ from \eqref{eq:cZA}, we have, for any norm $\|\cdot\|$,
    \begin{align} \label{eq:C_ZA}
    \TPDM_{\bX} = \loadn C_{\loadn, \Psifactor} \loadn^\top \quad \text{ with } \quad C_{\loadn, \Psifactor} = c_{\loadn, \Psifactor}^{-1} \int_{\simplexpos{K-1}(\|\cdot\|_1)} \frac{\bz \bz^\top}{\| \loadn \bz\|} \, \Psifactor(\diff \bz).
    \end{align}
\end{proposition}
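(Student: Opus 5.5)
The plan is to compute $\TPDM_\bX$ directly from Definition~\ref{def:tail-pairwise-dependence-matrix}, using the explicit form of $\Psi_\bX$ from Lemma~\ref{lem:stdf-spectral-dependence-measure-factor-model}. That lemma says that, under $L=L_{K,\loadn,\Psifactor}$ and for the given norm $\|\cdot\|$, we have $\Psi_\bX = f_\loadn \,\#\, \Psifactor^\loadn$. Here $f_\loadn(\bz)=\loadn\bz/\|\loadn\bz\|$ and $\Psifactor^\loadn(\diff\bz)=c_{\loadn,\Psifactor}^{-1}\|\loadn\bz\|\,\Psifactor(\diff\bz)$.

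By the change-of-variables formula for pushforward measures, applied entrywise to the bounded continuous integrand $\bx\mapsto \bx\bx^\top$ on $\simplexpos{d-1}(\|\cdot\|)$, we get
\begin{align*}
\TPDM_\bX = \int_{\simplexpos{K-1}(\|\cdot\|_1)} f_\loadn(\bz) f_\loadn(\bz)^\top \, \Psifactor^\loadn(\diff\bz)
= c_{\loadn,\Psifactor}^{-1}\int_{\simplexpos{K-1}(\|\cdot\|_1)} \frac{\loadn\bz\bz^\top\loadn^\top}{\|\loadn\bz\|^2}\,\|\loadn\bz\|\,\Psifactor(\diff\bz).
\end{align*}
One factor $\|\loadn\bz\|$ cancels. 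The constant matrices $\loadn$ and $\loadn^\top$ are then pulled out of the integral by linearity, entry by entry. This gives $\TPDM_\bX=\loadn C_{\loadn,\Psifactor}\loadn^\top$ with $C_{\loadn,\Psifactor}$ exactly as in \eqref{eq:C_ZA}.

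The only step needing care is well-definedness, namely that $\|\loadn\bz\|>0$ on $\simplexpos{K-1}(\|\cdot\|_1)$. This holds because $\loadn$ has full column rank, so $\loadn\bz\neq\bm 0$ for $\bz\neq\bm 0$. Alternatively, it follows because $\loadn$ has nonnegative entries and row sums one, so $\loadn\bz$ has nonnegative entries with total mass $d\,\|\bz\|_1/K$ on average, and hence is nonzero. By compactness and continuity, $\|\loadn\bz\|$ is therefore bounded away from zero. The integrand $\bz\bz^\top/\|\loadn\bz\|$ is thus bounded, every integral is finite, and $c_{\loadn,\Psifactor}\in(0,\infty)$. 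With this verified, the argument reduces to the pushforward identity. I do not expect a genuine obstacle, since the substantive work is already contained in Lemma~\ref{lem:stdf-spectral-dependence-measure-factor-model}.
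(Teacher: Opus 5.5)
Your proposal is correct and is essentially the paper's argument. The paper applies Proposition~\ref{prop:tpdm-latent-factor-linear-model-regular-variation} with $\alpha=1$ to $\bY$, whose spectral measure is $\Psi_\bX$ by Lemma~\ref{lem:stdf-spectral-dependence-measure-factor-model}, and that proposition's proof is exactly your pushforward computation with the cancellation of one factor $\|\loadn\bz\|$. One small point: your ``alternative'' positivity argument does not work on its own, since nonnegative entries with unit row sums do not exclude a zero column of $\loadn$ (so $\loadn\bz=\bm 0$ would be possible), but your full-column-rank argument already gives well-definedness.
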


The formula from Proposition~\ref{prop:tpdm-latent-factor-linear-model} is essential for the subsequent identifiability result, which, as noted in the introduction, necessitates further assumptions about the triplet $(K,\loadn, \Psifactor)$. First of all, we require that $\loadn$ satisfies the \textit{pure variable assumption} introduced in \cite{bingadaptive2020} and \cite{boulin2026structured}: we have $\loadn \in \setloadnpure$, where
\begin{align}
\label{eq:pure-loading-matrices-taildep}
    \setloadnpure = \big\{ \loadn \in \setloadn \mid \forall a \in [K]\, \exists j \in [d]: \loadn_{j\cdot}= \bm e_a^\top \big\}
\end{align}
and where $\bm e_a$ denotes the $a$th unit vector in $\R^K$; recall that $\setloadn$ is the set of matrices $\loadn \in [0,1]^{d\times K}$ with row sums equal to one and with full column rank. 
The assumption $\loadn \in \setloadnpure$ essentially means that each latent factor appears at least once among the observed coordinates of~$\bX$. Consequently, the tail dependence of $\bX$ is essentially explained by those coordinates of $\bX$ that correspond to the unit-vector rows of~$\loadn$.

As shown in \cite{boulin2026structured}, the pure variable assumption is sufficient for identifiability of $(K, \loadn)$ in case the latent factors are required to be asymptotically independent, that is, if $\Psifactor = K^{-1}\sum_{a \in [K]} \delta_{\bm e_a}$. If, on the contrary, $\Psifactor$ is allowed to be an arbitrary (1-norm) spectral dependence measure, identifiability does not hold: one can always choose $K=d, \loadn=I_d$ (the identity matrix) and $\Psifactor=\Psi_\bX$, the 1-norm spectral dependence measure of $\bm X$. A composite model that is identifiable is obtained under the following \textit{diagonal dominance assumption}: we have
$\Sigma_{\Psifactor} \in \diagdom{K}$, where
\[
\Sigma_{\Psifactor} =  \int_{\simplexpos{K-1}(\|\cdot\|_1)} \bz \bz^\top \, \Psifactor(\diff \bz)
\]
is the tail pairwise dependence matrix of $\Psifactor$,  where
\begin{align}
\label{eq:diagdom}
\diagdom{K} = \big\{ C \in \R^{K \times K}: \Delta(C)>0 \text{ and } C \text{ is positive definite}\big\},
\end{align}
and where, for  $C \in \R^{K \times K}$,
\begin{align*}
    \Delta(C) := \min_{(a,b) \in [K], a \ne b} C(a,a) \wedge C(b,b) - |C(a,b)|.
\end{align*}
The assumption that $\Delta(\Sigma_{\Psifactor})>0$ implies that $\Psifactor(\{\bz \in \simplexpos{K-1}: z_a = z_b\} )<1$ for any $a\ne b$\footnote{Indeed, suppose that there exist $a\ne b$ such that $\Psifactor(\{\bz \in \simplexpos{K-1}: z_a = z_b\} ) = 1$. Suppose that $\Sigma_{\Psifactor}(a,a) = \Sigma_{\Psifactor}(a,a) \wedge \Sigma_\Psifactor(b,b)$; the other case can be treated similarly. Then
$
    \Sigma_{\Psifactor}(a,a) \wedge \Sigma_{\Psifactor}(b,b) - \Sigma_{\Psifactor}(a,b) = \Sigma_{\Psifactor}(a,a) - \Sigma_{\Psifactor}(a,b) = \int_{\simplexpos{K-1}(\| \cdot \|_1)}z_a(z_a-z_b) \Psifactor(\diff \bz) =0,
$
which implies $\Delta(\Sigma_{\Psifactor})=0$.}, which means that there does not exist a pair of latent variables that is perfectly tail dependent. The necessity of the latter constraint is intuitively obvious: if we had $\Psifactor(\{\bz \in \simplexpos{K-1}: z_a = z_b\} )=1$ for some $a\ne b$, we would obtain the same STDF if we replace $K$ by $K-1$, $\loadn$ by the matrix in which we delete the $b$th column and replace the $a$th column by the sum of the $a$th and $b$th column, and replace $\psi$ by the distribution of $(\Lambda_{c})_{c \ne b}$ for $\Lambda \sim \psi$. Finally, note that $\Sigma_\psi \in \diagdom{K}$ for $\psi=K^{-1} \sum_{a \in [K]} \delta_{\bm e_a}$, the spectral dependence measure corresponding to tail independence.

Overall, the previous considerations lead to the following parameter space
\begin{multline}
    \label{eq:parameter-space-tail-dependence}
    \paraspacetaildep
    = 
    \big\{ (K, \loadn,  \Psifactor) \mid K \in [d],
     \loadn \in \setloadnpure \text{ and } \psi \in \setspectraldepposK{\|\cdot\|_1} 
     \\ 
     \textrm{ satisfies } \Sigma_\psi \in \diagdom{K} \big\}.
\end{multline}
Clearly, identifiability can only hold up to label permutations of the factors. Mathematically, 
two pairs $(\loadn, \Psifactor)$ and $(\loadn', \Psifactor')$ from $\setloadnpure \times \setspectraldepposK{\|\cdot\|_1}$ are called equivalent, notation  $(\loadn, \Psifactor) \sim (\loadn', \Psifactor')$ if there exists a permutation matrix $P \in \{0,1\}^{K \times K}$ with $P^{-1}=P^\top$ such that
\[
\loadn = \loadn'P \quad \text{ and } \quad  \Psifactor = L_{P^\top}\, \#\, \Psifactor' = \Psifactor'(L_{P^\top}^{-1}(\cdot)),
\]
where $L_{P^\top}:\simplexpos{K-1} \to \simplexpos{K-1}$ is defined by $ \bz \mapsto L_{P^\top}(\bz) = P^\top \bz$. In terms of random variables, the second equation in the last display means that $\bm \Lambda =_d P^\top \bm \Lambda'$ for $\bm \Lambda \sim \Psifactor$ and $\bm \Lambda' \sim \Psifactor'$, and together with the first equation, we get that $\loadn \bm \Lambda =_d \loadn'P P^\top \bm \Lambda' = \loadn'\bm \Lambda'$. We can now state the formal result.

\begin{theorem}
\label{thm:identifieability-tail-dependence}
Recall $L=L_{K,\loadn, \Psifactor}$ from \eqref{eq:stdf-factor-introduction},
and let $\setstdf{d}$ denote the set of $d$-variate STDFs.
The mapping 
\[
T: \paraspacetaildep \mapsto \setstdf{d}, \quad (K, \loadn, \Psifactor) \mapsto L_{K,\loadn, \Psifactor}
\] 
is injective up to label permutations, that is, 
$L_{K,\loadn, \Psifactor} = L_{K',\loadn', \Psifactor'}$
implies $K=K'$ and $(\loadn,\Psifactor) \sim (\loadn',\Psifactor')$.
\end{theorem}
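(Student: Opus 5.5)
The plan is to pass from $L$ to an object that is linear in $\loadn$ and then run a pure-variable argument in the spirit of \cite{bingadaptive2020} and \cite{boulin2026structured}. Suppose $L=L_{K,\loadn,\Psifactor}=L_{K',\loadn',\Psifactor'}$. By the bijection \eqref{eq:stdf-factor-spectral-dependence-measure-bijection}, $L$ determines $\Psi_\bX$ for any fixed norm, and hence the TPDM $\TPDM_\bX$. By Proposition~\ref{prop:tpdm-latent-factor-linear-model},
\[
\loadn C_{\loadn,\Psifactor}\loadn^\top=\TPDM_\bX=\loadn' C_{\loadn',\Psifactor'}\loadn'^\top .
\]
Since $\loadn$ has full column rank and $C_{\loadn,\Psifactor}$ will be shown positive definite, $K=\operatorname{rank}(\TPDM_\bX)=K'$.

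The key step is to recover the set of pure rows and their grouping from $\TPDM_\bX$ alone. Write $C=C_{\loadn,\Psifactor}$, $I_a=\{j:\loadn_{j\cdot}=\be_a^\top\}$ and $I=\bigcup_a I_a$. For $j\in I_a$ and $\ell\in I_b$ we have $\TPDM_\bX(j,\ell)=C(a,b)$. For a general row $j$, $\TPDM_\bX(j,\ell)=\sum_c\loadn_{jc}C(c,b)$ is a convex combination of the entries of the $b$th column of $C$. If $C\in\diagdom{K}$, this gives the following criterion. Row $j$ is pure if and only if $\max_{\ell}\TPDM_\bX(j,\ell)=\TPDM_\bX(j,j)$. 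Moreover, the set of maximizers $\{\ell:\TPDM_\bX(j,\ell)=\max_{\ell'}\TPDM_\bX(j,\ell')\}$ is exactly the cluster $I_a$ containing $j$. To see this, first note that for $j\in I_a$ the maximum over $\ell$ of $C(a,b(\ell))$ is attained only at $b=a$, because $C(a,a)>|C(a,b)|$. Conversely, for non-pure $j$, $\TPDM_\bX(j,j)=\loadn_{j\cdot}C\loadn_{j\cdot}^\top$ is strictly smaller than $\max_a\sum_c\loadn_{jc}C(c,a)$, again by the strict dominance $\Delta(C)>0$. The details follow Lemma/Theorem-type arguments from \cite{bingadaptive2020}. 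This recovers $\{I_1,\dots,I_K\}$ up to labelling, and the same partition arises from the primed parameters, so after permuting labels by some $P$ the pure clusters coincide.

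Given the pure set $I$, the remaining parameters are read off as follows. First, $C=\TPDM_\bX(I_a,I_b)$ entrywise, evaluated at any representatives of the clusters. Second, for any $j$, $\TPDM_\bX(j,(i_b)_b)=\loadn_{j\cdot}C$, so $\loadn_{j\cdot}=\TPDM_\bX(j,(i_b)_b)C^{-1}$. Hence $\loadn=\loadn'P$. Third, restricting $L$ to coordinates $i_1,\dots,i_K$ gives $\bx\mapsto K\int\bigvee_a x_a z_a\,\Psifactor(\diff\bz)$. This is the STDF of $\bZ$, and by Proposition~\ref{prop:genuine-linear-factor-dependence} its $1$-norm spectral dependence measure is $\Psifactor$. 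By the bijection \eqref{eq:stdf-factor-spectral-dependence-measure-bijection}, $\Psifactor$ is therefore determined, and it equals $L_{P^\top}\#\Psifactor'$. This yields $(\loadn,\Psifactor)\sim(\loadn',\Psifactor')$.

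The main obstacle is the step $C\in\diagdom{K}$. The assumption is imposed on $\Sigma_\Psifactor=\int\bz\bz^\top\Psifactor(\diff\bz)$, whereas $C$ carries the extra weight $\|\loadn\bz\|^{-1}$. For the max-norm, pure variables give $\|\loadn\bz\|_\infty=\max_a z_a$, so $C\propto\int\bz\bz^\top/\max_a z_a\,\Psifactor(\diff\bz)$. This is not obviously diagonally dominant under $\Delta(\Sigma_\Psifactor)>0$. I would therefore first try to replace $\TPDM_\bX$ by a modified, still $L$-determined matrix that removes the weight. A candidate is $\int\bx\bx^\top\,w(\bx)\Psi_\bX(\diff\bx)$ with a degree-one weight $w$ chosen so that $w(\loadn\bz)$ is proportional to $\|\loadn\bz\|$ on the support. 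For instance, take $w(\bx)=\|\bx\|$ itself for the norm used to define $\Psi_\bX$. This gives $\propto\loadn\Sigma_\Psifactor\loadn^\top$, because the tilt $\|\loadn\bz\|$ cancels against the normalisation in $f_\loadn$, yielding $\int \loadn\bz\bz^\top\loadn^\top/\|\loadn\bz\|\cdot\|\loadn\bz\|\,\Psifactor(\diff\bz)/c$. With this matrix, the pure-variable argument above runs with $\Sigma_\Psifactor\in\diagdom{K}$ directly. Positive definiteness, which gives the rank statement and invertibility, then also transfers.
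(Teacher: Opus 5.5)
Your reconstruction of $C$, of the impure rows and of $\Psifactor$ once the pure clusters are known is fine. The step that finds the clusters, however, has two genuine gaps.

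\textbf{First gap: the purity test is false.} The criterion ``row $j$ is pure iff $\max_\ell \TPDM_\bX(j,\ell)=\TPDM_\bX(j,j)$'' fails even when $C\in\diagdom{K}$. For impure $j$ with $\bm\beta=\loadn_{j\cdot}^\top$, the diagonal entry $\bm\beta^\top C\bm\beta=\sum_a\beta_a(C\bm\beta)_a$ is only a convex combination of the $(C\bm\beta)_a$ with $\beta_a>0$. It therefore equals $\max_\ell\TPDM_\bX(j,\ell)=\max_a(C\bm\beta)_a$ as soon as those values all equal the maximum, and $\Delta(C)>0$ does not exclude this. For instance, take $d=3$, $K=2$, rows $\be_1^\top,\be_2^\top,(\tfrac12,\tfrac12)$ and $\Psifactor=\tfrac12(\delta_{\be_1}+\delta_{\be_2})$, which is a point of $\paraspacetaildep$. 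The maximum norm gives $C=\tfrac12 I_2\in\diagdom{2}$ and
\[
\TPDM_\bX=\frac12\begin{pmatrix}1&0&1/2\\0&1&1/2\\1/2&1/2&1/2\end{pmatrix}.
\]
Row $3$ is impure, yet it attains its maximum $1/4$ on the diagonal, and its maximiser set $\{1,2,3\}$ would merge the two clusters. What works is the criterion of Lemma~\ref{lem:identifiability_pure}(3): $j$ is pure iff $m_j(\TPDM_\bX)=m_\ell(\TPDM_\bX)$ for all $\ell\in S_j(\TPDM_\bX)$. It relies on part (2), namely that $S_j$ always meets $I(\loadn)$. Here it correctly rejects $j=3$, since $m_3=1/4\neq 1/2=m_1$.

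\textbf{Second gap: your device for diagonal dominance is vacuous.} $\Psi_\bX$ is carried by the unit sphere of $\|\cdot\|$, so $w(\bx)=\|\bx\|\equiv 1$ on its support, and your ``modified'' matrix is just $\TPDM_\bX=\loadn C_{\loadn,\Psifactor}\loadn^\top$ again. In your computation $w(f_\loadn(\bz))=1$, not $\|\loadn\bz\|$. The tilt $\|\loadn\bz\|$ times the normalisation $\|\loadn\bz\|^{-2}$ leaves exactly the factor $\|\loadn\bz\|^{-1}$ you wanted to cancel.

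The paper instead changes the norm. Using the pure rows, it picks $\bm v\in(0,\infty)^d$ with $\loadn^\top\bm v=\bm 1$: take $v_j=\gamma$ for $j\notin I(\loadn)$ and $v_j=(1-\gamma\sum_{i\notin I(\loadn)}\loadn_{ia})/|I_a|$ for $j\in I_a$, with $\gamma$ small. The weighted $\ell_1$-norm then satisfies $\|\loadn\bz\|_v=\bm v^\top\loadn\bz=1$ on $\simplexpos{K-1}$, hence $C_{\loadn,\Psifactor}=c_{\loadn,\Psifactor}^{-1}\Sigma_\Psifactor\in\diagdom{K}$. Lemma~\ref{lem:identifiability_pure} and your remaining steps then run as in the proof of Theorem~\ref{thm:identifieability-regular-variation}.

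\textbf{A simpler repair.} Your rank observation can be pushed to a complete proof with no diagonal dominance at all.
\begin{itemize}
\item Since $C_{\loadn,\Psifactor}$ is positive definite for every norm, $\operatorname{col}(\loadn)=\operatorname{col}(\TPDM_\bX)=\operatorname{col}(\loadn')$, so $\loadn'=\loadn G$ with $G$ invertible.
\item Evaluating at the pure rows of $\loadn$ shows that the rows of $G$ are rows of $\loadn'$, so $G\ge 0$ and $G\bm 1=\bm 1$. Symmetrically, $G^{-1}\ge 0$.
\item A nonnegative matrix with nonnegative inverse is monomial, so $G$ is a permutation matrix.
\item Then $L_{K,\loadn,\Psifactor}=L_{K,\loadn,L_G\#\Psifactor'}$, and your last step gives $\Psifactor=L_G\#\Psifactor'$.
\end{itemize}
This uses only positive definiteness of $\Sigma_\Psifactor$. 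It also sidesteps the fact that $\|\cdot\|_v$ is built from one particular $\loadn$: applying Lemma~\ref{lem:identifiability_pure} to a competing triple would additionally require $C_{\loadn',\Psifactor'}$ to be diagonally dominant for that same norm.
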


The proof of Theorem \ref{thm:identifieability-tail-dependence} is constructive, and we explain the most important steps in the following remark. 
For a symmetric matrix $\Sigma=(\Sigma(j,\ell))_{j,\ell=1}^d \in [0,\infty)^{d \times d}$,  the row-wise maxima with corresponding maximizing set of indices will be denoted by
\begin{align} \label{eq:m_j-and_S_j}
m_j(\Sigma) = \max_{\ell \in [d]} \Sigma(j, \ell), \qquad S_j(\Sigma) = \argmax_{\ell\in[d]} \Sigma(j,\ell) = \big\{ \ell \in [d]: \Sigma(j,\ell)=m_j\big\}, \qquad j \in [d].
\end{align}
For $\loadn \in \setloadnpure$, let $I(\loadn)$ denote the set of pure variable indices of $\loadn$, and let $I_a(\loadn) \subset I(\loadn)$ denote the subset of such indices which correspond to the $a$th factor; formally,
\begin{align} \label{eq:pure-variables}
I(\loadn) =\bigcup_{a \in [K]} I_a(\loadn), \qquad I_a(\loadn) = \{ j \in [d]: \loadn_{j \cdot} = \bm e_a^\top\}.
\end{align}

\begin{remark}[Reconstructing $(K,\loadn,\Psifactor)$ from $L$]
\label{rem:reconstructing-parameters-tail-dependence}
The following steps allow for reconstructing $K,\loadn$ and $\Psifactor$ from $L$:
\begin{compactenum}[(1)]
    \item Suppose $\bX$ has STDF $L=L_{K,\loadn, \Psifactor}$. It can be shown that the assumption $\Sigma_\psi \in \diagdom{K}$ guarantees that there exists norm on $\R^d$ for which the matrix $C=C_{\loadn, \Psifactor}$ from \eqref{eq:C_ZA} satisfies $C \in \diagdom{K}$. If $\int \bm z \bm z^\top / \|\bm z \|_\infty \psi(\diff \bm z) \in \diagdom{K}$, then one can always choose the maximum norm on $\R^d$; see Lemma~\ref{lem:sufficient-condition-for-maximum-norm}.
    \item In view of Lemma~\ref{lem:stdf-spectral-dependence-measure-factor-model}, $L_{K,\loadn, \Psifactor}$ identifies the spectral dependence measure $\Psi_\bX$ with respect to the norm chosen in (1).
    \item Clearly, $\Psi_\bX$ identifies the TPDM $\TPDM=\TPDM_\bX$ by definition in \eqref{eq:definition-tpdm}.
    \item The number of factors $K$ and the pure variable indices $(I_a(\loadn))_{a\in[K]}$ can be reconstructed from $\TPDM$ (up to label permutations). Indeed, it can be shown that $I(\loadn)=\{j \in [d] \mid \forall \ell \in S_j(\TPDM):  m_j(\TPDM) = m_\ell(\TPDM)\}$, that $K$ is equal to the number of equivalence classes under the equivalence relation on $I(\loadn)$ given by 
    \[
    j \sim_\TPDM \ell \quad \Longleftrightarrow \quad  \ell \in S_j(\TPDM),
    \]
    and that each equivalence class corresponds to exactly one set $I_a:=I_a(\loadn)$ with $a \in [K]$; see Lemma~\ref{lem:identifiability_pure} for details. 
    \item The matrix $C=C_{\loadn, \Psifactor}$ can be reconstructed from $(I_a)_{a \in [K]}$ and $\TPDM$: for $a,b\in[K]$, we have $C(a,b) = \TPDM(j_a, j_b)$, where $j_a\in[d]$ denotes an arbitrary fixed element of $I_a$.
    \item The matrix $\loadn$ can be reconstructed from $(I_a)_{a \in [K]}$, $\TPDM$ and $C$: first, for each $j \in I(\loadn)$, there exists exactly one $a \in [K]$ such that $\loadn_{j\cdot} = \bm e_a^\top$; we have hence identified the rows of $\loadn$ corresponding to $j \in I(\loadn)$. Next, for each $j \in [d] \setminus I(\loadn)$, the $j$th row $\beta_j := \loadn_{j\cdot}^\top \in \R^K$ satisfies
    $
    C \beta_j =  
    (  \TPDM(j_1,j), \dots, \TPDM(j_K,j))^\top,
    $
    which can easily be solved for $\beta_j$ by the invertibility assumption on $C$.
    \item Finally, $\Psifactor$ can be identified from $(I_a)_{a \in [K]}$ and $L$. Indeed, let $\tilde \bx=(\tilde x_a)_{a \in [K]} \in [0,\infty)^K$ be arbitrary. Define $\bx \in [0,\infty)$ by $\tilde x_j=0$ if $j \notin\{j_1, \dots, j_K\}$ and by $x_{j_a}=\tilde x_a$ otherwise. Then 
    \[
    L(\bm x) = K  \int_{\simplexpos{K-1}} \bigvee_{j \in [d]} \Big( x_j\sum_{a\in [K]} \loadn_{ja} z_a\Big)  \, \Psifactor(\diff\bz)
    =
    K \int_{\simplexpos{K-1}}\bigvee_{a \in [K]} ( \tilde x_{a} z_a) \, \Psifactor(\diff z).
    \]
    The right hand side is the STDF $L_\bZ(\tilde \bx)$ associated with $\Psifactor$, see \eqref{eq:relation-stdf-spectral-dependence-measure},  which uniquely determines~$\Psifactor$.
\end{compactenum}
The main objects used in these seven steps have natural empirical counterparts, which give rise to estimators for $(K,\loadn,\Psifactor)$ that we introduce in Section~\ref{sec:estimating-latent-linear-factor-models}.
\end{remark}

We end this section with a collection of general tail properties of a random vector $\bm X$ with latent linear factor tail dependence. The results are crucial for estimation of parameters such as \eqref{eq:tail-parameter-introduction} introduced in the introduction.

\begin{proposition}
    \label{prop:survival_stdf_lfm_tail_dep}
    Suppose $\bX$ has STDF $L=L_{K,\loadn, \Psifactor}$ from \eqref{eq:stdf-factor}. Let $\mathcal{J}$ be a non-empty set of non-empty subsets of $[d]$. Then, for any $\bx \in [0, \infty)^d$, 
    \begin{align*}
        &R^\cup_{\mathcal{J}}(\bx) 
        \equiv 
        \lim_{x \rightarrow \infty} x \mathbb{P}\Big( \exists J \in \mathcal{J}\,  \forall j \in J : F_j(X_j) > 1-\frac{x_j}{x} \Big) 
        = 
        K\int_{\simplexpos{K-1}} \bigvee_{J \in \mathcal{J}} \bigwedge_{j \in J} \Big( \sum_{a\in[K]} \loadn_{ja} z_a\Big) x_j \Psifactor(\diff \bz),
        \\
        &L^\cap_{\mathcal{J}}(\bx) 
        \equiv 
        \lim_{x \rightarrow \infty} x \mathbb{P}\Big( \forall J \in \mathcal{J}\, \exists j \in J : F_j(X_j) > 1-\frac{x_j}{x} \Big) 
        = 
        K\int_{\simplexpos{K-1}} \bigwedge_{J \in \mathcal{J}} \bigvee_{j \in J} \Big(\sum_{a\in[K]} \loadn_{ja} z_a \Big) x_j \Psifactor(\diff \bz).
    \end{align*}
    In particular, for all $\emptyset \ne J \subset [d]$, the tail copula and the STDF of the subvector $\bX_J = (X_j)_{j \in J}$ exist and satisfy
    \begin{align*}
        &R_J(\bx_J) 
        \equiv 
        \lim_{x \rightarrow \infty } x \mathbb{P}\Big( \forall j \in J : F_j(X_j) > 1-\frac{x_j}{x} \Big) 
        = 
        K\int_{\simplexpos{K-1}} \bigwedge_{j \in J} \Big( \sum_{a\in[K]} \loadn_{ja}z_a \Big) x_j \Psifactor(\diff \bz) \\
        &L_J(\bx_J) 
        \equiv 
        \lim_{x \rightarrow \infty } x \mathbb{P}\Big( \exists j \in J : F_j(X_j) > 1-\frac{x_j}{x} \Big) 
        =
        K\int_{\simplexpos{K-1}} \bigvee_{j \in J} \Big( \sum_{a\in[K]} \loadn_{ja}z_a \Big) x_j \Psifactor(\diff \bz).
    \end{align*}
    Moreover, for each $j,\ell \in [d]$ with $j\ne \ell$, the tail correlation between $X_j$ and $X_\ell$ satisfies:
    \begin{align} \label{eq:model-tail-correlation}
        \chi(j,\ell) = R_{\{j,\ell\}}(1,1) = K\int_{\simplexpos{K-1}} \Big(\sum_{a\in[K]} \loadn_{ja} z_a\Big) \wedge \Big( \sum_{a\in[K]} \loadn_{\ell a}z_a \Big) \Psifactor(\diff \bz). 
    \end{align}
\end{proposition}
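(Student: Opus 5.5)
The plan is to work with the explicit vector $\bY' = \loadn \bZ$ from Proposition~\ref{prop:genuine-linear-factor-dependence}. Its STDF is $L_{K,\loadn,\Psifactor}$, and all the quantities $R^\cup_{\mathcal J}$, $L^\cap_{\mathcal J}$, $R_J$, $L_J$ are functionals of the STDF only, i.e.\ of the exponent measure of the Pareto-standardized vector $\bY$. So it suffices to compute them for $\bY'$.

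\textbf{Step 1: express the limits through the exponent measure.} Let $\nu$ denote the exponent measure of $\bY$ with $g(x)=x$. Each event of the form
\[
\{\exists J \in\mathcal J\ \forall j\in J: Y_j > x/x_j\}
\]
is a finite union of intersections of half-spaces $\{y_j > 1/x_j\}$ (scaled by $x$). Such a set is bounded away from zero whenever some $x_j>0$; if all relevant $x_j=0$, both sides vanish trivially. Its boundary is $\nu$-null because the marginals of $\nu$ are continuous (standard Pareto limits). Hence \eqref{eq:rv} gives
\[
R^\cup_{\mathcal J}(\bx)=\nu\big(\{\by: \exists J\ \forall j\in J: y_jx_j>1\}\big),
\]
and similarly for $L^\cap_{\mathcal J}$. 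The events with $F_j(X_j)>1-x_j/x$ coincide with $Y_j>x/x_j$ by continuity of the $F_j$.

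\textbf{Step 2: compute $\nu$ for the factor model.} The STDF is margin-free, so it is legitimate to replace $\bY$ by $\bY'$. The marginals of $\bY'$ are exactly standard Pareto, since
\[
\Prob(Y'_j>t)=\Prob\Big(KR\sum_a \loadn_{ja}\Lambda_a>t\Big)=t^{-1}K\,\Exp\Big[\sum_a\loadn_{ja}\Lambda_a\Big]=t^{-1}
\]
for $t\ge1$, using $\Exp\Lambda_a=1/K$ and unit row sums. Conditioning on $\bm\Lambda=\bz$ and writing $\bY'=R\,\bm w(\bz)$ with $w_j(\bz)=K\sum_a\loadn_{ja}z_a$, we get
\[
x\Prob(R\bm w\in xA)\to \int_0^\infty \indicator{r\bm w\in A}\,r^{-2}\diff r
\]
by dominated convergence, since $R$ is exactly Pareto. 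For $A=\{\exists J\,\forall j\in J: y_jx_j>1\}$, the set of admissible $r$ is
\[
r>\min_{J}\max_{j\in J}\frac{1}{w_jx_j},
\]
so the inner integral equals $\bigvee_J\bigwedge_{j\in J}w_jx_j$. Integrating over $\Psifactor$ and pulling out the factor $K$ gives the first formula. The formula for $L^\cap_{\mathcal J}$ follows the same way with $\min$ and $\max$ interchanged, the admissible region now being $r>\max_J\min_{j\in J}1/(w_jx_j)$.

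\textbf{Step 3: special cases and the main obstacle.} The formulas for $R_J$ and $L_J$ are the special cases $\mathcal J=\{J\}$ and $\mathcal J=\{\{j\}:j\in J\}$, respectively. The expression for $\chi$ follows by taking $J=\{j,\ell\}$ and $\bx=(1,1)$.

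The only delicate point is the passage to the limit under the integral over $\Psifactor$. Here exactness of the Pareto law helps: for $x\ge \max_j w_jx_j$ the conditional probability equals exactly $x^{-1}\bigvee_J\bigwedge_{j} w_jx_j$. For smaller $x$ the term $x\Prob(\cdot\mid\bz)$ is bounded by $\sum_j w_jx_j\le K\sum_jx_j$, which is a uniform bound. Dominated convergence therefore applies, and this route even bypasses the need for vague convergence and boundary arguments.
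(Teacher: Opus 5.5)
Your proof is correct, but it takes a different route from the paper.

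\textbf{The paper's route.} It works on the level of the spectral dependence measure. It first writes
\[
R^\cup_{\mathcal J}(\bx)=\varsigma_{\bY}\int \bigvee_{J\in\mathcal J}\bigwedge_{j\in J}(x_j\lambda_j)\,\Psi_\bX(\diff\bm\lambda)
\]
via the polar decomposition of $\nu_\bY$. It then substitutes $\Psi_\bX=f_\loadn\#\Psifactor^\loadn$ from Lemma~\ref{lem:stdf-spectral-dependence-measure-factor-model}, which leaves an unknown prefactor $\varsigma_\bY/c_{\loadn,\Psifactor}$. This prefactor is pinned down to $K$ by evaluating at a single margin, $R^\cup_{\{1\}}(\bm e_1)=1$, and using $\int z_a\,\Psifactor(\diff\bz)=K^{-1}$.

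\textbf{Your route.} You use a coupling argument. The limits depend on $\bX$ only through the exponent measure of the Pareto-standardized vector. You can therefore compute them on the generative vector $\bY'=\loadn\bZ$, whose margins are exactly standard Pareto. Conditioning on $\bm\Lambda$ and using the exact Pareto tail of $R$ then gives the formula, including the constant $K$, with no separate normalization step. The moment condition $\int z_a\,\Psifactor(\diff\bz)=K^{-1}$ enters only through the exact Pareto margins of $\bY'$. In fact the pre-limit quantity equals the limit exactly once $x\ge K\max_j x_j$.

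\textbf{Comparison.} Your argument is more elementary and self-contained. Taking $\mathcal J=\{\{j\}:j\in[d]\}$ in your Step~2 already reproves Proposition~\ref{prop:genuine-linear-factor-dependence}, so you need not cite it. Your Step~1 also spells out the portmanteau/boundary-null argument that the paper compresses into ``$R^\cup_{\mathcal J}$ exists''. The paper's route, in exchange, first gives a representation of $R^\cup_{\mathcal J}$ and $L^\cap_{\mathcal J}$ valid for an arbitrary spectral dependence measure, and reuses the machinery of Lemma~\ref{lem:stdf-spectral-dependence-measure-factor-model}.

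\textbf{Minor corrections.} Your closing claim that the argument ``bypasses the need for vague convergence and boundary arguments'' holds only for $\bY'$ itself. The transfer from $\bY'$ to $\bX$ still rests on your Step~1. Alternatively, you could make that transfer purely by inclusion--exclusion: each probability becomes a finite signed sum of probabilities of unions $\{\exists j\in S: Y_j>x/x_j\}$, whose rescaled limits are values of $L$. Also, the events $\{F_j(X_j)>1-x_j/x\}$ and $\{Y_j>x/x_j\}$ coincide by definition of $Y_j$. Continuity of $F_j$ is what makes $Y_j$ standard Pareto, not what makes the events equal.
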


\section{Estimation of latent linear factor models for tail dependence}
\label{sec:estimating-latent-linear-factor-models}

In this section, we assume that an i.i.d.\ sample $\bX_1, \dots, \bX_n$ from a random vector $\bX$ is observed, where $\bX$ has continuous marginal cdfs and STDF $L = L_{K,\loadn,\Psifactor}$ for some unknown parameter triple $(K, \loadn, \Psifactor) \in \paraspacetaildep$, with $\paraspacetaildep$ defined in \eqref{eq:parameter-space-tail-dependence}.
The primary objective, addressed in Sections~\ref{subsec:estimating-K-and-pure-variables} and \ref{subsec:estimating-loading-matrix}, is to estimate $K$ and $ \loadn$. Once these parameters are available, the lower-dimensional spectral dependence measure $\Psifactor$ can be estimated by a method of choice; a nonparametric approach is described in Section~\ref{subsec:estimation-latent-spectral-measure}. 
Section~\ref{subsec:estimation-derived-tail-parameters} discusses the estimation of derived tail characteristics, such as the quantity introduced in \eqref{eq:tail-parameter-introduction}. Finally, Section~\ref{subsec:hyperparameter-and-model-validation} addresses the selection of tuning parameters and issues of model validation.

\subsection{Estimation of the number of factors and the pure variables}
\label{subsec:estimating-K-and-pure-variables}

As in \cite{bingadaptive2020} and \cite{boulin2026structured}, we start by estimating three key components: the number of factors $K$, the pure variable index set $I=I(\loadn)$, and its partition $\bm \Ic = (I_a(\loadn))_{a \in [K]}$. To achieve this, we use an empirical version of the construction detailed in Remark~\ref{rem:reconstructing-parameters-tail-dependence}, particularly in Step (3). The starting point is the choice of a norm $\| \cdot\|$ on $\R^d$, for which we then calculate the \textit{empirical tail pairwise dependence matrix}, an empirical version of the conditional expectation on the right-hand side of \eqref{eq:tail-pairwise-dependence-matrix-as-limit-of-conditional-expectation}. Formally,  for a threshold parameter $k \in [n]$ to be chosen by the statistician, the empirical TPDM is defined as
\begin{equation}
    \label{eq:empirical_tpdm}
    \hat \TPDM = \hat{\TPDM}_{n,k} 
    = 
    \frac{1}{k} \sum_{i=1}^n \frac{\hat\bY_i \hat \bY_i^\top}{\hat R_i^2} \bm 1_{ \{ \hat R_i > \hat R_{n-k:n} \} },
\end{equation}
where $\hat \bY_i=(\hat Y_{i1}, \dots, \hat Y_{id})$ has coordinates $\hat Y_{ij} = 1 / \{1 - \hat F_{nj}(X_{ij})\}$ with $\hat F_{nj}(x_j) = (n+1)^{-1} \sum_{i \in [n]} \bm 1(X_{ij} \le x_j)$ the (rescaled) empirical cdf of $X_{1j}, \dots, X_{nj}$, 
where $\hat R_i = \|\hat \bY_i\|$ and where $\hat R_{1:n}  \le \dots \le \hat R_{n:n}$ are the order statistics of $\hat R_1,\dots,\hat R_n$. Now, inspired by Step (3) in Remark~\ref{rem:reconstructing-parameters-tail-dependence}, we propose to proceed as follows: iterate through the index set $[d]$, and use the sample version $\hat \TPDM$ of $\TPDM$ to decide whether an index $j$ is a pure variable index. For that purpose, we use a small threshold parameter $\kappa>0$, and declare $j$ as a pure variable index if, for each $\ell \in \hat S_j = \{ \ell \in [d]: \hat{\TPDM}(j,\ell) \ge  \max_{m \in [d]} \hat \TPDM(j,m) - 2\kappa\}$, we have  $\hat{\TPDM}(j,\ell) \ge \max_{m \in [d]} \hat \TPDM(\ell,m) - 2\kappa$; this amounts to checking whether $m_j(\TPDM) = m_\ell(\TPDM)$ for all $\ell \in S_j(\TPDM)$. The parameter $\kappa$ should be chosen small, but, heuristically, not smaller than the estimation uncertainty of $\hat \TPDM$.
Next, if $j$ is declared a pure variable index, we add $\hat{S}_j$ to the estimated partition $\hat{\bm \Ic}$, potentially merging it with previously identified elements of $\hat{\bm \Ic}$. The method is summarized in Algorithm~\ref{alg:purevar}.

\begin{algorithm}[t]
\caption{PureVar Algorithm}
\label{alg:purevar}
\begin{algorithmic}[1]
\Require{Estimated TPDM $\hat{\TPDM} \in [0,\infty)^{d \times d}$, threshold parameter $\kappa>0$.}
\Ensure{Estimated number of pure variables $\hat K$, pure variable index set $\hat I$, and partition of the pure variable index set $\hat{\bm \Ic} = (\hat I_1, \dots, \hat I_{\hat K})$.}
\Function{PureVar}{$\hat{\TPDM}, \kappa$}
    \State $\hat{\bm \Ic} \gets \emptyset$
    \For{$j = 1,\dots,d$}
        \State $\hat m_j \gets \max \{ \hat \TPDM(j,\ell) \mid  \ell \in [d] \}$ \Comment{Estimator of $m_j$ from \eqref{eq:m_j-and_S_j}.}
        \State $\hat{S}_j \gets \{ \ell \in [d] : \hat m_j - \hat{\TPDM}(j,\ell) \le 2\kappa \, \}$ \Comment{Estimator of $S_j$ from \eqref{eq:m_j-and_S_j}.}
        \If{$\max \{   \hat m_\ell - \hat \TPDM(j,\ell)  : \ell \in \hat S_j \}\le 2 \kappa$} 
            \Comment{Check if $j$ is a pure variable index.}
            \If{$\hat S_j \cap (\bigcup_{I \in \hat{\bm \Ic}}I ) = \emptyset$} \Comment{Add $\hat S_j$ as a new set of pure variables.} 
                \State $\hat{\bm \Ic} \gets \hat{\bm \Ic} \cup \{ \hat S_j\}$
            \Else \Comment{Merge $\hat S_j$ with an existing  set of pure variables.} 
                \For{$I \in \hat{\bm \Ic}$}
                \If{$\hat S_j \cap I \ne \emptyset$}
                    \State $I \gets I \cup \hat S_j$ 
                \EndIf
            \EndFor 
            \EndIf
        \EndIf
    \EndFor
    \State \Return $(\hat K,\hat I, \hat{\bm \Ic}) := (|\hat{\bm \Ic}|, \bigcup_{I \in \hat{\bm \Ic} }I, \hat{\bm \Ic})$
\EndFunction
\end{algorithmic}
\end{algorithm}

\subsection{Estimating the loading matrix}
\label{subsec:estimating-loading-matrix}

In view of Step (5) of Remark~\ref{rem:reconstructing-parameters-tail-dependence}, we may use the estimator $\hat \TPDM$ from \eqref{eq:empirical_tpdm} and $\hat{K}$, $\hat{I}$ and $\hat{\bm \Ic}$ obtained from Algorithm \ref{alg:purevar} to construct an estimator $\hat \loadn \in [0,\infty)^{d \times \hat K}$ 
for the loading matrix $\loadn$. We introduce an additional hyperparameter $\lambda\ge 0$, with $\lambda>0$ enforcing sparsity in the rows of $\hat \loadn$. The latter property is referred to as \textit{row-sparsity}, which, on the population level, means that any variable $Y_j$ is only connected to a small number of latent factor. 

The estimator $\hat\loadn$ is constructed as follows. First of all, for each $j \in \hat I$, there exists a unique $a \in [\hat{K}]$ such that $j \in \hat{I}_a$, and we define $\hat \loadn_{j \cdot} = \bm e_a^\top$. Next, for each $j \in [d] \setminus \hat I$, Step (v) of Remark~\ref{rem:reconstructing-parameters-tail-dependence} suggests to define $ \hat\loadn_{j\cdot}^\top$ as any solution $\bm \beta_j$ of the equation 
\[
\hat C \bm \beta_j = \Big( \frac{1}{|\hat I_1|} \sum_{\ell \in\hat  I_1}  \hat \TPDM(\ell,j), \dots, \frac{1}{|\hat I_K|} \sum_{\ell \in \hat I_K}  \hat \TPDM(\ell,j) \Big)^\top =: \hat {\bm \theta}_j,
\]
where $\hat C =(\hat C(a,b))_{a,b \in [\hat K]}$ has coordinates
\[
\hat C(a,b) = \frac{1}{|\hat I_a| \cdot |\hat I_b|} \sum_{j \in \hat I_a, \ell \in \hat I_b} \hat \TPDM(j,\ell).
\]
To potentially enforce sparsity of the estimator, we use the Lasso-approach, and define
\[
\hat{\bm \beta}_j^{\mathrm{LASSO}} (\lambda)
=
\argmin_{\bm\beta \in \R^{\hat K}} \Bigl\{ \frac{1}{2}\|\hat{C} \bm \beta - \hat{\bm \theta}_j\|_2^2 + \lambda \|\beta\|_1 \Bigr\},
\]
Note that the optimization problem can be efficiently solved by the Fast Iterative Shrinkage-Thres\-holding Algorithm (FISTA) from \cite{BeckTeboulle2009}. Additionally, one could employ any modification of the plain Lasso, such as OLS post-Lasso, relaxed Lasso, or de-sparsified Lasso. Based on simulation experience, we suggest to work with the OLS post-Lasso estimator as the standard choice. More specifically, for $j \in [d]$, let $\hat{\mathrm{supp}_j}
=\{ a \in [\hat K] : \hat{\beta}^{\mathrm{LASSO}}_{j,a} \neq 0 \}$, and let $\hat C_{j}$ be the principal submatrix of $\hat C$ obtained by restricting both rows and columns to $\hat{\mathrm{supp}_j}$. We then redefine the subvector $(\hat{\bm \beta}_{j} ^{\mathrm{LASSO}})_{\hat{\mathrm{supp}_j}}$
to be any solution of the restricted OLS equation system $\hat C_{j} \bm \beta_{\hat{\mathrm{supp}_j}} = (\hat {\bm \theta}_j)_{\hat{\mathrm{supp}_j}}$, where we write $\bm \beta_S\in \mathbb{R}^{|S|}$ for the vector obtained by restricting $\bm \beta$ to the coordinates in $S$.

The estimator $\hat{\bm \beta}_j^{\mathrm{LASSO}} (\lambda)$ is neither guaranteed to have coordinates in $[0,1]$ nor 1-norm equal to one. These constraints can for instance be enforced by the Simplex Projector from \cite{kyrillidis2013sparse}, applied to the non-zero coordinates of $\hat{\bm \beta}_j^{\mathrm{LASSO}} (\lambda)$, see also \cite{boulin2026structured}. Denoting the resulting estimate by $\hat \loadn_{j\cdot}$, we have fully estimated $\hat \loadn$. The entire algorithm is described in Algorithm~\ref{alg:lsp}, where the functions LASSO and PROJECTOR refer to a generic version of the Lasso and the projection method, respectively.

\begin{algorithm}[t]
\caption{LassoSimplexProjector (LSP) Algorithm }
\label{alg:lsp}
\begin{algorithmic}[1]
\Require{Estimated TPDM $\hat{\TPDM} \in [0,\infty)^{d \times d}$, number of pure variables $\hat K$, pure variable index set $\hat I$, partition of the pure variable index set $\hat{\bm \Ic} = (\hat I_1, \dots, \hat I_{\hat K})$, LASSO smoothing parameter $\lambda\ge 0$.}
\Ensure{Estimated loading matrix $\hat{\loadn}$.}
\Function{LSP}{$\hat{\TPDM}, \hat K, \hat I, \hat{\bm \Ic},\lambda$}
    \State Initialize $\hat{\loadn} \gets \bm{0}_{d \times \hat{K}}$
    \For{each $a,b \in [\hat{K}]$} \Comment{Estimate $\hat{C}$}
        \State $\hat{C}(a,b) \gets (|\hat{I}_a||\hat{I}_b|)^{-1} \sum_{j \in \hat{I}_a, \ell \in \hat{I}_b}  \hat{\TPDM}(j,\ell)$
    \EndFor
    \For{$a \in [\hat K]$} \Comment{Estimate pure variable rows of $\hat{\loadn}$}
        \For{$j \in \hat I_a$}
            \State $\hat \loadn_{j,\cdot} \gets \bm e_a^\top$ 
        \EndFor
    \EndFor
    \For{each $j \in [d] \setminus \hat{I}$} 
    \Comment{Estimate non-pure variable rows of $\hat{\loadn}$}
    \For{each $a \in [\hat{K}]$}
        \State $\hat{\theta}_{ja} \gets |\hat{I}_a|^{-1} \sum_{\ell \in \hat{I}_a}  \hat{\TPDM}(\ell,j)$; $\hat{\bm \theta}_j \gets (\hat{\theta}_{j1}, \dots, \hat{\theta}_{j\hat K})^\top$
    \EndFor
    \State $\hat{\loadn}_{j \cdot}  \gets$ \Call{LASSO}{$\hat{C}, \hat{\bm \theta}_j, \lambda$}  \Comment{Row estimate based on the Lasso}
    \State $\hat{\loadn}_{j \cdot} \gets \max(\hat{\loadn}_{j \cdot}, 0)$ \Comment{Take coordinate-wise maximum.}
    \State $\hat{\loadn}_{j \cdot} \gets$ \Call{PROJECTOR}{$\hat{\loadn}_{j \cdot}$}     \Comment{Optional: projection on the unit simplex.}
\EndFor
\State \Return $\hat{\loadn}$
\EndFunction
\end{algorithmic}
\end{algorithm}

\subsection{Estimation of the latent spectral measure}
\label{subsec:estimation-latent-spectral-measure}

As indicated in Step~(7) of Remark~\ref{rem:reconstructing-parameters-tail-dependence}, estimation of $\loadn$ entails a corresponding estimation of $\Psifactor$. The next result establishes the theoretical basis for this approach.

\begin{lemma} 
\label{lem:spectral-measure-of-pure-variables}
Suppose $\bX$ has STDF $L=L_{K, \loadn, \Psifactor}$ from \eqref{eq:stdf-factor} with $(K,\loadn, \Psifactor) \in \paraspacetaildep$. 
Let $\bm Z = (Z_a)_{a \in [K]}$ have coordinates $Z_a = |I_a|^{-1} \sum_{j \in I_a} Y_j$, with $I_a(A)$ from \eqref{eq:pure-variables} and with $Y_j = 1/(1-F_j(X_j))$. 
Then $\bm Z$ is regularly varying of order 1, and the spectral measure of $\bZ$ with respect to the 1-norm is given by $\Psifactor$. In particular,
\begin{equation} 
\label{eq:spectral_measure_pure-variables}
\mathbb{P} \Big( \frac{\bZ}{\|\bZ\|_1} \in \, \cdot \,\, \Big| \, \|\bZ\|_1 > x \Big) \rightsquigarrow \Psifactor(\cdot) \quad \textrm{ in } \simplexpos{K-1}(\| \cdot \|_1) \quad \text{as }  x \to \infty.
\end{equation} 
\end{lemma}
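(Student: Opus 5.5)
The approach is to work with the standard-Pareto vector $\bY$, which is regularly varying of index~1, and to push it forward through the linear map $M:\R^d\to\R^K$, $M\by = (|I_a|^{-1}\sum_{j\in I_a} y_j)_{a\in[K]}$. The sets $I_a$ are pairwise disjoint (a row cannot equal two different unit vectors) and nonempty by the pure variable assumption. Hence $M$ has nonnegative entries and maps $[0,\infty)^d\setminus\{\bm 0\}$ into $[0,\infty)^K\setminus\{\bm 0\}$. The preimage of a set bounded away from zero in $[0,\infty)^K$, intersected with $[0,\infty)^d$, is therefore again bounded away from zero: indeed $\|M\by\|_1 \ge c\|\by\|_\infty$ fails in general, but on the support of $\nu_\bY$ it is enough to argue via the continuous mapping theorem for regular variation (Proposition 2.1.12-type results in \citealp{kulik2020heavy}). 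This requires only that $\nu_\bY(\{\by: M\by = \bm 0\})=0$. Since every coordinate of $\bY$ has the same positive marginal tail mass, this measure of the kernel equals $\nu_\bY(\{y_j=0 \ \forall j\in I\})$. The latter vanishes as soon as $\nu_\bY$ gives no mass to $\{y_{j}=0\}$ for some pure $j$, which I will check from the explicit form below.

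The key step is to compute $\nu_\bZ = \nu_\bY\circ M^{-1}$ explicitly using Lemma~\ref{lem:stdf-spectral-dependence-measure-factor-model} with the 1-norm. By \eqref{eq:relation-stdf-spectral-dependence-measure}, \eqref{eq:spectral-dependence-measure-factor} and the polar decomposition \eqref{eq:polar-decomposition}, the exponent measure of $\bY$ (normalized with $g(x)=x$) is the image of $\nu_1\otimes (K\Psifactor)$ under $(r,\bz)\mapsto r\loadn\bz$. Indeed, the tilt by $\|\loadn\bz\|_1$ and the normalization $\varsigma_\bY$ combine exactly so that
\[
\nu_\bY(\cdot) = K\int_{\simplexpos{K-1}}\int_0^\infty \indicator{\{r\loadn\bz\in\cdot\}}\, r^{-2}\diff r\,\Psifactor(\diff\bz).
\]
I will verify this by checking that it reproduces $L_{K,\loadn,\Psifactor}(\bx)=\nu_\bY(\{\by:\exists j,\ y_j>1/x_j\})$, which is a one-line integration in $r$; uniqueness of the exponent measure given $L$ then yields the identity. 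For $j\in I_a$ we have $(\loadn\bz)_j = z_a$, so $M\loadn\bz = \bz$, and therefore
\[
\nu_\bZ(\cdot) = K\int\int_0^\infty \indicator{\{r\bz\in\cdot\}}\, r^{-2}\diff r\,\Psifactor(\diff \bz).
\]
In particular, $\nu_\bY(M\by=\bm0)=0$, because $\bz\ne\bm 0$ on the simplex; this settles the kernel condition. It follows that $\bZ$ is regularly varying of index 1. Since $\|\bz\|_1=1$, we get $\nu_\bZ(\{\|\bz'\|_1>1,\ \bz'/\|\bz'\|_1\in A\}) = K\Psifactor(A)$, so $\varsigma_\bZ=K$ and the 1-norm spectral measure equals $\Psifactor$. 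The convergence \eqref{eq:spectral_measure_pure-variables} is then the general consequence \eqref{eq:spectral_measure}.

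The main obstacle is the transfer of vague convergence through $M$. Rather than invoking an abstract continuous mapping theorem, I would first try to prove it directly. For $f\in\mathcal C(\Eb_0^K)$ supported in $\{\|\bz\|_1>\eps\}$, the function $f\circ M$ is continuous and bounded on $[0,\infty)^d$. It is supported in $\{\|M\by\|_1>\eps\}\subset\{\|\by\|_1>\eps\}$, because $\|M\by\|_1\le\|\by\|_1$ for nonnegative $\by$; hence the support is bounded away from zero. This inclusion is the right direction and suffices, since $\bY$ lives on $[0,\infty)^d$. Thus $x\,\Exp f(\bZ/x)\to\int f\circ M\,\diff\nu_\bY$, and the obstacle dissolves.
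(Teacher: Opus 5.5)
Your proof is correct and follows essentially the paper's route: write $\bZ=M\bY$, transfer the regular variation of $\bY$ through the linear map $M$, and use $M\loadn=I_K$ so that the factor representation collapses to $\Psifactor$. The differences are in packaging. You work with the exponent measure $\nu_\bY$, obtained as the image of $K\,\nu_1\otimes\Psifactor$ under $(r,\bz)\mapsto r\loadn\bz$ and identified via $L$, and you prove the mapping step directly via $\|M\by\|_1\le\|\by\|_1$. The paper instead cites Exercise~2.4 in \cite{kulik2020heavy} for the spectral measure of $M\bY$ and then cancels the two tilts.

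One harmless slip: your opening claim that $M$ maps $[0,\infty)^d\setminus\{\bm 0\}$ into $[0,\infty)^K\setminus\{\bm 0\}$ is false (take $\by=\be_j$ with $j\notin I(\loadn)$). Nothing depends on it, since your last paragraph handles supports correctly and non-degeneracy of the limit follows from your explicit formula for $\nu_\bZ$.
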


The weak convergence in \eqref{eq:spectral_measure_pure-variables} motivates the following estimator of $\Psifactor$. 
Recall the pseudo-obser\-vations $\hat Y_{ij} = \{1-\hat F_{nj}(X_{ij})\}^{-1}$ from \eqref{eq:empirical_tpdm}. 
Using the estimated index sets $\hat I_a$ from Algorithm~\ref{alg:purevar}, define aggregated pseudo-observations $\hat{\bm Z}_i \in \R^{\hat K}$ with components
\begin{align} \label{eq:def-hat-z}
\hat Z_{ia} = \frac{1}{|\hat I_a|} \sum_{j \in \hat I_a} \hat Y_{ij}, 
\qquad i \in [n], a \in [\hat K].
\end{align}
Let $\hat S_i = \|\hat{\bm Z}_i\|_1$ and denote by 
$\hat S_{1:n} \le \dots \le \hat S_{n:n}$ 
the order statistics of $\hat S_1,\dots,\hat S_n$. 
For a large threshold parameter $k' \in [n]$ (typically, we will choose $k'=k$ with $k$ from \eqref{eq:empirical_tpdm}), we define
\begin{align}\label{eq:empirical-spectral-measure}
\hat \Psifactor_{k'}(\cdot)
= \frac{1}{k'} \sum_{i=1}^n 
\bm 1\Big( 
\frac{\hat{\bm Z}_i}{\hat S_i} \in \,\cdot\, ,\ 
\hat S_i > \hat S_{n-k':n}
\Big).
\end{align}
By construction, $\hat \Psifactor_{k'}$ is a variant of the empirical spectral measure introduced in \cite{EinmahldeHaanPiterbarg2001,EinmahlSegers2009}.
Alternatively, one may assume that $\Psifactor$ belongs to a parametric family and estimate the corresponding parameters, for instance by likelihood-based or moment-based methods applied to the sample $\{ \hat{\bm Z}_i / \hat S_i : i \in [n] \text{ and } \hat S_i > \hat S_{n-k':n}\}$; see \cite{EinmahlKrajinaSegers2012}.

\subsection{Estimation of derived tail parameters}
\label{subsec:estimation-derived-tail-parameters}

We describe how estimation of $K$, $\loadn$, and $\Psifactor$ can be combined with marginal estimation to obtain estimators of tail parameters such as those introduced in \eqref{eq:tail-parameter-introduction}. 
Throughout, we assume that $\bm X$ is a $d$-dimensional observable random vector whose tail dependence is of linear latent $K$-factor form as in Definition~\ref{def:latent-linear-factor-tail-dependence}. Moreover, for each $j \in [d]$, the marginal distribution of $X_j$ is assumed to lie in the max-domain of attraction of an extreme-value distribution with index $\xi_j \in \R$. We observe an i.i.d.\ sample $\bm X_1, \dots, \bm X_n$ from $\bm X$.

Our goal is to estimate tail probabilities of the following form. Let $\mathcal{J}$ be a collection of subsets of $[d]$, for example, $\mathcal{J} = \{ J \subset [d] : |J| \ge 0.8d \}$, 
the class of subsets containing at least $80\%$ of the indices. For a vector of large deterministic thresholds $\bm x = (x_1, \dots, x_d)^\top$, define
\begin{align} \label{eq:derived-tail-parameter}
p(\bm x, \mathcal{J})
=
\mathbb{P}\Big( \exists\, J \in \mathcal{J} \text{ such that } X_j > x_j \ \forall j \in J \Big).
\end{align}
Under our above model assumption on $\bm X$, we can approximately express $p(\bm x, \mathcal{J})$ as a function of $K, \loadn, \Psifactor$ and $q_1(x_1), \dots, q_d(x_d))$, where $q_j(x_j) = 1-F_j(x_j)$ is a small tail probability. Indeed, by Proposition~\ref{prop:survival_stdf_lfm_tail_dep}, for $t$ sufficiently large and $q(x_j)$ sufficiently small (of the order $1/t$), we have
\begin{align} \label{eq:tail-parameter}
p(\bm x, \mathcal{J})
&= \nonumber
\mathbb{P}\Big\{ \exists J \in \mathcal{J} : F_j(X_j) > 1 - \frac{t q_j(x_j)}{t} \ \forall j \in J \Big\} 
\\&\approx 
\frac{1}{t} \, R^\cup_{\mathcal{J}}(t \, \bm{q}(\bm x)) 
= 
K \int_{\simplexpos{K-1}} 
\bigvee_{J \in \mathcal{J}} \bigwedge_{j \in J} 
\Big( \sum_{a \in [K]} \loadn_{ja} z_a \, q_j(x_j) \Big) \Psifactor(\diff \bz),
\end{align}
where $\bm q (\bm x)=(q_1(x_1), \dots, q_d(x_d))$. 

The vector $\bm q(\bm x)$ of tail exceedance probabilities can be estimated using standard methods from univariate extreme value statistics. More specifically, let 
\[
\mathrm{GPD}(y; \xi, \sigma) =
\begin{cases}
    1 - \left( 1 + \xi \frac{y}{\sigma} \right)_+^{-1/\xi}, & \xi \neq 0, \\
    1 - \exp\left(-\frac{y}{\sigma}\right), & \xi = 0,
\end{cases} \qquad (y>0)
\]
(with $a_+ = \max(a,0)$)
denote the cdf of the generalized Pareto distribution with shape parameter $\xi \in \R$ and scale parameter $\sigma>0$. For each $j \in [d]$, let $k = k(j)\in [n]$ be a threshold parameter (typically, $k$ is around $5\%$ of $n$ or smaller), and define $\hat u_j = X_{j,n-k:n}$.
We then fit the GPD to the sample $\{ X_{j,n-k+1:n} - \hat u_j, \dots, X_{j,n:n} - \hat u_j \}$ of threshold exceedances using maximum likelihood estimation \citep[Section 3.4]{DeHaanFerreira2006}, which yields estimated values $(\hat \xi_j, \hat \sigma_j)$. Finally, the tail probabilities are estimated using the semiparametric estimator
\begin{align} \label{eq:empirical-tail-exceedance-probabilities}
\hat{q}_j(x_j) =
\begin{cases}
    1 - \hat{F}_{nj}(x_j), & x_j \leq \hat u_j, \\
    \frac{k}n \cdot \left(1 - \mathrm{GPD}(x_j - \hat u_j; \hat{\xi}_j, \hat{\sigma}_j)\right), & x_j > \hat u_j,
\end{cases}
\end{align}
see \cite[Section 4.4]{DeHaanFerreira2006}. 

The marginal estimators from \eqref{eq:empirical-tail-exceedance-probabilities} can now be combined with the estimators from Sections~\ref{subsec:estimating-K-and-pure-variables}-\ref{subsec:estimation-latent-spectral-measure} to estimate the right-hand side of \eqref{eq:tail-parameter} using the plug-in-approach:
\begin{align} \label{eq:estimator-tail-parameter}
\hat p(\bm x, \mathcal{J}) 
:=&\, \nonumber
\hat K \int_{\simplexpos{\hat K-1}} 
\bigvee_{J \in \mathcal{J}} \bigwedge_{j \in J} 
\Big( \sum_{a \in [\hat K]} \hat \loadn_{ja} z_a \, \hat q_j(x_j) \Big) \hat \Psifactor_{k'}(\diff \bz)
\\=&\,
\frac{\hat{K}}{k'} 
\sum_{i \in [n]} 
\bigvee_{J \in \mathcal{J}} \bigwedge_{j \in J} 
\Big( \sum_{a \in [\hat K]} \frac{\hat{\loadn}_{ja} \hat Z_{ia} \hat{q}_j(x_j)}{\hat S_i} \Big)
\bm{1}(\hat S_i > \hat S_{n-k':n}),
\end{align}
where $\hat S_i  =\| \hat {\bm Z}_i\|_1$ with $\hat {\bm Z_i}$ from \eqref{eq:def-hat-z}. 
In applications of interest, the collection $\mathcal J$ is often exponentially large in $d$, which at first sight renders the estimator computationally infeasible. 
However, the key quantities of the form
$\bigvee_{J \in \mathcal{J}} \bigwedge_{j \in J} \hat v_{j}$ with $\hat v_{j} = \sum_{a \in [\hat K]}\hat{\loadn}_{ja} \hat Z_{ia} \hat{q}_j(x_j)/\hat S_i$
can often be evaluated efficiently.
For example, if 
$
\mathcal J 
= 
\{ J \subset [d] : |J|/d \ge \alpha \}$ for some 
$\alpha \in [d^{-1},1]$,
then
$\bigvee_{J \in \mathcal{J}_\alpha} \bigwedge_{j \in J} \hat v_j = \hat v_{d-\ell+1:d}$, where $\ell = \lfloor \alpha d\rfloor$.
Further computational simplifications for other choices of $\mathcal J$ are discussed in Section~\ref{sec:case-study}.

\subsection{Hyperparameter selection and model validation}
\label{subsec:hyperparameter-and-model-validation}

The estimators and algorithms introduced in the previous sections depend on several tuning parameters that must be specified by the practitioner. 

First, we encounter the classical threshold selection problem in extreme-value analysis, which in our setting concerns the parameters $k$, $k'$, and $k(j)$ governing the effective tail sample size. We do not elaborate on this issue further; in practice, these parameters are typically chosen as a small fraction of the total sample size, for example around $5\%$ of $n$, which is also the choice adopted here. In particular, we typically set $k=k'=k(j)$.

The PureVar Algorithm~\ref{alg:purevar} and the LSP Algorithm~\ref{alg:lsp} for estimating $\loadn$ involve several additional design choices: 
(i) the selection of a norm on $\R^d$ for computing the empirical TPDM, 
(ii) a tuning parameter $\kappa>0$ for identifying pure variables, 
(iii) a sparsity parameter $\lambda>0$ for regularizing the loading matrix $\loadn$, and 
(iv) the choice between the projected and non-projected estimator in Step~17 of Algorithm~\ref{alg:lsp}. Regarding the norm on $\R^d$, we found the maximum norm to yield stable and robust results. For the remaining tuning parameters, we propose the following data-adaptive strategy. First, select grids $S_\kappa = \{\kappa_1,\dots,\kappa_{L(\kappa)}\}$ and $S_\lambda = \{\lambda_1,\dots,\lambda_{L(\lambda)}\}$
with $\lambda_1=0$, and compute estimators of $K$, $\loadn$, and $\Psifactor$ for each triple
$
(\kappa,\lambda,\mathrm{proj})
\in
S_\kappa \times S_\lambda \times \{\mathrm{True},\mathrm{False}\},
$
where $\mathrm{proj}$ indicates whether the projected estimator is used in Step~17 of Algorithm~\ref{alg:lsp}. For each configuration, pairwise tail correlations can be estimated as in Section~\ref{subsec:estimation-derived-tail-parameters}. Specifically, for $j\neq \ell$, recalling the formula for the tail correlations from \eqref{eq:model-tail-correlation}, we define 
\[
\tilde \chi_{\kappa,\lambda,\mathrm{proj}}(j,\ell)
=
\min\bigg\{
1,\,
\frac{\hat K}{k'}
\sum_{i \in [n]}
\Big[
\Big(
\sum_{a \in [\hat K]}
\frac{\hat{\loadn}_{ja}\hat Z_{ia}}{\hat S_i}
\Big)
\wedge
\Big(
\sum_{a \in [\hat K]}
\frac{\hat{\loadn}_{\ell a}\hat Z_{ia}}{\hat S_i}
\Big)
\Big]
\bm 1(\hat S_i>\hat S_{n-k':n})
\bigg\}.
\]
These values can be compared with the empirical pairwise tail correlation
\[
\hat\chi(j,\ell)
=
\frac{1}{k'}
\sum_{i \in [n]}
\bm 1\big(
R_{ij}>n-k',\,
R_{i\ell}>n-k'
\big),
\]
where $R_{ij}$ denotes the rank of $X_{ij}$ among $X_{1j},\dots,X_{nj}$. This yields the point cloud $\{(\tilde \chi_{\kappa, \lambda,\mathrm{proj}}(j,\ell),  \allowbreak \hat{\chi}(j,\ell)): j,\ell \in [d] \text{ with } j \ne \ell\}$, which should concentrate around the identity line. We therefore select $(\kappa,\lambda,\mathrm{proj})$ by maximizing the coefficient of determination $R^2$ between $\tilde \chi_{\kappa,\lambda,\mathrm{proj}}(j,\ell)$ and $\hat\chi(j,\ell)$ over the grid
$S_\kappa \times S_\lambda \times \{\mathrm{True},\mathrm{False}\}$. 

Finally, the point cloud corresponding to the selected tuning parameters serves as a graphical diagnostic for assessing model adequacy: it should closely cluster around the identity line. Based on simulation experiments, a respective $R^2$ value exceeding 0.8 typically indicates a satisfactory model fit.

\section{Case study}
\label{sec:case-study}

In this section, we illustrate our new methods with a case study related to wind power production in Germany. The focus lies on the spatial occurrence of winter low-wind (``Flaute'') events, which are particularly critical in an electricity system with a high share of weather-dependent renewables. In Germany, wind energy accounts for a substantial fraction of total generation, especially during winter months when electricity demand is high and solar production is limited. A spatial analysis of low-wind conditions therefore provides important insights into system vulnerability and the probability of widespread shortfalls in wind power generation \citep{ohlendorf2020frequency}.

As a basis for the analysis, we use the renewable\_power\_plants dataset from Open Power System Data (OPSD), which contains plant-level information on renewable installations across Germany at the NUTS-3 level.\footnote{\url{https://data.open-power-system-data.org/renewable_power_plants/}} Wind capacity is heavily concentrated in the northern part of Germany, with especially high densities along the western coast of Schleswig-Holstein. We therefore focus on two coastal NUTS-3 regions with particularly large installed capacity. To obtain precise turbine locations, we complement the OPSD data with the Windkraftanlagen dataset from the Schleswig-Holstein Open Data Portal.\footnote{\url{https://opendata.schleswig-holstein.de/dataset/windkraftanlagen2}} In total, 1,917 turbines in the selected area are considered, including operating, approved, and planned installations, enabling a comprehensive assessment of potential risk scenarios; see Panel (a) in Figure~\ref{fig:wind_combined}.

\begin{figure}[t!]
    \centering    
    \begin{subfigure}[b]{0.42\textwidth}
        \centering
        \includegraphics[width=.95\textwidth, keepaspectratio]{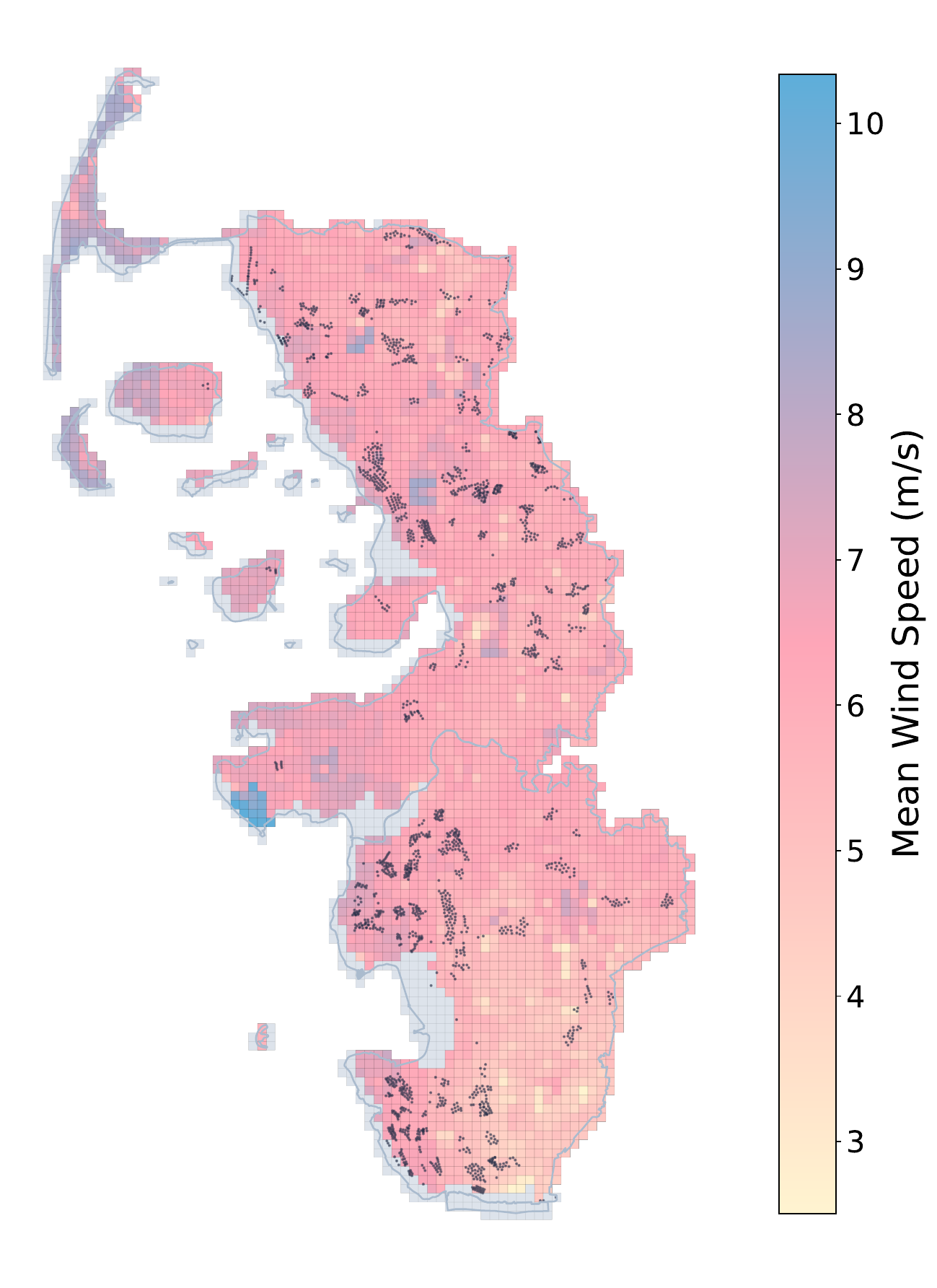}
        \caption{Wind turbine locations.}
        \label{fig:wind_turbine_wescoast}
    \end{subfigure}
    \hfill
    \begin{subfigure}[b]{0.57\textwidth}
\centering
\begin{tikzpicture}
\begin{axis}[
    width=7cm,
    height=8.5cm, 
    font=\small,
    xmin=0, xmax=30,
    ymin=0, ymax=1.2,
    xlabel={Wind speed (m/s)},
    ylabel={Capacity factor ratio},
    xtick={0,5,10,15,20,25,30},
    ytick={0,0.2,0.4,0.6,0.8,1.0},
    axis lines=left,
    samples=400,
    thick,
]

% --- Parameters ---
\def\vci{3}    % cut-in
\def\vr{14}    % rated speed
\def\vco{25}   % cut-out

% --- Vertical shading (full height [0,1]) ---

% Non-operational: below cut-in
\addplot[
    draw=none,
    fill=red,
    fill opacity=0.12
] coordinates {
    (0,0) (\vci,0) (\vci,1) (0,1)
};

% Operational range
\addplot[
    draw=none,
    fill=green!60!black,
    fill opacity=0.12
] coordinates {
    (\vci,0) (\vco,0) (\vco,1) (\vci,1)
};

% Non-operational: above cut-out
\addplot[
    draw=none,
    fill=red,
    fill opacity=0.12
] coordinates {
    (\vco,0) (30,0) (30,1) (\vco,1)
};

% --- Power curve ---
\addplot[
    domain=0:30,
    blue,
    very thick
] expression {
    (x<\vci) * 0
  + (x>=\vci && x<\vr) * ( pow((x-\vci)/(\vr-\vci),2) )
  + (x>=\vr && x<=\vco) * 1
  + (x>\vco) * 0
};

% Cut-in / cut-out markers
\addplot[dashed] coordinates {(\vci,0) (\vci,1)};
\addplot[dashed] coordinates {(\vr,0) (\vr,1)};
\addplot[dashed] coordinates {(\vco,0) (\vco,1)};

\node[anchor=south, align=center] at (axis cs:{\vci+0.5},1.02) {Cut-in: \\ $\vci$ m/s};
\node[anchor=south, align=center] at (axis cs:{\vr+0.1},1.02) {Rated speed: \\ $\vr$ m/s};
\node[anchor=south, align=center] at (axis cs:\vco,1.02) {Cut-out: \\ $\vco$ m/s};

\end{axis}
\end{tikzpicture}
\caption{Typical wind turbine power curve.}
	\label{fig:wind_turbine}
    \end{subfigure}
    \caption{(a) Detailed turbine locations in Schleswig-Holstein (black dots) with January 2023 mean wind speeds from \textit{HOSTRADA} (grid cells of size 1km x 1km). Pixels outside \textit{HOSTRADA} resolution data are shown in grey. (b) A typical wind turbine power curve. Turbines produce no power below the cut-in speed (3 m/s), operate at rated capacity between 14 and 25 m/s, and shut down above the cut-out speed (25 m/s).}
    \label{fig:wind_combined}
\end{figure}

To translate wind speeds into power production, we rely on standard wind turbine power curves, which describe the nonlinear relationship between wind speed and electrical output. In particular, turbines generate no power below a cut-in speed (around $3ms^{-1}$), reach rated output above a certain threshold, and shut down at very high wind speeds (around $25ms^{-1}$); see Panel (b) in Figure~\ref{fig:wind_combined} for an illustration. Since our interest lies in system-wide shortfalls, we focus on the lower tail of the wind speed distribution and identify days on which turbines operate below the cut-in speed, implying zero production. Wind speeds are obtained from the ERA5 reanalysis (available at a spatial resolution of roughly 27.5km, and at 10m and 100m above surface, see \citealp{hersbach2020era5}) and the high-resolution DWD HOSTRADA dataset (spatial resolution of 1km, and at 10m above surface only).\footnote{\url{https://opendata.dwd.de/climate_environment/CDC/grids_germany/hourly/hostrada/}} HOSTRADA 10m winds are extrapolated to a 100 m hub height using the power law \citep{cai2021wind, jourdier2020evaluation, schallenberg2013methodological}, with the Hellmann exponent derived from ERA5 and averaged by hour of day and month to capture diurnal variation; details are provided in Section~\ref{sec:details-on-case-study}.
For each HOSTRADA grid cell containing at least one wind turbine, we extract daily winter maxima, yielding $n=2{,}708$ winter days over the past 30 years and $d=563$ grid cells. Each grid cell $j$ is assigned a weight $c_j$, defined as the proportion of total capacity of the wind power plants in that cell relative to the  total capacity in the entire study region (which is roughly 6 gW),
so that $\sum_{j\in[d]} c_j = 1$.
On this basis, our primary quantities of interest are the probabilities that wind turbines accounting for at least an $\alpha\cdot 100\%$-fraction of the total capacity (with $\alpha \in \{0.1,0.2,\dots,0.9\}$) are located in areas where the wind speed simultaneously remains below the cut-in speed throughout an entire winter day, implying zero aggregate wind power production for those turbines.

More precisely, let $W_{ij}$ denote the (daily maximal) wind speed at 100\,m above surface at time $i \in [n]$ and pixel $j \in [d]$, where $n = 2708$ and $d = 563$. 
For each time point $i$, define the $d$-dimensional vector $\bm W_i = (W_{i1}, \dots, W_{id})^\top$, 
and assume that $\bm W_1, \dots, \bm W_n$ are identically distributed and approximately independent (in the tail); the independence assumption has been validated empirically. For $\alpha\in(0,1)$, let
$\mathcal{J}_\alpha = \{ J \subset [d] : \sum_{j \in [J]} c_j \ge \alpha  \}$
denote the collection of all subregions of the study area such that the total capacity of the wind power plants in the subregion accounts for at least an $\alpha\cdot 100\%$-fraction of the total capacity.
Given a small threshold vector $\bm w = (w_j)_{j \in [d]} \in (0,\infty)^d$, our quantity of interest is
\begin{equation}
\label{eq:target_parameter_wind}
p_\alpha(\bm w) 
= 
\mathbb P \Big( \exists\, J \in \mathcal{J}_\alpha 
\ \text{s.t.}\ W_j < w_j \ \forall j \in J \Big).
\end{equation}
We are particularly interested in the homogeneous case $w_j = 3$ for all $j \in [d]$, which corresponds to the probability of the event described at the end of the last paragraph.

While our methodology focuses on right-tail events, the target parameter from \eqref{eq:target_parameter_wind} concerns left-tail events. We therefore consider the reciprocal transformation $x_j = w_j^{-1}$ and $X_{j} = W_{j}^{-1}$
for $j \in [d]$, such that
\begin{align*}
p_\alpha (\bm w)
&= 
\mathbb{P}\Big\{ \exists J \in \mathcal{J}_\alpha : X_j > x_j \ \forall j \in J \Big\} 
\end{align*}
is exactly of the form in \eqref{eq:derived-tail-parameter}. The derivations in Section~\ref{subsec:estimation-derived-tail-parameters} show that, under the assumption that the tail dependence of $\bm{X}$ is of latent linear $K$-factor form as introduced in Definition~\ref{def:latent-linear-factor-tail-dependence},  
and provided that the exceedance probability $q_j(x_j)=\Prob(X_j > x_j)$ is sufficiently small,  we have 
\begin{align*}
p_\alpha (\bm w)
&\approx  K \int_{\simplexpos{K-1}} 
\bigvee_{J \in \mathcal{J}_\alpha} \bigwedge_{j \in J} 
\Big( \sum_{a \in [K]} \loadn_{ja} z_a \, q_j(x_j) \Big) \Psifactor(\diff \bz),
\end{align*}
where $\bm q (\bm x)=(q_1(x_1), \dots, q_d(x_d))$; see \eqref{eq:tail-parameter}.
As shown at the end of Section~\ref{subsec:estimation-derived-tail-parameters}, the preceding expression, and hence the corresponding estimator, appears to pose a combinatorial challenge, since it involves terms of the form $\bigvee_{J \in \mathcal{J}} \bigwedge_{j \in J} v_{j}$, where $v_j$ is defined for every $j \in [d]$ and where $|\mathcal J|$ can be extremely large as it scales exponentially with $d$.
This difficulty can be avoided as follows: let $\pi:[d] \to [d]$ be a permutation such that $v_{\pi(1)} \ge v_{\pi(2)} \ge \dots \ge v_{\pi(d)}$, and define the cumulative capacity weights $C_\ell = \sum_{j \in [\ell]} c_{\pi(j)}$ for each $\ell \in [d]$. We then have that $\bigvee_{J \in \mathcal{J}_\alpha} \bigwedge_{j \in J} v_j = v_{\pi(m)}$, where $m:= \min\{ \ell \in [d]: C_\ell \ge \alpha\}$.

Under the additional assumption that the marginal cdfs $F_j$ of $X_j$ lie in the max-domain of attraction of an extreme-value distribution with extreme-value index $\xi_j$, we estimate $p_\alpha(\bm w)$ via the estimator $\hat p_\alpha (\bm w) := \hat p(\bm x, \mathcal J_\alpha)$ from \eqref{eq:estimator-tail-parameter}, using the combinatorial trick explained in the previous paragraph and the hyperparameter selection scheme of Section~\ref{subsec:hyperparameter-and-model-validation}. 
Specifically, we set $k=k'=k(j)=135$ (so that $k/n \approx 5\%$) and use the maximum norm to compute the empirical TPDM. We consider the grids $G_\kappa=\{0.002,0.0025,\dots,0.008\}$ and $G_\lambda=\{0.00001,0.00002,\dots,0.001\}$, yielding $13 \cdot 100 \cdot 2 = 2{,}600$ configurations for $(\kappa,\lambda,\mathrm{proj}) \in G_\kappa \times G_\lambda \times \{\mathrm{True},\mathrm{False}\}$. The best-performing configuration uses the projected estimator with $\kappa^*=0.006$ and $\lambda^*=0.00084$, attaining $R^2=0.7366$. The corresponding comparison of model-implied and empirical tail correlations (see Section~\ref{subsec:hyperparameter-and-model-validation}) is shown in Panel~(a) of Figure~\ref{fig:empirical_fit}. Panel~(b) repeats the same diagnostic with $k'=72$ (i.e., $3\%$ of the sample), for which the coefficient of determination increases, indicating an even closer agreement further out in the tail. The results indicate that the model provides a reasonable fit.

\begin{figure}[t!]
    \centering  
    \begin{minipage}{0.48\textwidth}
        \centering
        \includegraphics[width=\linewidth]{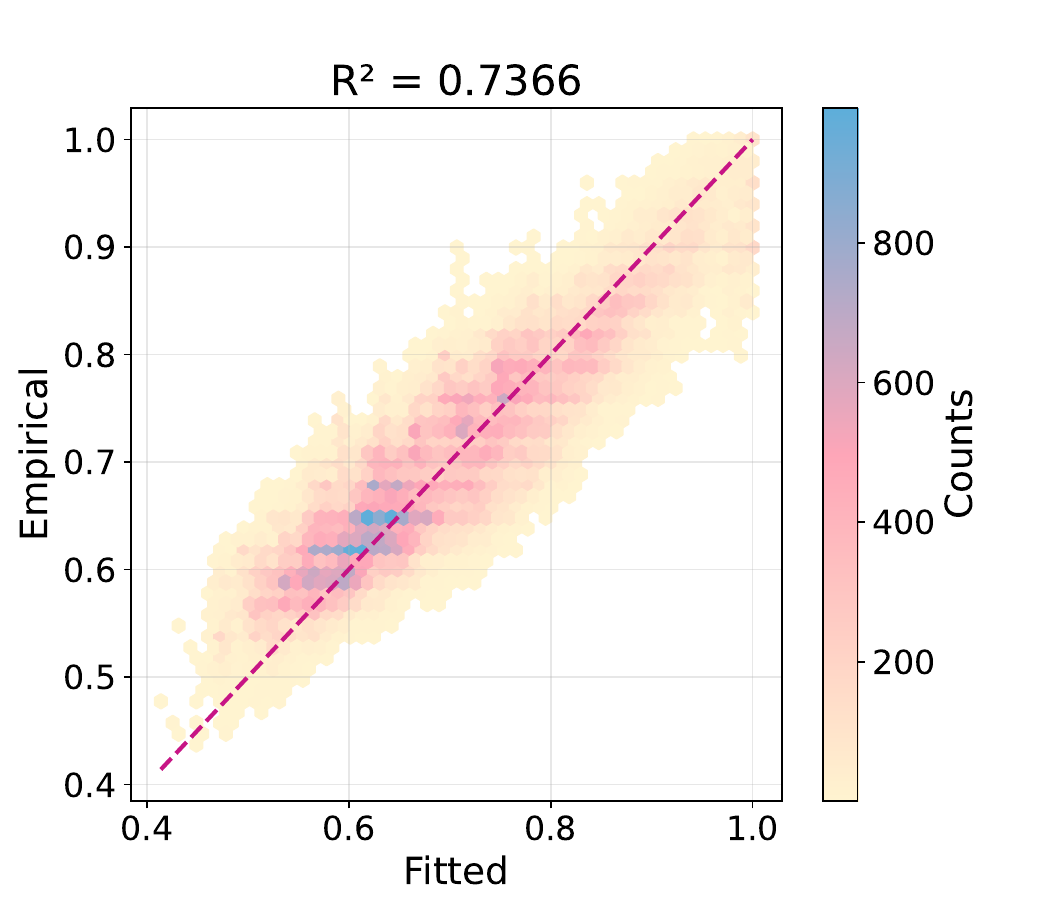}
        \caption*{(a) $k'=135$}
    \end{minipage}
    \hfill
    \begin{minipage}{0.48\textwidth}
        \centering
        \includegraphics[width=\linewidth]{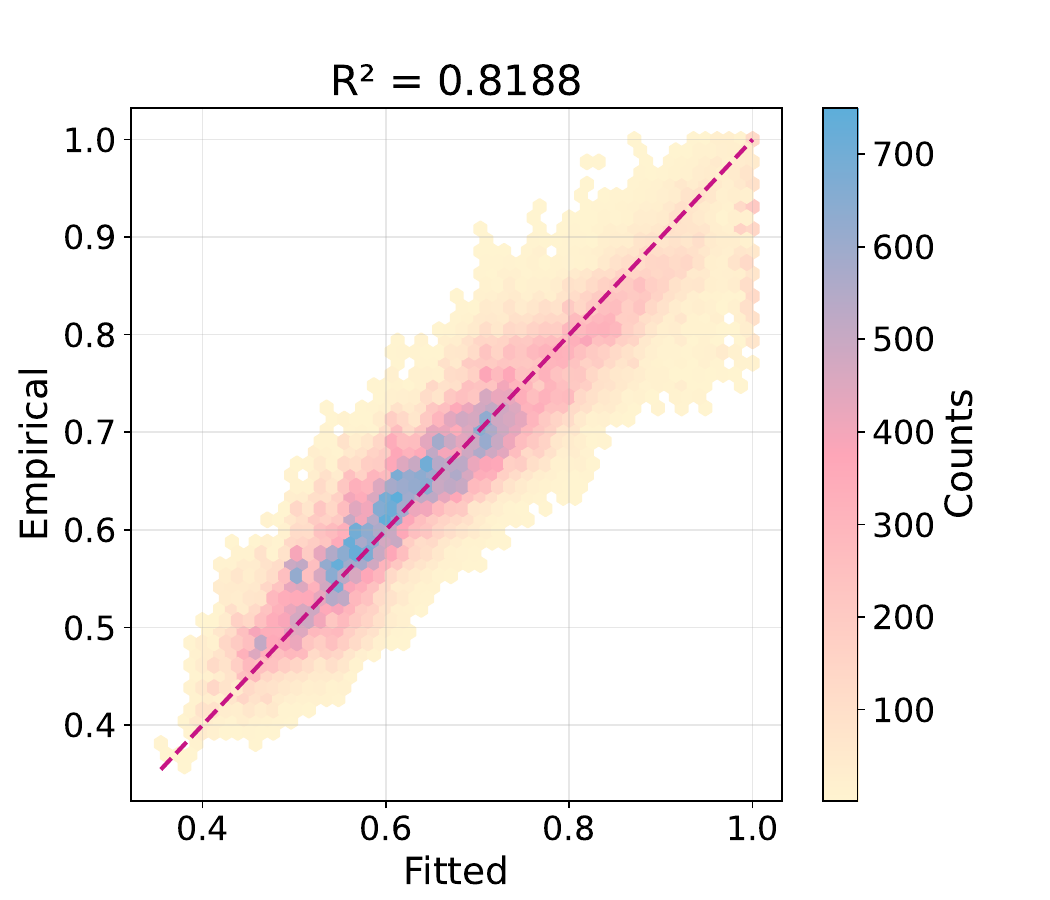}
        \caption*{(b) $k'=72$}
    \end{minipage}
    \caption{Implied tail correlations from the fitted model against empirical tail correlations. Both plots are based on $k=135$ ($5\%$ of the total sample size), while $k'$ is either $k'=135$ in Panel (a), or $k'=72$ ($3\%$ of the total sample size) in Panel (b).} 
    \label{fig:empirical_fit}
\end{figure}

We next summarize and interpret the estimates obtained under the best-performing configuration. First, $\hat{K} = 20$, indicating that the $563$-dimensional tail dependence structure can be represented by only $20$ latent factors. The estimated loading matrix $\hat\loadn$ is sparse: it contains $2{,}251$ strictly positive entries, corresponding to approximately $20\%$ of all entries, and the maximum number of non-zero loadings in any row is $9$. 

The resulting estimates $\hat p_\alpha(\bm w)$ are illustrated in Figure~\ref{fig:p_fitted_p_emp} for $\bm w = (w,\dots,w)^\top$ with $w \in [2,7]$ and $\alpha \in \{0.5,0.7\}$ (Panel (a) and (b)) and for $w\in\{3,5\}$ and $\alpha \in [0.1,0.9]$ (Panel (c) and (d)). In addition to the model-based estimator described above, we report the empirical benchmark
\begin{align} \label{eq:estimator-tail-parameter-empirical-benchmark}
\hat p_\alpha^{\mathrm{emp}}(\bm w) 
= 
\frac{1}{n} \sum_{i \in [n]} \bm{1}\Big(  \sum_{j \in [d]: X_{ij} > x_j} c_j  \ge \alpha  \Big).
\end{align}
By construction, this empirical estimator cannot extrapolate beyond the range of the observed data and is therefore expected to perform poorly near the boundary of that range. Next to the two estimators, we also depict $95\%$ basic bootstrap confidence bands; their precise calculation is explained in Section~\ref{sec:details-on-case-study}.

\begin{figure}[t!]
    \centering
    \begin{minipage}{0.45\textwidth}
        \centering
        \includegraphics[width=\linewidth]{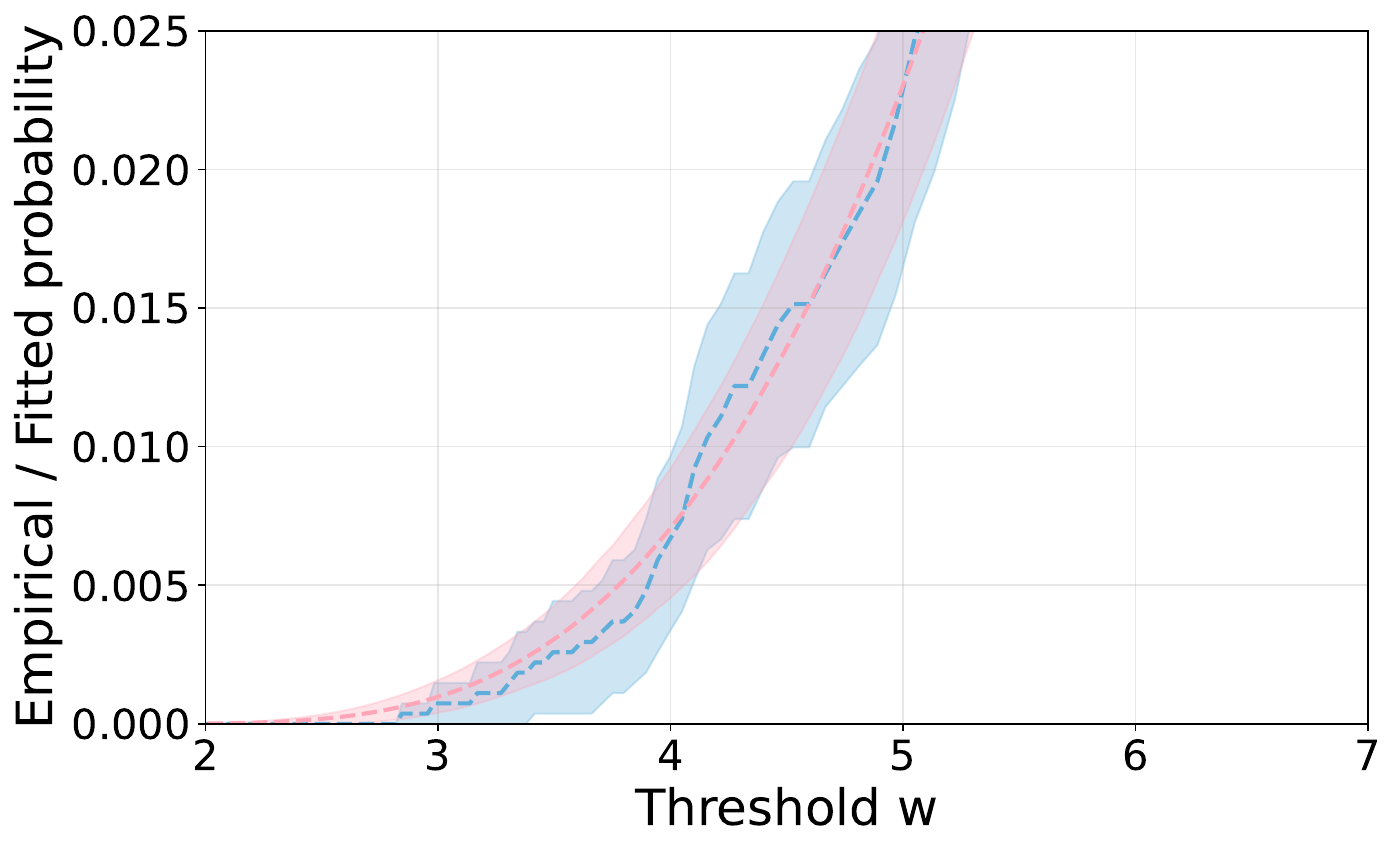}
        \vspace{-.9cm}
        \caption*{(a) $\alpha = 0.5$}
        \vspace{.3cm}
        \includegraphics[width=\linewidth]{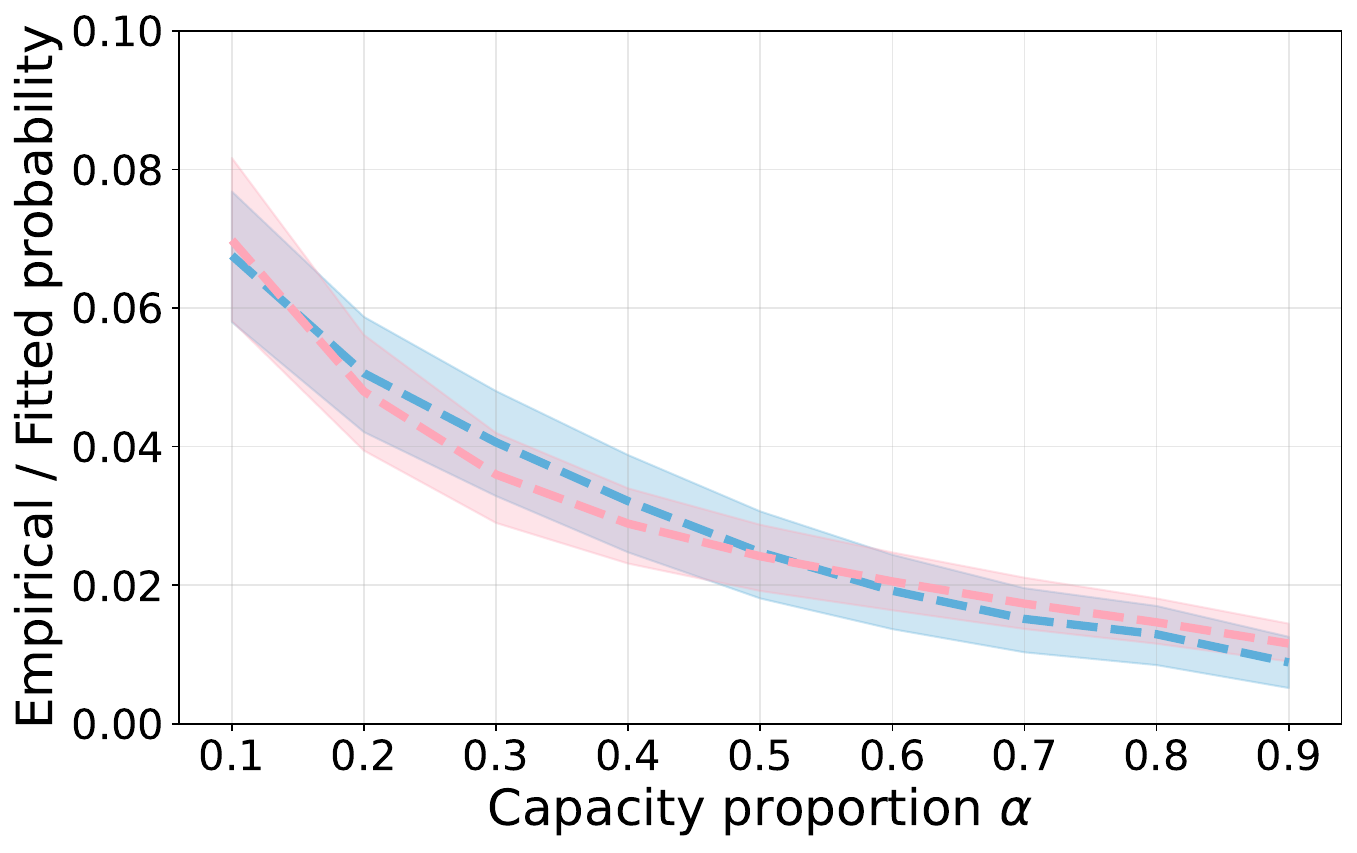}
        \vspace{-.9cm}
        \caption*{(c) $w = 5$}
    \end{minipage}
    \hspace{.2cm}
    \begin{minipage}{0.45\textwidth}
        \centering
        \includegraphics[width=\linewidth]{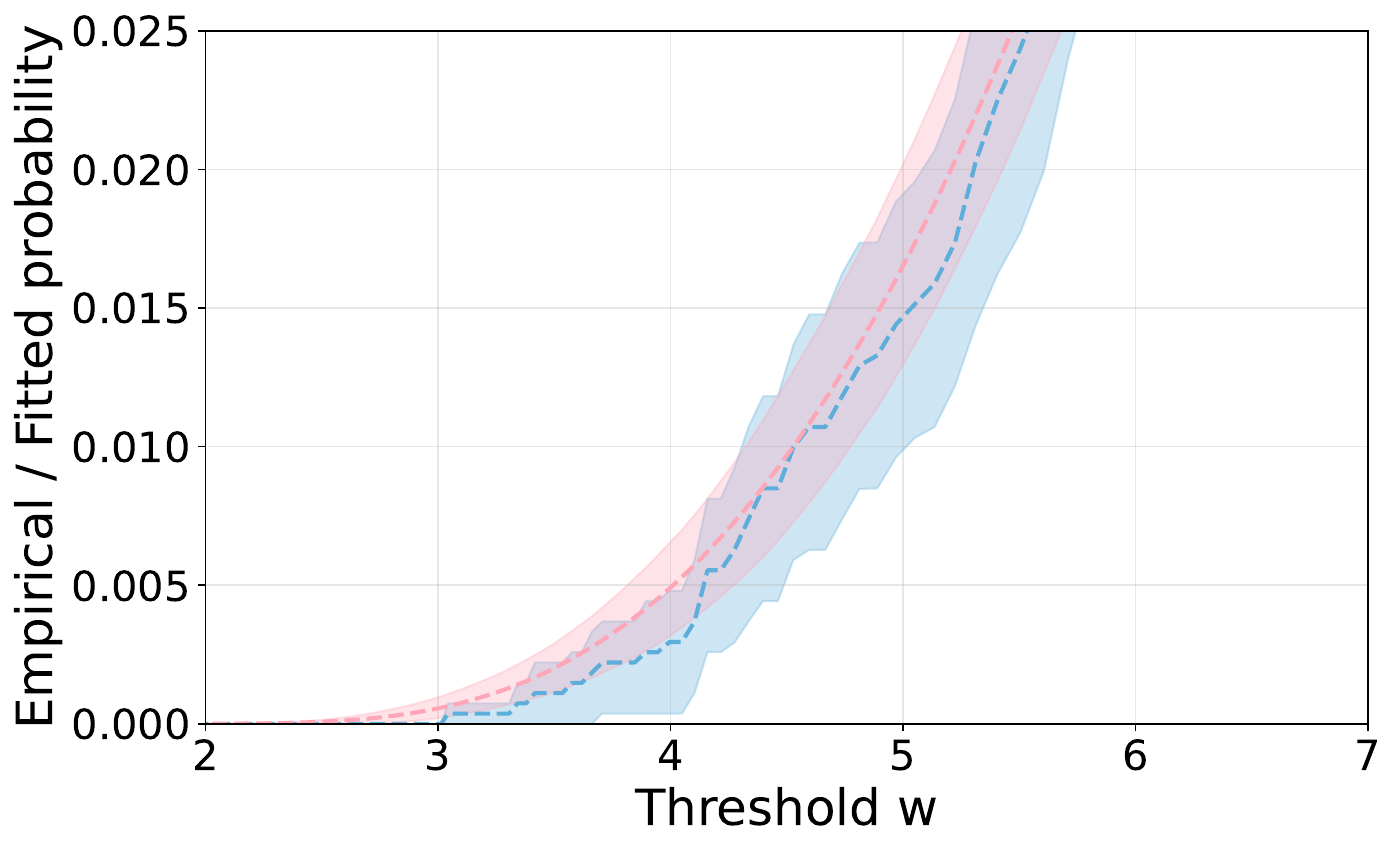}
        \vspace{-.9cm}
        \caption*{(b) $\alpha = 0.7$}
        \vspace{.3cm}
        \includegraphics[width=\linewidth]{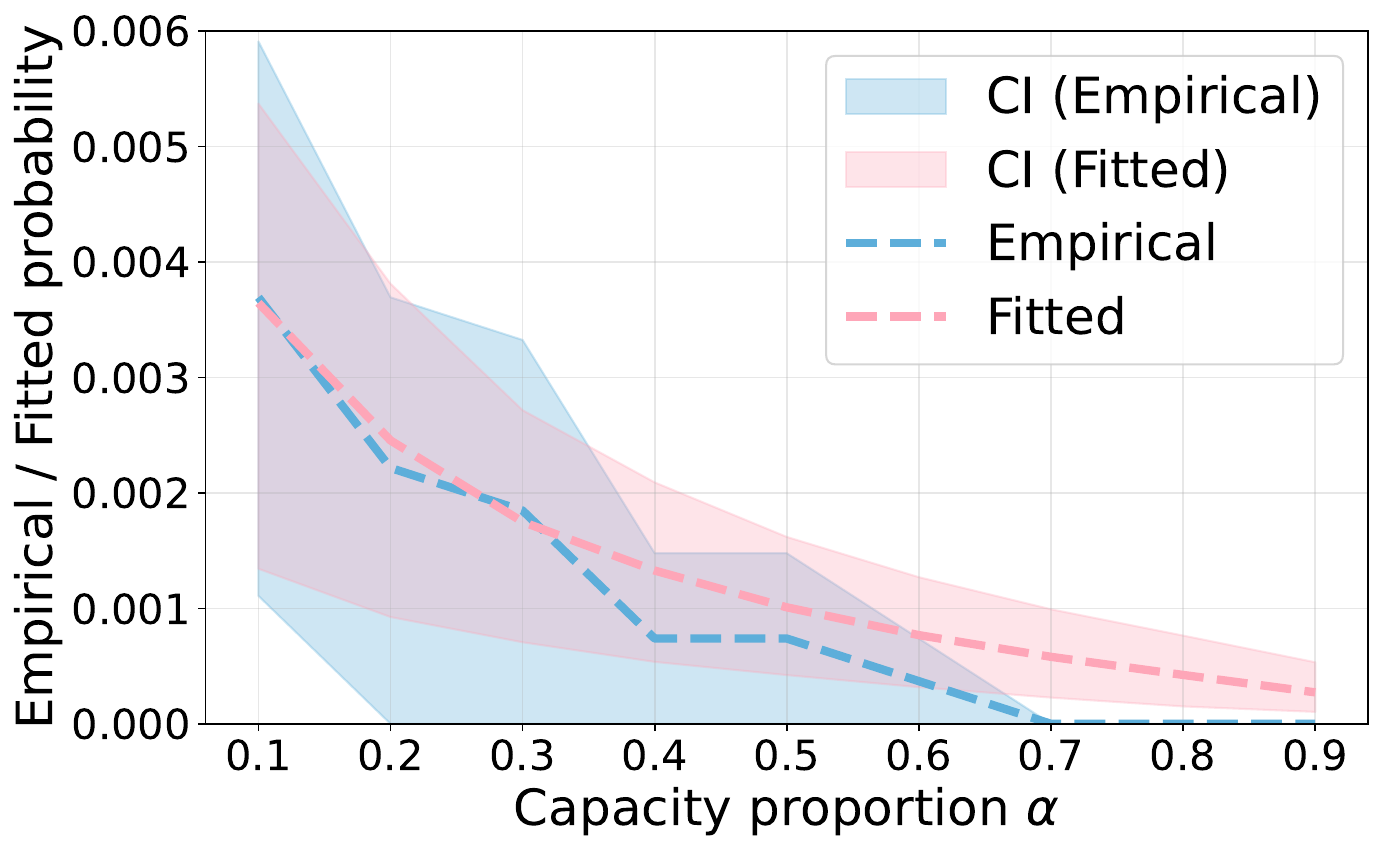}
        \vspace{-.9cm}
        \caption*{(d) $w = 3$}
    \end{minipage}
    \caption{Estimated probabilities $p_\alpha(\bm w)$ from \eqref{eq:target_parameter_wind}, with $\bm w=(w,\dots, w)^\top$, as a function of $w$ and for fixed $\alpha \in \{0.5, 0.7\}$ (Panel (a) and (b)), and as a function of $\alpha$ for fixed $w \in \{3,5\}$ (Panel (c) and (d)). The case $w=3$ corresponds to zero aggregate daily wind power production for turbines accounting for at least $\alpha \cdot 100\%$ of total capacity in the study region.}
    \label{fig:p_fitted_p_emp}
\end{figure}

The results show good agreement between the empirical and model-based estimators when the threshold $w$ is not too small and $\alpha$ is not too large, corresponding to estimated exceedance probabilities of about $0.01$ or higher. In this range, the $95\%$ confidence intervals largely overlap or coincide.
Larger discrepancies emerge for exceedance probabilities below $0.001$, which corresponds to roughly three events among the $n = 2{,}708$ observations. In this regime, empirical estimators are known to be unreliable, whereas the model-implied estimators still permit reasonable extrapolation. In particular, the empirical exceedance probability $\hat p_\alpha^{\mathrm{emp}}(\bm {3})$ is zero for all $\alpha \ge 0.7$, while the model-based estimators remain non-zero in the entire range under consideration. For instance, for $\alpha = 0.9$ we obtain $\hat p_\alpha(\bm {3}) = 0.000272$, meaning that zero aggregate daily wind power production for turbines accounting for at least $90\%$ of total capacity is expected to occur about once every $1/\hat p_\alpha(\bm {3}) \approx 3{,}676$ winter days (for the current wind park).

\section{Conclusion and Outlook}
\label{sec:conclusion}

In summary, we have introduced a latent linear factor framework for multivariate extremes that yields an explicit and interpretable form of dimension reduction. The model admits a clear geometric characterization, allows for sparse loading structures, and is identifiable under mild conditions. A constructive recovery procedure based on a margin-free tail pairwise dependence matrix enables practical estimation and facilitates applications in high-dimensional settings, as illustrated by a case-study on wind energy production.

Several directions for future research arise from our work. A natural next step is the development of flexible parametric models for the latent spectral measure, enabling more structured and parsimonious inference in applications. From a theoretical perspective, establishing statistical guarantees for the proposed recovery and estimation procedures remains an important challenge. In particular, deriving consistency and convergence results for the rank-based empirical tail pairwise dependence matrix, which underpins our methodology, is a key open problem.

\section*{Declarations}

\begin{itemize}
\item\textbf{Ethical Approval.} Not applicable.
\item\textbf{Availability of supporting data. } Code for the case study is available in this Github repository \url{https://github.com/Aleboul/dim_ref_llfm}. The datasets used in Section~\ref{sec:case-study} are also publicly available on the websites mentioned in the main text.
\item\textbf{Competing interests.} The authors declare that they have no conflict of interest.
\item\textbf{Funding.} Both authors were supported by the Deutsche For\-schungsgemeinschaft (DFG, German Research Foundation; Project-ID 520388526;  TRR 391:  Spatio-temporal Statistics for the Transition of Energy and Transport) which is gratefully acknowledged. Calculations for this publication were performed on the HPC cluster Elysium of the Ruhr University Bochum, subsidised by the DFG (INST 213/1055-1).
\item \textbf{Use of AI.} During the preparation of this manuscript, the authors used ChatGPT 5.2 to improve the language and clarity of the text and to improve the figures. The tool also suggested the construction of the norm $\|\cdot\|_v$ in the proof of Theorem~\ref{thm:identifieability-tail-dependence}. The authors carefully reviewed and revised all outputs and take full responsibility for the content of the publication.
\end{itemize}

\appendix

\section{Latent linear factor modeling for regularly varying random vectors}
\label{sec:latent-linear-factor-models-regular-variation}

The developments in Sections~\ref{sec:latent-linear-factor-models} and \ref{sec:estimating-latent-linear-factor-models} were based on the assumption that the random vector $\bY$ obtained by marginal standardization of an observable vector $\bX$ to unit Pareto distributions is regularly varying, and that the spectral measure of $\bY$ (i.e., the spectral dependence measure of $\bX$) is of a certain latent linear factor form. In this section, we adopt a different perspective, which is more general from a probabilistic perspective, but more restrictive from a statistical perspective:
\begin{compactitem}
    \item On the probabilistic side, we do not restrict ourselves to spectral measure models that arise for regularly varying random vectors whose marginal distributions are fixed and all equal to the unit Pareto distribution. Instead, this section will be based on the assumption that the observable $\bX$ itself is regularly varying (of general index $\alpha>0$, and with otherwise arbitrary marginal distributions), and that the spectral measure $\Phi_\bX$ of $\bX$ is of latent linear factor form. Unlike the spectral dependence measure $\Psi_\bX$, the spectral measure $\Phi_\bX$ is not needed to satisfy any moment constraints, and we also have to take care of the additional parameter $\alpha$ for model identifiability.
    \item On the statistical side, the assumption that an observable random vector $\bX$ is regularly varying is very restrictive, in particular in high dimensions. Indeed, non-trivial and meaningful spectral measures will only be obtained if all (or at least a large proportion) of the coordinates $X_j$ of $\bX$ have the same tail index $\alpha_j=\alpha$. This can be the case in some financial applications \citep{kiriliouk2022estimating}, but in general, is appears very restrictive. 
\end{compactitem}
In view of the larger mathematical generality, all results obtained in Section~\ref{sec:latent-linear-factor-models} for spectral dependence measures will more or less be immediate corollaries of the results in this section.

As in Section~\ref{sec:latent-linear-factor-models}, we start from a motivating model that is of plain linear factor form. Throughout this section, unless stated otherwise, $\R^d$ will be equipped with some arbitrary fixed norm $\| \cdot \|$, and $\R^K$ (with $K\in[d]$) will be equipped with some fixed norm $| \cdot |$. The respective unit spheres are denoted by $\simplex{d-1}=\simplex{d-1}(\|\cdot\|)=\{ \bx \in \R^d: \| \bx\| = 1\}$ and $\simplex{K-1}=\simplex{K-1}(|\cdot|)=\{ \bz \in \R^K: | \bz| = 1\}$.

\begin{proposition}
\label{prop:genuine-linear-factor-dependence-regular-variation}
    Consider the random vector
    \[
    \bX := \load \bZ + \bE,
    \]
    where $\load \in \R^{d \times K}$ is a deterministic full column rank matrix,
    where $\bZ \in \R^K$ is regularly varying with index $\alpha>0$ and spectral measure $\Phifactor$, and where $\bE \in \R^d$ is a light-tailed noise in the sense that $\Prob(\|\bE\| > x)=o(x^{-\alpha})$ for $x \to \infty$.
    Then $\bX$ is regularly varying with index $\alpha$ and with spectral measure $\Phi_\bX$ given by $\Phi_\bX=\Phi_{\alpha, K, \load,\Phifactor}$, where
    \begin{align}
    \label{eq:phi-linear-factor}
    \Phi_{\alpha, K, \load,\Phifactor} = f_\load \, \# \, \Phifactor^{\alpha,\load} = \Phifactor^{\alpha,\load}(f_\load^{-1}(\cdot)),
    \end{align}
    with $f_\load:\simplex{K-1} \to \simplex{d-1}$ defined by $f_\load(\bz) = \load \bz / \| \load \bz\|$ and with $\Phifactor^{\alpha, \load}$ the (tilted) probability measure on $\simplex{K-1}$ defined by
    \begin{align}
    \label{eq:cZB}
    \Phifactor^{\alpha,\load}(\diff \bz) 
    &= c_{\alpha, \load, \Phifactor}^{-1} \, \| \load \bz \|^\alpha \,  \Phifactor(\diff z) 
    \quad \text{ with } \quad
    c_{\alpha, \load, \Phifactor} = \int_{\simplex{K-1}}\| \load \bz \|^\alpha \, \Phifactor(\diff z).
\end{align}
\end{proposition}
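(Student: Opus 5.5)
The plan is to show first that $\load\bZ$ is regularly varying and to identify its exponent measure. The noise $\bE$ is then removed by a standard perturbation argument, and the final step reads off the spectral measure through the polar decomposition \eqref{eq:polar-decomposition}.

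\textbf{Step 1 (linear image).} Let $g$ and $\nu_\bZ$ be the auxiliary function and exponent measure of $\bZ$ from \eqref{eq:rv}. Because $\load$ has full column rank, the linear map $h(\bz)=\load\bz$ is continuous and injective, and $\|\load\bz\|\ge c|\bz|$ for some $c>0$ (compactness of $\simplex{K-1}$ together with injectivity). Consequently $h^{-1}$ maps sets bounded away from zero in $\Eb_0^d$ to sets bounded away from zero in $\Eb_0^K$. For $f\in\mathcal C(\Eb_0^d)$, the composition $f\circ h$ therefore lies in $\mathcal C(\Eb_0^K)$. This gives
\[
g(x)\Exp[f(x^{-1}\load\bZ)] = g(x)\Exp[(f\circ h)(x^{-1}\bZ)] \to \int f\circ h\,\diff\nu_\bZ ,
\]
so $\load\bZ$ is regularly varying with exponent measure $\nu_\bZ\circ h^{-1}$, which is nonzero and $(-\alpha)$-homogeneous.

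\textbf{Step 2 (noise).} I will show that $g(x)\Prob(\|\bE\|>\eps x)\to0$ for every $\eps>0$; this holds because $g$ is regularly varying of index $\alpha$ and $\Prob(\|\bE\|>x)=o(x^{-\alpha})$. Fix $f\in\mathcal C(\Eb_0^d)$ with support in $\{\|\bx\|>\delta\}$. I will first take $f$ Lipschitz, which suffices by the portmanteau-type characterization of $\Mb$-convergence in \cite{kulik2020heavy}. Split the difference $g(x)\Exp[f(x^{-1}\bX)-f(x^{-1}\load\bZ)]$ according to whether $\|\bE\|\le\eps x$ or $\|\bE\|>\eps x$.
\begin{itemize}
\item On $\{\|\bE\|>\eps x\}$, the contribution is at most $2\sup|f|\, g(x)\Prob(\|\bE\|>\eps x)\to0$.
\item On $\{\|\bE\|\le\eps x\}$, the contribution is at most $\mathrm{Lip}(f)\,\eps\, g(x)\Prob(\|\load\bZ\|>(\delta-\eps)x)$, whose limsup is $O(\eps)$ by Step 1.
\end{itemize}
Letting $\eps\downarrow0$ shows that $\bX$ is regularly varying with the same $g$ and $\nu_\bX=\nu_\bZ\circ h^{-1}$.

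\textbf{Step 3 (spectral measure).} Let $\varsigma_\bZ=\nu_\bZ(\{|\bz|>1\})$, and write $\nu_\bZ$ in polar form as in \eqref{eq:polar-decomposition}: $\bz=r\bm\theta$ with $\nu_\bZ\circ T^{-1}=\varsigma_\bZ\,\nu_\alpha\otimes\Phifactor$. For a Borel set $A\subset\simplex{d-1}$, the point $\load(r\bm\theta)=r\|\load\bm\theta\|\,f_\load(\bm\theta)$ has norm $r\|\load\bm\theta\|$ and direction $f_\load(\bm\theta)$. Hence
\[
\nu_\bX(\{\|\bx\|>1,\ \bx/\|\bx\|\in A\})
=\varsigma_\bZ\int_{\simplex{K-1}}\indicator{f_\load(\bm\theta)\in A}\,\nu_\alpha\big((\|\load\bm\theta\|^{-1},\infty)\big)\,\Phifactor(\diff\bm\theta)
=\varsigma_\bZ\int \indicator{f_\load(\bm\theta)\in A}\,\|\load\bm\theta\|^\alpha\,\Phifactor(\diff\bm\theta).
\]
Taking $A=\simplex{d-1}$ gives $\varsigma_\bX=\varsigma_\bZ\,c_{\alpha,\load,\Phifactor}$, which is strictly positive because $\|\load\bm\theta\|\ge c>0$. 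Dividing by $\varsigma_\bX$ yields $\Phi_\bX=f_\load\#\Phifactor^{\alpha,\load}$, as claimed in \eqref{eq:phi-linear-factor}.

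The main obstacle is the noise step. It requires a careful reduction to Lipschitz test functions, or alternatively a direct treatment through sets $\{\|\bx\|>r,\ \bx/\|\bx\|\in A\}$ with $\Phi_\bX$-continuity boundaries via \eqref{eq:spectral_measure}. Steps 1 and 3 are essentially bookkeeping.
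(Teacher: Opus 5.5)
Your route is the paper's: strip off $\bE$, push $\nu_\bZ$ forward under $h(\bz)=\load\bz$, and read off $\Phi_\bX$ in polar coordinates. Your Step~3 is essentially the paper's computation verbatim. The only difference is that you prove by hand what the paper cites, namely Proposition~2.1.12 in \cite{kulik2020heavy} and Lemma~\ref{lem:vanishing_noise}.

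The problem is the justification in Step~2. Write $g(x)=x^\alpha\ell(x)$ with $\ell$ slowly varying. Then the hypothesis $\Prob(\|\bE\|>x)=o(x^{-\alpha})$ only yields $g(x)\Prob(\|\bE\|>\eps x)=o(\ell(x))$, which does not tend to zero if $\ell(x)\to\infty$. No argument can close this, because with the hypothesis read literally the statement is false. Take $d=2$, $K=1$, $\load=(1,1)^\top$, and $|\cdot|$ the absolute value on $\R$. Let $Z\ge0$ with $\Prob(Z>x)=1/(x\log x)$ for large $x$, so $g(x)=x\log x$ and $\Phifactor$ is the point mass at $1$. Let $\bE=(E_1,0)^\top$ with $E_1\ge0$ independent of $Z$ and $\Prob(E_1>x)=1/(x\sqrt{\log x})=o(x^{-1})$. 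Since $\Prob(Z>\eps x)/\Prob(E_1>x)\to0$ for every $\eps>0$, we get $Z/\|\bX\|\to0$ in probability given $\|\bX\|>x$. Hence $\Phi_\bX$ is the point mass at $(1,0)^\top/\|(1,0)^\top\|$, not at $f_\load(1)=(1,1)^\top/\|(1,1)^\top\|$.

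The paper's proof has exactly the same hole, since Lemma~\ref{lem:vanishing_noise} is stated under the same $o(x^{-\alpha})$ condition. The correct hypothesis is $\Prob(\|\bE\|>x)=o(1/g(x))$, equivalently $o(\Prob(|\bZ|>x))$. The stated condition implies it whenever $g(x)/x^\alpha$ stays bounded, for instance $g(x)=x$ in the Pareto setting of Proposition~\ref{prop:genuine-linear-factor-dependence}, where in any case $\bE=\bm 0$. Under this hypothesis your Step~2 goes through exactly as written.

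There is also a smaller slip in Step~1. That $h^{-1}$ maps sets bounded away from zero to sets bounded away from zero follows from the \emph{upper} bound $\|\load\bz\|\le\|\load\|_{\mathrm{op}}|\bz|$, which holds for any linear map. It does not follow from the lower bound $\|\load\bz\|\ge c|\bz|$. Full column rank and the lower bound are what give $\varsigma_\bX=\varsigma_\bZ\,c_{\alpha,\load,\Phifactor}>0$, which is exactly where the paper uses them.
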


Similar to the approach in Section~\ref{sec:latent-linear-factor-models}, the previous result gives rise to the following definition.

\begin{definition}[Regular variation of latent linear $K$-factor form]
\label{def:linear-factor-model-rv}
Let $\bX \in \R^d$ be regularly varying with index $\alpha>0$ and let $K \in [d]$. Fix some norm $\| \cdot \|$ on $\R^d$, and let $\Phi_\bX$ be the spectral measure of $\bX$ with respect to $\|\cdot \|$. The spectral measure $\Phi_\bX$ is said to be of \textit{latent linear $K$-factor form} (with respect to the norm $| \cdot |$ on $\R^K$) if there exists a full column rank matrix $\load \in \R^{d \times K}$, the \textit{loading matrix}, and a spectral measure $\Phifactor$ on $\simplex{K-1}$ such that $\Phi_\bX=\Phi_{\alpha, K,\load,\Phifactor}$ from \eqref{eq:phi-linear-factor}.
\end{definition}

By construction, the model in Definition~\ref{def:linear-factor-model-rv} depends on the choice of two norms, one on the ambient space $\R^d$ and one on the latent space $\R^K$. The behavior of the model under changes of theses norms is discussed in Lemma~\ref{lem:geometry}.

\begin{definition}[Tail spectral covariance matrix]
\label{def:tail-pairwise-dependence-matrix-regular-variation}
    Let $\bX \in \R^d$ be regularly varying. The \textit{tail spectral covariance matrix} (TSCM) of $\bX$ (with respect to the norm $\| \cdot\|$) is defined as 
    \begin{align} \label{eq:definition-tpdm2}
    \shittyTPDM_\bX = \int_{\simplex{d-1}} \bx \bx^\top \Phi_\bX(\diff \bx),
    \end{align}
    where $\Phi_\bX$ denotes the spectral measure of $\bX$ with respect to the norm $\| \cdot\|$. 
\end{definition}

Unlike other authors (e.g., \citealp{larsson2012extremal}), we refrain from calling $\shittyTPDM_\bX$ a tail pairwise dependence matrix: unlike $\TPDM_\bX$, it depends on the marginal behavior of $\bX$ and therefore does not constitute a genuine measure of dependence. 
Note that, in general, $\shittyTPDM_\bX \neq \TPDM_\bX$.

\begin{proposition}[Tail spectral covariance matrix for latent linear factor regular variation]
\label{prop:tpdm-latent-factor-linear-model-regular-variation}
    Suppose $\bX \in \R^d$ is regularly varying with index $\alpha>0$ and spectral measure $\Phi_\bX$ of latent linear $K$-factor form, i.e., $\Phi_\bX=\Phi_{\alpha, K,\load,\Phifactor}$ from \eqref{eq:phi-linear-factor}. Then, with 
    $c_{\alpha, \load, \Phifactor}$ from \eqref{eq:cZB}, we have
    \begin{align} \label{eq:B_Z}
    \shittyTPDM_{\bX} 
    = 
    \load C_{\alpha, \load, \Phifactor} \load^\top 
    \quad \text{ with } \quad 
    C_{\alpha, \load, \Phifactor} 
    = 
    c_{\alpha, \load, \Phifactor}^{-1} \int_{\simplex{K-1}} \frac{\bz \bz^\top}{\| \load \bz\|^{2-\alpha}} \, \Phifactor(\diff \bz).
    \end{align}
\end{proposition}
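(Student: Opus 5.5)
The identity follows from a change of variables under the pushforward in \eqref{eq:phi-linear-factor}. By Definition~\ref{def:tail-pairwise-dependence-matrix-regular-variation}, $\shittyTPDM_\bX = \int_{\simplex{d-1}} \bx\bx^\top \Phi_\bX(\diff \bx)$. Since $\Phi_\bX = f_\load \,\#\, \Phifactor^{\alpha,\load}$, I apply the transformation theorem for image measures entrywise to the bounded continuous integrand $\bx \mapsto x_j x_\ell$ on the compact sphere $\simplex{d-1}$. This gives
\[
\shittyTPDM_\bX = \int_{\simplex{K-1}} f_\load(\bz) f_\load(\bz)^\top \, \Phifactor^{\alpha,\load}(\diff \bz)
= \int_{\simplex{K-1}} \frac{\load \bz \bz^\top \load^\top}{\|\load \bz\|^2}\, \Phifactor^{\alpha,\load}(\diff \bz).
\]
Here $f_\load$ is well defined and continuous on $\simplex{K-1}$ because $\load$ has full column rank, so $\load\bz \neq \bm 0$ for $\bz \in \simplex{K-1}$.

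Next I substitute the density from \eqref{eq:cZB}, namely $\Phifactor^{\alpha,\load}(\diff\bz) = c_{\alpha,\load,\Phifactor}^{-1}\|\load\bz\|^\alpha \Phifactor(\diff\bz)$. The integrand becomes $c_{\alpha,\load,\Phifactor}^{-1}\,\load \bz\bz^\top\load^\top \,\|\load\bz\|^{\alpha-2}$. Linearity of the integral lets me pull the deterministic matrices $\load$ and $\load^\top$ outside, which yields $\load C_{\alpha,\load,\Phifactor}\load^\top$ with $C_{\alpha,\load,\Phifactor}$ exactly as in \eqref{eq:B_Z}.

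The only point that needs care is that every quantity is finite and well defined. I will check three things.
\begin{compactenum}[(i)]
\item By compactness of $\simplex{K-1}$ and continuity of $\bz\mapsto\|\load\bz\|$, there exist constants with $0<m\le \|\load\bz\|\le M<\infty$ for all $\bz \in \simplex{K-1}$. The lower bound uses that $\load$ has full column rank.
\item By (i), $c_{\alpha,\load,\Phifactor}\in[m^\alpha,M^\alpha]$ is strictly positive and finite.
\item By (i) and boundedness of $\bz\bz^\top$ on $\simplex{K-1}$, the integrand defining $C_{\alpha,\load,\Phifactor}$ is bounded, even when $2-\alpha<0$.
\end{compactenum}
No other obstacle arises.
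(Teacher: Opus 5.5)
Your proposal is correct and follows the paper's proof: change of variables under the pushforward $f_\load \,\#\, \Phifactor^{\alpha,\load}$, then substituting the tilted density from \eqref{eq:cZB} and pulling $\load$ and $\load^\top$ out of the integral. Your finiteness checks, with $\|\load\bz\|$ bounded above and away from zero on the compact sphere by full column rank, are a harmless addition that the paper leaves implicit.
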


The subsequent identifiability result is based on the following parameter space which naturally extends $\paraspacetaildep$ from \eqref{eq:parameter-space-tail-dependence}:
\begin{multline}
    \label{eq:parameter-space-regular-variation}
    \paraspacespectral
    = 
    \big\{ (\alpha, K, \load,  \Phifactor) \mid K \in [d] \text{ and } 
     (\alpha, \load, \Phifactor) \in  (0,\infty) \times \setloadpurerv \times \setspectralK{|\cdot|} 
     \\ 
     \textrm{ satisfies }  
     C_{\alpha, \load, \Phifactor} \in \diagdom{K} \big\},
\end{multline}
where $\setspectralK{|\cdot|}$ denotes the set of spectral measures on $\simplex{K-1}=\simplex{K-1}(|\cdot|)$ (which coincides with the set of probability measure on $\simplex{K-1}$; see Lemma 2.2.2 in \citealp{kulik2020heavy}) and where
\begin{align}
\label{eq:pure-loading-matrices-rv}
\setloadpurerv = \big\{ \load \in \R^{d \times K} \mid \forall j \in [d]: \|\load_{j,\cdot} \|_1 \le 1 ~~\text{ and }~~ \forall a \in [K] \,\exists j \in [d]: \load_{j,\cdot} \in \{ \pm \be_a^\top\} \big\}.
\end{align}

As in Section~\ref{sec:latent-linear-factor-models}, 
two pairs $(\load, \Phifactor)$ and $(\load', \Phifactor')$ from $\setloadpurerv  \times \setspectralK{|\cdot|}$ are called equivalent, notation  $(\load, \Phifactor) \sim (\load', \Phifactor')$ if there exists a signed permutation matrix $P \in \{-1,0,1\}^{K \times K}$ (i.e., $P=\Pi \cdot D$ with $\Pi$ a permutation matrix and $D$ a diagonal matrix with diagonal entries $\pm1$)
with $P^{-1}=P^\top$ such that
\[
\load = \load'P \quad \text{ and } \quad  \Phifactor = L_{P^\top}\, \#\, \Phifactor' = \Phifactor'(L_{P^\top}^{-1}(\cdot)),
\]
where $L_{P^\top}:\simplex{K-1} \to \simplex{K-1}$ is defined by $ \bz \mapsto L_{P^\top}(\bz) = P^\top \bz$. In terms of random variables, the second equation in the last display means that $\bm \Lambda =_d P^\top \bm \Lambda'$ for $\bm \Lambda \sim \Phifactor$ and $\bm \Lambda' \sim \Phifactor'$, and together with the first equation, we get that $\load \bm \Lambda =_d \load'P P^\top \bm \Lambda' = \load'\bm \Lambda'$. We can now state the formal result.

\begin{theorem}
\label{thm:identifieability-regular-variation}
Recall $\Phi_\bX=\Phi_{\alpha, K,\load, \Phifactor}$ from \eqref{eq:phi-linear-factor},
and let $\setspectrald{\|\cdot\|} $ denote the set of spectral measures on $\simplex{d-1}$.
The mapping 
\[
T: \paraspacespectral \mapsto (0,\infty) \times \setspectrald{\|\cdot\|}, \quad (\alpha, K, \load, \Phifactor) \mapsto (\alpha, \Phi_{\alpha, K,\load, \Phifactor})
\] 
is injective up to signed label permutations, that is, 
$T(\alpha^{(1)}, K^{(1)},\load^{(1)}, \Phifactor^{(1)}) = T(\alpha^{(2)}, K^{(2)},\load^{(2)}, \Phifactor^{(2)})$ implies $\alpha^{(1)}=\alpha^{(2)}$, $K^{(1)}=K^{(2)}$ and $(\load^{(1)},\Phifactor^{(1)}) \sim (\load^{(2)},\Phifactor^{(2)})$.
\end{theorem}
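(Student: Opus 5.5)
Since $T$ returns $\alpha$ as its first component, $\alpha^{(1)}=\alpha^{(2)}=:\alpha$ is immediate. It then suffices to show that $\Phi_\bX=\Phi_{\alpha,K,\load,\Phifactor}$, together with $\alpha$, determines $(K,\load,\Phifactor)$ up to a signed permutation. I would follow the constructive route of Remark~\ref{rem:reconstructing-parameters-tail-dependence}, adapted to signs. The spectral measure $\Phi_\bX$ determines the TSCM $\shittyTPDM=\shittyTPDM_\bX$ by Definition~\ref{def:tail-pairwise-dependence-matrix-regular-variation}. By Proposition~\ref{prop:tpdm-latent-factor-linear-model-regular-variation} we have $\shittyTPDM=\load C\load^\top$ with $C=C_{\alpha,\load,\Phifactor}\in\diagdom{K}$. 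Writing $\beta_j=\load_{j\cdot}^\top$, this gives $\shittyTPDM(j,\ell)=\beta_j^\top C\beta_\ell$.

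The first and main step is a signed analogue of Lemma~\ref{lem:identifiability_pure}, applied to the entrywise absolute value $|\shittyTPDM|$. Put $m_j=\max_\ell|\shittyTPDM(j,\ell)|$ and $S_j=\argmax_\ell|\shittyTPDM(j,\ell)|$. The key inequality is the following. For any $\beta,\gamma$ with $\|\beta\|_1,\|\gamma\|_1\le1$, the map $\gamma\mapsto|\beta^\top C\gamma|$ is convex, so its maximum over the $\ell_1$-ball is attained at some $\pm\be_a$. Moreover,
\[
|\be_a^\top C\gamma|\le\sum_b|\gamma_b|\,|C(a,b)|\le C(a,a)\|\gamma\|_1\le C(a,a),
\]
with equality iff $\gamma=\pm\be_a$. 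Strictness follows from $\Delta(C)>0$, which gives $|C(a,b)|<C(a,a)$ for $b\neq a$, and from $C(a,a)>0$ by positive definiteness.

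From this I would derive three facts:
\begin{compactenum}[(i)]
\item For every $j$, $m_j=\max_a|\beta_j^\top C\be_a|$, and this maximum is attained at pure indices $\ell$.
\item If $j\in I_a$, i.e.\ $\beta_j=\pm\be_a$, then $m_j=C(a,a)$ and $S_j=I_a$. This uses $\Delta(C)>0$ for $C(a,a)>|C(a,b)|$, and also for distinguishing $C(a,a)$ from $C(b,b)$ via $C(a,a)\wedge C(b,b)>|C(a,b)|$.
\item If $j$ is not pure, then any pure $\ell\in S_j$, say with $\beta_\ell=\pm\be_a$, satisfies $m_j=|\be_a^\top C\beta_j|<C(a,a)=m_\ell$.
\end{compactenum}
Hence $I(\load)=\{j:\ m_j=m_\ell\ \forall \ell\in S_j\}$, and the classes of $\ell\in S_j$ on $I(\load)$ give $K$ and the partition $(I_a)_{a\in[K]}$ up to relabeling. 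I expect this step to be the main obstacle: the edge cases $\|\beta_j\|_1<1$ and ties among diagonal entries must be handled carefully.

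For signs, fix representatives $j_a\in I_a$ and declare $s_{j_a}=+1$. For other $\ell\in I_a$, $\shittyTPDM(j_a,\ell)=\pm C(a,a)$ reveals the sign of $\beta_\ell$ relative to $\beta_{j_a}$. The remaining freedom is exactly the diagonal matrix $D$ in the signed permutation. With these choices, $C$ is recovered up to $P$ via $C(a,b)=\shittyTPDM(j_a,j_b)$. For non-pure $j$, $\beta_j$ is the unique solution of $C\beta_j=(\shittyTPDM(j_1,j),\dots,\shittyTPDM(j_K,j))^\top$, which exists because $C$ is positive definite. Thus $\load$ is determined up to $\load'P$.

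Finally, to recover $\Phifactor$, note that $f_\load(\bz)=\load\bz/\|\load\bz\|$ is injective on $\simplex{K-1}$. Indeed, $\load$ has full column rank, so $f_\load(\bz)=f_\load(\bz')$ forces $\bz'=c\bz$ with $c>0$, and then $|\bz|=|\bz'|=1$ gives $c=1$. It is also continuous, so its inverse $g(\bx)=\load^+\bx/|\load^+\bx|$ on the range is measurable. Therefore $\Phifactor^{\alpha,\load}=g\,\#\,\Phi_\bX$ is determined. Since $\alpha$ is known, untilting gives
\[
\Phifactor(\diff\bz)=\frac{\|\load\bz\|^{-\alpha}\,\Phifactor^{\alpha,\load}(\diff\bz)}{\int\|\load\bz'\|^{-\alpha}\,\Phifactor^{\alpha,\load}(\diff\bz')}.
\]
Here $\|\load\bz\|>0$ on the sphere, so the normalization is finite and positive.

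Replacing $\load$ by $\load'P$ transforms this reconstruction into $L_{P^\top}\#\Phifactor'$, which yields $(\load^{(1)},\Phifactor^{(1)})\sim(\load^{(2)},\Phifactor^{(2)})$ and $K^{(1)}=K^{(2)}$.
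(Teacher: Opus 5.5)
Your proposal is correct and follows the paper's proof: recover the TSCM $\shittyTPDM=\load C\load^\top$, identify $I(\load)$, $K$ and the partition $(I_a)_{a\in[K]}$ via the row-maximum argument of Lemma~\ref{lem:identifiability_pure} (your convexity/vertex argument is a repackaging of that lemma), fix signs, solve $C\bm\beta=\bm\theta$ for the non-pure rows, and recover $\Phifactor$ from injectivity of $f_\load$ followed by untilting. The differences are cosmetic. You phrase the argument as a reconstruction map rather than a comparison of two parameter triples. Your explicit use of $\shittyTPDM(j_a,\ell)=\pm C(a,a)$ to fix relative signs within each $I_a$ spells out a step the paper passes over quickly, since its sign $s_a$ is a priori $j$-dependent.
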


\section{Proofs for Section~\ref{sec:latent-linear-factor-models}}

\begin{proof}[Proof of Proposition~\ref{prop:genuine-linear-factor-dependence}]
A straightforward calculation shows that $\bZ$ is regularly varying with index $\alpha=1$ and with spectral measure $\Psifactor$ with respect to the 1-norm (moreover, its margins are standard Pareto, whence its spectral dependence measure is also given by $\Psifactor$).
We may hence apply Proposition~\ref{prop:genuine-linear-factor-dependence-regular-variation} with $\bE=\bm 0$ and $\| \cdot \|$ an arbitrary fixed norm on $\R^d$ and $|\cdot|$ the 1-norm to obtain that $\bY'$ has spectral measure (with respect to $\| \cdot \|$) $\Phi_{\bY'} = \Psifactor^\loadn(f^{-1}_{\loadn}(\cdot) )$ as in \eqref{eq:spectral-dependence-measure-factor}.  
Next, the margins of $\bY'$ are standard Pareto: indeed, since $R$ and $\bm \Lambda$ are independent with $\Exp[\Lambda_a]=K^{-1}$ and since $\loadn$ has row-sums equal to one, we have, for any $j \in [d]$,
\begin{align*}
\Prob(Y_j'>x) 
= 
\Prob\Big( KR \sum_{a\in [K]} \loadn_{ja} \Lambda_a>x\Big)
&=
\Exp\Big[ \Prob\Big( R > \Big(K\sum_{a\in [K]} \loadn_{ja} \Lambda_a\Big)^{-1} x \, \Big| \, \bm \Lambda \Big) \Big]
\\&=
\frac{K}x \Exp\Big[ \sum_{a\in [K]} \loadn_{ja} \Lambda_a \Big]= \frac1x, \qquad x \ge 1.
\end{align*}
As a consequence, the spectral dependence measure of $\bY'$ with respect to $\| \cdot \|$ is also given by $\Phi_{\bY'}$, that is, $\Psi_{\bY'}=\Phi_{\bY'}$.
Apply Lemma~\ref{lem:stdf-spectral-dependence-measure-factor-model} to conclude the formula for $L$.
\end{proof}

\begin{proof}[Proof of Lemma~\ref{lem:stdf-spectral-dependence-measure-factor-model}]
As argued in the proof of Proposition~\ref{prop:genuine-linear-factor-dependence}, $\Psifactor^\loadn(f^{-1}_{\loadn}(\cdot) )$ is a spectral dependence measure with respect to the norm $\| \cdot \|$; more precisely, it is both the spectral measure and the spectral dependence measures of $\bm Y'$ defined in that proposition.

Suppose first that $\bm{X}$ has spectral dependence measure $\Psi_\bX = \Psifactor^\loadn(f^{-1}_{\loadn}(\cdot) )$ with respect to the norm $\| \cdot \|$. Then the STDF of $\bm{X}$ exists, and by \eqref{eq:relation-stdf-spectral-dependence-measure}, we have for $\bx \in [0,\infty)^d$,
\begin{align*}
    L(\bx) 
    = 
    \varsigma  \int_{\simplexpos{d-1}(\| \cdot \|)} \max_{j \in [d]} (\lambda_j x_j) \Psi_\bX(d\bm{\lambda}) 
    &= 
    \varsigma \int_{\simplexpos{K-1}(\| \cdot \|_1)} \max_{j \in [d]}\big(  x_j (f_\loadn(\bz))_j \big) \, \Psifactor^\loadn(\diff \bz) \\
    &=
    \frac{\varsigma}{c_{\loadn, \Psifactor}}  \int_{\simplexpos{K-1}(\| \cdot \|_1)} \max_{j \in [d]}\Big( x_j \sum_{a \in [K]} \loadn_{ja} z_a\Big) \, \Psifactor(\diff z),
\end{align*}
where $\varsigma>0$ is some constant.
Since $\loadn$ has row sums 1 and since $\Psifactor \in \setspectraldepposK{\|\cdot\|_1}$ satisfies $\int_{\simplexpos{K-1}(\| \cdot \|_1)} z_a  \, \Psifactor(\diff z)=K^{-1}$ for any $a \in [K]$ by \eqref{eq:set-of-spectral-dependence-measures}, we obtain that, for any $\ell \in [d]$,
\[
    1 = L(\bm{e}_\ell) = \frac{\varsigma}{c_{\loadn,  \Psifactor} }
    \sum_{a \in [K]} \loadn_{\ell a}\int_{\simplexpos{K-1}(\| \cdot \|_1)} z_a  \, \Psifactor(\diff z)
    =
    \frac{\varsigma}{c_{\loadn,  \Psifactor} \cdot K}.
\]
Hence $\varsigma = c_{\loadn, \Psifactor} \cdot K$ and the penultimate display yields
\[
    L(\bx) = K \int_{\simplexpos{d-1}(\| \cdot \|)} \max_{j \in [d]}\Big( x_j \sum_{a \in [K]} \loadn_{ja} z_a\Big) \, \Psifactor(\diff z) =L_{K, \loadn, \Psifactor}(\bx).
\]

For the converse, we may use the fact that existence of the STDF implies existence of the spectral dependence measure, and that the latter is uniquely determined by the former; see, e.g., Lemma 2.1 in \cite{boulin2026structured}.
\end{proof}

\begin{proof}[Proof of Proposition~\ref{prop:tpdm-latent-factor-linear-model}]
In view of Lemma~\ref{lem:stdf-spectral-dependence-measure-factor-model}, the result is an immediate consequence of Proposition~\ref{prop:tpdm-latent-factor-linear-model-regular-variation} applied to the random vector $\bY$. Indeed, the latter amounts to setting $\alpha=1, \load=\loadn$ and $\Phifactor=\Psifactor$ in \eqref{eq:B_Z}, which is exactly \eqref{eq:C_ZA}.
\end{proof}

\begin{proof}[Proof of Theorem~\ref{thm:identifieability-tail-dependence}]
Recall that a function is injective iff it has a left-inverse. In the given context this means that we need to construct a function $h:\mathcal L(d) \to \paraspacetaildep$ such that, for any $(K, \loadn, \Psifactor) \in \paraspacetaildep$, $(K', \loadn', \Psifactor') := h\circ T(K, \loadn, \Psifactor)$ satisfies $K=K'$ and $(\loadn, \Psifactor) \sim (\loadn', \Psifactor')$. 

Take an arbitrary $L \in \mathcal L(d)$. If $L \notin T(\paraspacetaildep)$, then define $h(L) \in\paraspacetaildep$ arbitrary. If $L\in T(\paraspacetaildep)$, then there exists $(K,\loadn, \Psifactor) \in\paraspacetaildep$ such that $L=L_{K,\loadn,\Psifactor}$. Our goal is to apply Theorem~\ref{thm:identifieability-regular-variation}, and for that purpose, we start by constructing a norm $\| \cdot \|_v$ on $\R^d$ such that $C_{\loadn, \Psifactor} \in \diagdom{K}$, with $C_{\loadn, \Psifactor}$ from \eqref{eq:C_ZA} depending on the chosen norm on $\R^d$.

First, the required positive definiteness $C_{\loadn, \Psifactor}$ holds for any norm on $\R^d$. Indeed, for any $\bm v \in \R^K$ that is non-zero, we have
\[
\bm v^\top C_{\loadn, \Psifactor} \bm v = c_{\loadn, \Psifactor}^{-1}\int_{\simplexpos{K-1}(\|\cdot\|_1)} \frac{(\bm v^\top \bz)^2}{\| \loadn \bz\|} \, \Psifactor(\diff \bz).
\]
Since $\loadn  \bm z = \sum_{a \in [K]}z_a \loadn_{\cdot a}$ is a convex combination of the rows of $\loadn$, we have $\|\loadn  \bm z\| \le \sum_{a \in [K]} z_a \| \loadn_{\cdot a}\| \le \max_{a \in [K]} \| \loadn_{\cdot a} \| =: \loadn_\infty$ for any $\bm z \in \simplexpos{K-1}(\|\cdot\|_1)$. Hence, 
\[
\bm v^\top C_{\loadn, \Psifactor} \bm v 
\ge 
(\loadn_\infty c_{\loadn, \Psifactor})^{-1} \int_{\simplexpos{K-1}(\|\cdot\|_1)} (\bm v^\top \bz)^2 \, \Psifactor(\diff \bz)
=
(\loadn_\infty c_{\loadn, \Psifactor})^{-1} \bm v^\top \Sigma_{\Psifactor} \bm v >0,
\]
where the last inequality follows from the assumption that $\Sigma_\Psifactor$ is positive definite. Hence, $C_{\loadn, \Psifactor}$ is positive definite.

We next construct a norm such that $\Delta(C_{\loadn, \Psifactor})>0$. Recall the index set of the pure variables $I=I(\loadn)$ and $I_a=I_a(\loadn)$ from \eqref{eq:pure-variables}, and let $J=[K] \setminus I$ denote the impure variable indices.
For each $a \in [K]$, define $T_a = \sum_{j \in J} \loadn_{ja}$. Choose $\gamma>0$ sufficiently small such that $\gamma T_a < 1$ for all $a \in [K]$. Define $r_a = 1- \gamma T_a>0$. Now define a vector $\bm v \in \R^d$ by $v_j = r_a/|I_a|$ if $j \in I_a$ and $v_j=\gamma$ if $j \in J$. Then, for every $a \in [K]$, we have
\begin{align*}
    (\loadn^\top \bm v)_a  
    =  
    \sum_{j \in [d]} v_j \loadn_{ja}
    &=
    \sum_{j \in I_a}v_j \loadn_{ja} + \sum_{b \ne a} \sum_{j \in I_b} v_j \loadn_{ja} + \sum_{j \in J}v_j \loadn_{ja}
    \\&=
    \sum_{j \in I_a} \frac{r_a}{|I_a|} + \sum_{b \ne a} \sum_{j \in I_b} v_j \cdot 0 +  \gamma T_a
    =
    r_a+\gamma T_a=1,
\end{align*}
i.e., $\loadn^\top \bm v = \bm 1 \in \R^{K}$.
Define a weighted $\ell_1$-norm on $\R^d$ by $\| \bm x\|_v = \sum_{j \in [d]} v_j |x_j|$.
Then, for all $\bm z \in \simplexpos{K-1}(\| \cdot \|_1)$, we have
$
\| \loadn \bm z \|_{v}  = \bm v^\top \loadn \bm z = \bm 1^\top \bm z = 1, 
$
since $\bm z \in \simplexpos{K-1}(\|\cdot\|_1)$. As a consequence, for the weighted $\ell_1$-norm $\| \cdot \|_v$, we have $C_{\loadn, \Psifactor} = c_{\loadn, \Psifactor}^{-1} \Sigma_\psi$, and the assumption $\Delta(\Sigma_\psi)>0$ hence yields $\Delta(C_{\loadn, \Psifactor})>0$. 

Let $\Theta_{\Psi_v}$ denote the set in \eqref{eq:parameter-space-regular-variation}, with the implicit norm in the definition of that set being the $\|\cdot\|_v$ norm constructed above. By Theorem~\ref{thm:identifieability-regular-variation}, the function 
\[
T_v: \Theta_{\Psi_v} \to \setspectrald{\| \cdot\|_v}, 
\quad
(\alpha', K',\loadn', \Psifactor') \mapsto f_{\loadn'} \,\#\, (\Psifactor')^{\alpha', \loadn'}
\]
has a left-inverse up to signed label permutations, say $g_v$. The latter means that, for any fixed $(\alpha', K',\loadn', \Psifactor') \in  \Theta_{\Psi_v}$,  $(\alpha'', K'',\loadn'', \Psifactor'') := g_v \circ T_v(\alpha', K',\loadn', \Psifactor')$  satisfies $(\alpha', K')=(\alpha'', K'')$ and $(\loadn', \Psifactor') \sim (\loadn'', \Psifactor'')$.

Now, define $h(L) = p_{(2,3,4)} \circ g_v \circ S_{\| \cdot \|_v}(L)$, where $p_{(2,3,4)}$ is the coordinate projection defined by $p_{(2,3,4)}(\alpha, K',\loadn', \Psifactor')=(K',\loadn', \Psifactor')$ and where $S_{\| \cdot \|_v}$ is the bijection from $\setstdf{d}$ to $\setspectraldepposd{\| \cdot\|_v}$ from \eqref{eq:stdf-spectral-dependence-measure-bijection}. It then follows that, for every $(K,\loadn, \Psifactor) \in\paraspacetaildep$,
\begin{align*}
(K', \loadn', \Psifactor') := h \circ T(K,\loadn, \Psifactor) 
&= 
p_{(2,3,4)} \circ g_v \circ S_{\| \cdot \|_v}(L_{K,A,\Psifactor})
\\&=
p_{(2,3,4)} \circ g_v(f_\loadn \, \# \, \Psifactor^{1,\loadn})
\\&=
p_{(2,3,4)} \circ g_v \circ T_v(1,K,\loadn, \Psifactor) 
\end{align*}
where the second equality follows from Lemma~\ref{lem:stdf-spectral-dependence-measure-factor-model}. In view of the fact that $g_v$ is a left-inverse of $T_v$ up to signed label permutations, we obtain that $K=K'$ and $(\loadn, \Psifactor) \sim (\loadn', \Psifactor')$. This finalizes the proof.
\end{proof}

\begin{proof}[Proof of Proposition~\ref{prop:survival_stdf_lfm_tail_dep}]
The result is a generalization of \cite[Lemma 2.7]{boulin2026structured}, and the proof follows along similar lines.
Since $\bm{X}$ has STDF $L_{K,\loadn, \Psifactor}$, $\bm Y$ is regularly varying with index 1. As a consequence,  $R^\cup_{\mathcal{J}}$ exists and by \eqref{eq:polar-decomposition}, we have for any $\bm x \in [0,\infty)^d$ and any norm $\|\cdot\|$ in $\mathbb{R}^d$,
\begin{align*}
    R^\cup_{\mathcal{J}}(\bm x) 
    &= 
    \lim_{x \rightarrow \infty} x \mathbb{P}\Big( \exists J \in \mathcal{J}\,  \forall j \in J : Y_j > \frac{x}{x_j} \Big)  \\
    &= 
    \varsigma_{\bY} \cdot (\nu_\alpha \otimes \Psi_{\bm{X}}) \left( \left\{ (r,\bm \lambda) \in [0,\infty) \times \simplexpos{d-1}(\|\cdot\|) \, \Big| \, \exists J \in \mathcal{J}\, \forall j \in J: r\lambda_j > x_j^{-1} \right\} \right) \\
    &= 
    \varsigma_{\bY} \int_{\simplexpos{K-1}(\|\cdot\|_1)}\bigvee_{J \in \mathcal{J}} \bigwedge_{j \in J} (x_j \lambda_j) \, \Psi_{\bm{X}}(\diff\bm{\lambda}).
\end{align*}
Using that $\Psi_{\bm{X}} = \psi^\loadn (f_\loadn^{-1}(\cdot))$ by Lemma~\ref{lem:stdf-spectral-dependence-measure-factor-model}, we obtain that
\begin{align*}
    R^\cup_{\mathcal{J}}(\bm x) 
    = 
    \frac{\varsigma_{\bY}}{c_{\loadn, \Psifactor}} \int_{\simplexpos{K-1}(\|\cdot\|_1)} \bigvee_{J \in \mathcal{J}} \bigwedge_{j \in J} \Big(x_j \sum_{a \in [K]} \loadn_{ja} z_a\Big) \Psifactor(\diff \bz).
\end{align*}
Since $\loadn$ has row sums $1$ and since $\Psifactor \in  \setspectraldepposK{\|\cdot\|_1}$ satisfies $\int_{\simplexpos{K-1}(\|\cdot\|_1)} z_a \Psifactor(d\bm{z}) = K^{-1}$ for any $a\in [K]$ by \eqref{eq:set-of-spectral-dependence-measures}, we obtain
\begin{align*}
    1 
    = 
    \lim_{x \rightarrow \infty} x \mathbb{P}\left\{ Y_1 > x \right\} 
    = 
    R^\cup_{\{1\}}(\bm{e}_1)  
    = 
    \frac{\varsigma_{\bY}}{c_{\loadn, \Psifactor}} \sum_{a \in [K]} \loadn_{1a} \int_{\simplexpos{K-1}(\|\cdot\|_1)} z_a \Psifactor(\diff \bm{z}) 
    = 
    \frac{\varsigma_{\bY}}{c_{\loadn, \Psifactor} \cdot K}.
\end{align*}
Hence $\varsigma_{\bY} = c_{\loadn, \Psifactor} \cdot K$ and the penultimate display yields:
\[
    R^\cup_{\mathcal{J}}(\bm x) 
    = 
    K\int_{\simplexpos{K-1}(\|\cdot\|_1)} \bigvee_{J \in \mathcal{J}} \bigwedge_{j \in J} \Big(x_j \sum_{a \in [K]} A_{ja} z_a\Big) \Psifactor(\diff \bm{z})
\]
as asserted.
The claim for $L^\cap_{\mathcal{J}}$ follows along similar lines.
\end{proof}

\begin{lemma} \label{lem:sufficient-condition-for-maximum-norm}
    Suppose that $\int \bm z \bm z^\top / \|\bm z \|_\infty \psi(\diff \bm z) \in \diagdom{K}$. If $C=C_{\loadn, \Psifactor}$ from \eqref{eq:C_ZA} is calculated using the maximum norm $\| \cdot \|_\infty$ on $\R^d$, then $C \in \diagdom{K}$.
\end{lemma}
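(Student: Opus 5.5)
The plan is to show that, under the maximum norm on $\R^d$, the matrix $C_{\loadn,\Psifactor}$ from \eqref{eq:C_ZA} is a positive multiple of $M:=\int \bz\bz^\top/\|\bz\|_\infty\,\Psifactor(\diff\bz)$. Membership in $\diagdom{K}$ is preserved under multiplication by positive scalars, so the claim then follows from the assumption $M\in\diagdom{K}$. Throughout, I take $(K,\loadn,\Psifactor)\in\paraspacetaildep$ as in the surrounding context, so that $\loadn\in\setloadnpure$.

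The key step is the identity $\|\loadn\bz\|_\infty=\|\bz\|_\infty$ for every $\bz\in\simplexpos{K-1}(\|\cdot\|_1)$.

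For the upper bound, each coordinate $(\loadn\bz)_j=\sum_{a}\loadn_{ja}z_a$ is a convex combination of $z_1,\dots,z_K$, because the rows of $\loadn$ lie in $[0,1]^K$ and have sum one. Hence $0\le(\loadn\bz)_j\le\max_a z_a=\|\bz\|_\infty$.

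For the lower bound, the pure variable assumption gives, for each $a\in[K]$, some $j\in I_a(\loadn)$. For that $j$ we have $(\loadn\bz)_j=z_a$, so $\|\loadn\bz\|_\infty\ge\max_a z_a$.

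Combining both bounds gives the identity. Substituting it into \eqref{eq:C_ZA} yields
\[
C_{\loadn,\Psifactor}=c_{\loadn,\Psifactor}^{-1}\int_{\simplexpos{K-1}(\|\cdot\|_1)}\frac{\bz\bz^\top}{\|\bz\|_\infty}\,\Psifactor(\diff\bz)=c_{\loadn,\Psifactor}^{-1}M ,
\]
and similarly $c_{\loadn,\Psifactor}=\int\|\bz\|_\infty\,\Psifactor(\diff\bz)$. This constant is strictly positive (indeed at least $1/K$), since $\|\bz\|_\infty\ge 1/K$ on the 1-norm simplex.

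It remains to note that, for $t>0$, we have $\Delta(tM)=t\,\Delta(M)$. This holds because every term $C(a,a)\wedge C(b,b)-|C(a,b)|$ in the definition of $\Delta$ scales linearly in $t$. Positive definiteness is likewise preserved under positive scaling. Therefore $C_{\loadn,\Psifactor}\in\diagdom{K}$.

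The only step with any content is the norm identity $\|\loadn\bz\|_\infty=\|\bz\|_\infty$, and it is exactly where the pure variable assumption enters. Everything else is bookkeeping.
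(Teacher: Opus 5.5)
Your proposal is correct and follows the paper's proof essentially verbatim. The paper also establishes $\|\loadn\bz\|_\infty=\|\bz\|_\infty$ from the row-sum bound and the pure-variable rows, and then concludes $C_{\loadn,\Psifactor}=c_{\loadn,\Psifactor}^{-1}\int \bz\bz^\top/\|\bz\|_\infty\,\Psifactor(\diff\bz)\in\diagdom{K}$. Your remarks that $c_{\loadn,\Psifactor}>0$ and that $\diagdom{K}$ is invariant under positive scaling make explicit a step the paper leaves implicit.
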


\begin{proof}
    Clearly, $\| \loadn \bm z\|_\infty = \max_{j \in [d]} |\sum_{a \in [K]}A_{ja} z_a| \le \|\bm z\|_\infty \max_{j \in [d]} \sum_{a \in [K]}|A_{ja}|=\|\bm z\|_\infty$, where we have used that $\loadn$ has rows sums 1. On the other hand, since each unit vector appears at least once among the rows of $\loadn$, we have $\| \loadn \bm z\|_\infty \ge \max_{a \in [K]}|\bm e_a^\top \bm z| = \| \bm z \|_\infty$. Hence,  $\| \loadn \bm z\|_\infty= \| \bm z \|_\infty$, which yields $C_{\loadn,\Psifactor}=c_{\loadn, \Psifactor}^{-1} \int \bm z \bm z^\top / \|\bm z \|_\infty \psi(\diff \bm z) \in \diagdom{K}$ by assumption.
\end{proof}

\section{Proofs for Section~\ref{sec:estimating-latent-linear-factor-models}}

\begin{proof}[Proof of Lemma~\ref{lem:spectral-measure-of-pure-variables}]
    Throughout, we equip both $\R^d$ and $R^K$ with the 1-norm.
    Write $\Phi_\bY$ for the spectral measure of $\bm Y$, and recall that $\Phi_\bY = \Psi_\bX$.
    For $a \in [K]$, let $\bm m_a =(m_{a1}, \dots, m_{ad})$ have coordinates $m_{aj} = |I_a|^{-1}\bm 1( j \in I_a)$. Then, with $M\in \R^{K \times d}$ having rows $\bm m_1, \dots, \bm m_K$, we have $\bm Z = M \bm Y$. Exercise 2.4 in \cite{kulik2020heavy} implies that $\bm Z$ is regularly varying with tail index 1 and spectral measure 
    \[
    \Phi_{\bm Z} 
    = 
    \Big( \int_{\simplexpos{d-1}} \| M \bm y \|_1 \Phi_{\bY}(\diff \by) \Big)^{-1} \Big(\int_{\simplexpos{d-1}} \bm 1\Big(\frac{M \by}{\| M\by\|_1} \in \cdot \Big) \| M \by\|_1 \Phi_{\bY}(\diff \by) \Big).
    \]
    Next, by Lemma~\ref{lem:stdf-spectral-dependence-measure-factor-model}, we have $\Phi_\bY = \Psi_\bX = f_{\loadn}  \, \# \, \Psifactor^\loadn =  \Psifactor^\loadn(f^{-1}_{\loadn}(\cdot) )$,
    which yields 
    \[
    \Phi_{\bm Z} = \Big( \int_{\simplexpos{K-1}} \| M f_{\loadn}(\bm z) \|_1 \Psifactor^A(\diff \bm z) \Big)^{-1} \Big(\int_{\simplexpos{K-1}} \bm 1\Big(\frac{M f_{\loadn}(\bm z)}{\| M f_{\loadn}(\bm z)\|_1} \in \cdot \Big) \| M f_{\loadn}(\bm z)\|_1 \Psifactor^A(\diff \bm z) \Big)
    \]
    by the change of variable formula.
    Since
    $(MA\bm)_{ab}  = \sum_{j \in [d]} m_{aj}A_{jb}=|I_a|^{-1} \sum_{j \in I_a} A_{jb} = \bm 1(a=b)$, we have $Mf_A(\bm z) = MA \bm z / \| A\bz\|_1 = \bz / \|A\bz\|_1$ for every $\bz\in\simplexpos{K-1}$, which yields
    \[
    \Phi_{\bm Z} = \Big( \int_{\simplexpos{K-1}} \frac1{\| A\bm z \|_1} \Psifactor^A(\diff \bm z) \Big)^{-1} \Big(\int_{\simplexpos{K-1}} \bm 1\big(\bz\in \cdot \big)\frac1{\| A\bm z \|_1} \Psifactor^A(\diff \bm z) \Big) = \Psifactor.
    \]
Here, the last equality follows from the definition of $\Psifactor^A$ in \eqref{eq:cZA}.    
\end{proof}

\section{Proofs for Section~\ref{sec:latent-linear-factor-models-regular-variation}}

\begin{proof}[Proof of Proposition~\ref{prop:genuine-linear-factor-dependence-regular-variation}]
    In view of Lemma~\ref{lem:vanishing_noise}, it is sufficient to show the assertion for $\bE=\bm 0$. 
    
    Let $g_\bZ$ be an auxiliary function and let $\nu_\bZ$ be the associated exponent measure of $\bZ$ such that $g_\bZ(x) \Prob(x^{-1}\bZ \in \cdot) \to \nu_\bZ(\cdot)$ in $\Mb(\Eb_0^K)$ as $x \to \infty$. Define $h:\R^K \to \R^d$ by $h(\bz) = \load \bz$.
    Regular variation of $\bX=h(\bZ)$ (with tail index $\alpha$, auxiliary function $g_{h(\bZ)} = g_\bZ$, and associated exponent measure $\nu_{h(\bZ)} :=\nu_\bZ(h^{-1}(\cdot))$)
    then follows from Proposition 2.1.12 in \cite{kulik2020heavy}, once we show that $\nu_{h(\bZ)}$ is not the null measure on $\Eb_0^d$. For that purpose, fix $\eps>0$. By the polar decomposition $\nu_{\bZ}\circ T^{-1} = \varsigma_{\bZ} \cdot (\nu_\alpha \otimes \Phifactor)$ from \eqref{eq:polar-decomposition} and Proposition 2.2.3 in \cite{kulik2020heavy}, we obtain that
    \begin{align*}
    \nu_{h(\bZ)}\left(\left\{  \bx \in \R^d : \|\bx\| > \eps \right\}\right) 
    = 
    \nu_{\bZ} \left( \left\{  \bz \in \R^K : \|\load \bz\| > \eps \right\}\right) 
    &= 
    \varsigma_{\bZ} \int_{\simplex{K-1}} \int_{ r > \eps / \| \load \bm \lambda\|} \alpha r^{-\alpha-1} \, \diff r \, \Phifactor(\diff\bm \lambda) \\ 
    &= 
    \varsigma_{\bZ} \eps^{-\alpha}\int_{\simplex{K-1}} \|\load \bm \lambda\|^\alpha\, \Phifactor(\diff\bm \lambda).
    \end{align*}
    In view of the full column rank assumption on $\load$, we have $\load \bm \lambda \neq 0$ for any $\bm \lambda \in \simplex{K-1}$. Hence, $\bm \lambda \mapsto \| \load \bm \lambda\|^\alpha$, considered as a function on the compact sphere $\simplex{K-1}$, is bounded away from zero. Hence, since $\Phifactor$ is not the null measure, $\nu_{h(\bZ)}$ cannot be the null measure either.

    It remains to show that the spectral measure of $h(\bZ)$ satisfies $\Phi_{h(\bZ)}=\Phi_{\alpha, K, \load,\Phifactor}$. For that purpose note that, by Proposition 2.1.12 in \cite{kulik2020heavy}, for an arbitrary Borel set $C \subset \simplex{d-1}$,
    \begin{align*}
    m(C) := \nu_{h(\bZ)}\left( \left\{ \bx \in \R^d : \|\bx\| > 1, \frac{\bx}{\|\bx\|} \in C \right\} \right) 
    &= 
    \nu_{\bZ}\left( \left\{ \bz \in \R^K : \|\load\bz\| > 1, \frac{\load \bz}{\|\load \bz\|} \in C \right\} \right) 
    \\ 
    &= \varsigma_{\bZ} \int_{\simplex{K-1}} \int_0^\infty \indicator{\{\|\load r \bm \lambda \| > 1, \frac{\load \bm \lambda}{\|\load \bm \lambda\|} \in C\}} \alpha r^{-\alpha -1} \, \diff r\, \Phifactor(\diff \bm \lambda) \\
    &= 
    \varsigma_{\bZ} \int_{\simplex{K-1}} \|\load \bm \lambda\|^\alpha \indicator{C}\left( \frac{\load \bm \lambda}{\|\load \bm \lambda\|} \right) \Phifactor(\diff \bm \lambda)
    \\&=
    \varsigma_\bZ \cdot c_{\alpha, \load, \Phifactor} \int_{\simplex{K-1}} \indicator{C}\left( \frac{\load \bm \lambda}{\|\load \bm \lambda\|} \right) \Phifactor^{\alpha, \load}(\diff \bm \lambda)  
    \\& = \varsigma_\bZ \cdot  c_{\alpha, \load, \Phifactor} \cdot \Phi_{\alpha, K, \load, \Phifactor}(C),
    \end{align*}
    where we have used the definitions of $c_{\alpha, \load, \Phifactor}, \Phifactor^{\alpha, \load}$ and $\Phi_{\alpha, K, \load, \Phifactor}$ from \eqref{eq:phi-linear-factor} and \eqref{eq:cZB}.  As a consequence, by definition of $\Phi_{h(\bZ)}$ in \eqref{eq:definition-spectral-measure}, $\Phi_{h(\bZ)}(C) = 
    m(C)/{m(\simplex{d-1})} 
    =
    \Phi_{\alpha, K, \load, \Phifactor}(C)$
    as asserted.
\end{proof}

\begin{proof}[Proof of Proposition~\ref{prop:tpdm-latent-factor-linear-model-regular-variation}]
By the definition of $\Phi_\bX=\Phi_{\alpha, K,\load,\Phifactor}$ from \eqref{eq:phi-linear-factor} and the change of variable formula, we have
\begin{align*} 
    \shittyTPDM_\bX 
    = 
    \int_{\simplex{d-1}} \bx \bx^\top \Phi_\bX(\diff \bx)
    &= 
    \int_{\simplex{d-1}} \bx \bx^\top (\Phifactor^{\alpha, \load} \circ f_\load)^{-1}(\diff \bx)
    \\&=
    \int_{\simplex{K-1}} \frac{\load \bz \bz^\top \load^\top}{\| \load \bz\|^{2}} \, \Phifactor^{\alpha, \load}(\diff \bz)
    =
     \load c_{\alpha, \load, \Phifactor}^{-1} \int_{\simplex{K-1}}  \frac{\bz \bz^\top}{\| \load \bz\|^{2-\alpha}} \, \Phifactor(\diff \bz) \load^\top,
\end{align*}
as asserted.
\end{proof}

The proof of Theorem~\ref{thm:identifieability-regular-variation} is based on the the following auxiliary lemma, for which we need to slightly extend the domain of some of the definitions from Remark~\ref{rem:reconstructing-parameters-tail-dependence}. First, for $\load \in \setloadpurerv$, let
\[
I(\load) =\bigcup_{a \in [K]} I_a(\load), \qquad I_a(\load) = \{ j \in [d]: \load_{j \cdot}^\top \in \{ \pm \be_a\} \}.
\]
Next, extend the definitions of $m_j(\Sigma)$ and $S_j(\Sigma)$ from \eqref{eq:m_j-and_S_j} from $\Sigma \in [0,\infty)^{d \times d}$ to $\Sigma \in \R^{d \times d}$
by 
\begin{align} \label{eq:m_j-and_S_j-extended}
m_j(\Sigma) := m_j(|\Sigma|) = \max_{\ell \in [d]} |\Sigma(j,\ell)|, \quad S_j(\Sigma) := S_j(|\Sigma|) = \argmax_{\ell \in [d]} |\Sigma(j,\ell)|.
\end{align}
Recall $\diagdom{K}$ from  \eqref{eq:diagdom} and $\setloadpurerv$ from \eqref{eq:pure-loading-matrices-rv}.

\begin{lemma}
\label{lem:identifiability_pure}
   Let $\load \in \setloadpurerv$ and $C \in \diagdom{K}$, and write $\Sigma=\load C \load^\top \in \R^{d \times d}$. Then:
       \begin{compactenum}[(1)]
        \item \label{item:identifieabilty-1}
        For any $a \in [K]$ and $j\in I_a(\load)$, we have $|\Sigma(j,\ell)|\le C(a,a)$ for all $\ell\in[d]$, with strict inequality for $\ell \notin I_a$. As a consequence, $S_j(\Sigma) = I_a(\load)$ and $M_j(\Sigma) = C(a,a)$.
        \item \label{item:identifieabilty-2}
        For any $j \in [d]$, we have $S_j(\Sigma) \cap I(\load) \ne \emptyset$.
        \item  \label{item:identifieabilty-3}
        We have $j \in I(\load)$ if and only if $M_j(\Sigma) = M_\ell(\Sigma)$ for all $\ell \in S_j(\Sigma)$. As a consequence, $I(\load)$ is uniquely determined by $\Sigma$:
         \[
        I(\load) = \{ j \in [d] : M_j(\Sigma) = M_\ell(\Sigma) \textrm{ for every } \ell \in S_j(\Sigma)\}.
        \] 
        \item \label{item:identifieabilty-4}
        The relation on $I(\load)$ given by 
        \[
        j \sim_\TPDM \ell \quad :\Longleftrightarrow \quad  \ell \in S_j(\TPDM)
        \]
        is an equivalence relation, the number of equivalence classes is equal to $K$, and each equivalence class corresponds to exactly one set $I_a(\load)$ with $a \in [K]$. In other words, both $K$ and the partition $\bm{\Ic} = \{I_a(\load)\}_{a \in [K]}$ can be determined from $\Sigma$.
    \end{compactenum}
\end{lemma}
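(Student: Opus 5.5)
The plan is to work directly with the entries $\Sigma(j,\ell)=\load_{j\cdot} C \load_{\ell\cdot}^\top$ and to use two elementary facts. First, every row satisfies $\|\load_{j\cdot}\|_1\le 1$. Second, for a pure row $\load_{j\cdot}=\pm\be_a^\top$ we get $\Sigma(j,\ell)=\pm (C\load_{\ell\cdot}^\top)_a$. Throughout we read $M_j$ as $m_j$ from \eqref{eq:m_j-and_S_j-extended}.

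For (1), take $j\in I_a$. Then
\[
|\Sigma(j,\ell)| = \Big|\sum_b C(a,b)\load_{\ell b}\Big| \le \sum_b |C(a,b)||\load_{\ell b}|.
\]
Since $\Delta(C)>0$, we have $|C(a,b)|<C(a,a)$ for $b\ne a$, so the right-hand side is at most $C(a,a)\|\load_{\ell\cdot}\|_1\le C(a,a)$. Equality forces two things: all mass of $\load_{\ell\cdot}$ sits on coordinate $a$, and $\|\load_{\ell\cdot}\|_1=1$, i.e.\ $\load_{\ell\cdot}=\pm\be_a^\top$, i.e.\ $\ell\in I_a$. Conversely, $\ell\in I_a$ gives $|\Sigma(j,\ell)|=C(a,a)$. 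Hence $S_j=I_a$ and $m_j=C(a,a)$.

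For (2), fix any $j$ and bound $|\Sigma(j,\ell)|$ for arbitrary $\ell$. Write $\beta=\load_{j\cdot}^\top$ and $\gamma=\load_{\ell\cdot}^\top$. Then
\[
|\beta^\top C\gamma|\le \sum_{a,b}|\beta_a||\gamma_b||C(a,b)|\le \sum_{a,b}|\beta_a||\gamma_b|\big(C(a,a)\wedge C(b,b)\big)\le \max_a |\beta_a|\ldots
\]
The cleaner route is to compare against the pure rows: $|\Sigma(j,\ell)|\le \max_b |(C\beta)_b|\,\|\gamma\|_1\le \max_b|(C\beta)_b|$. The value $\max_b|(C\beta)_b|$ is attained by $\ell\in I_b$ for the maximizing $b$, because $\Sigma(j,\ell)=\pm(C\beta)_b$ there. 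So $m_j=\max_b|(C\beta)_b|$, and some pure index lies in $S_j$. One caveat: this requires $C\beta\ne 0$, which holds since $C$ is positive definite and $\beta\ne0$. If $\beta=0$ is allowed, every entry in row $j$ is $0$ and every index is trivially a maximizer, so (2) still holds.

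For (3), the forward direction follows from (1): for $j\in I_a$ and $\ell\in S_j=I_a$, we get $m_\ell=C(a,a)=m_j$. For the converse, let $j\notin I$. By (2) pick $\ell\in S_j\cap I_b$. Then $m_\ell=C(b,b)$, while $m_j=|\Sigma(j,\ell)|<C(b,b)$ by the strict inequality in (1) applied to the pure index $\ell$ and the non-pure index $j\notin I_b$. So the condition fails. (The degenerate case $\beta=0$ also gives $m_j=0<C(b,b)$.)

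For (4), by (1) the relation on $I$ reads: $\ell\in S_j$ iff $\ell\in I_a$, where $a$ is the factor of $j$. It is therefore exactly "same block", which is an equivalence relation whose classes are the nonempty sets $I_a$. There are $K$ of them, since every $I_a$ is nonempty by the pure variable assumption.

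The main obstacle is the strictness argument in (1): a careful equality-case analysis of $\sum_b|C(a,b)||\load_{\ell b}|\le C(a,a)\|\load_{\ell\cdot}\|_1$, which uses both $\Delta(C)>0$ and the $\ell_1$ constraint. Everything else reduces to it.
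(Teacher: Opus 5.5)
Your proposal is correct and follows essentially the same route as the paper: the bound $|\Sigma(j,\ell)|\le C(a,a)\|\load_{\ell\cdot}\|_1$ for pure $j$ gives (1), comparing an arbitrary row with the pure rows gives (2), the strict inequality of (1) applied to the pure index $\ell\in S_j(\Sigma)$ gives (3), and the ``same block'' description of $\sim_\Sigma$ gives (4). Your equality-case analysis in (1) is slightly more careful than the paper's, which claims that $\ell\notin I_a$ forces a nonzero off-diagonal term and so overlooks rows such as $c\,\be_a^\top$ with $|c|<1$ or zero rows (both allowed in $\setloadpurerv$); also, your caveat $C\beta\neq 0$ in (2) is unnecessary, since $m_j=\max_b|(C\beta)_b|$ still holds when both sides vanish.
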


\begin{proof}[Proof of Lemma~\ref{lem:identifiability_pure}]~

\textit{Proof of \eqref{item:identifieabilty-1}:} 
For given $j \in [d]$, we define the set $s(j) := \{ b \in [K] :  \load_{jb} \ne 0 \}$. The identity $\Sigma=\load C \load^\top$ means that, for any $j, \ell \in [d]$,
\begin{align*}
    \Sigma(j,\ell) 
    = 
    \sum_{b \in [K]}  \sum_{b' \in [K]} \load_{jb} C(b,b')\load_{\ell b'} 
    = 
    \sum_{b \in s(j)}  \sum_{b' \in s(\ell)} \load_{jb} C(b,b')\load_{\ell b'}  .
\end{align*}
Specialized to $j,\ell\in I_a$ for some $a \in [K]$, we have $s(j)=s(\ell) = \{a\}$ and $|\load_{ja}| = |\load_{\ell a}|=1$, which yields 
\begin{align*}
    |\Sigma(j,\ell)| 
    = 
    \Big|\sum_{b \in s(j)}  \sum_{b' \in s(\ell)} \load_{jb} C(b,b')\load_{\ell b'}  \Big| 
    = 
    \left|\load_{ja}  C(a,a) \load_{\ell a}  \right| =  | C(a,a) | = C(a,a).
\end{align*}
This is the first claim. Regarding the second one, fix $a \in [K]$, and consider $j\in I_a$ and $\ell \notin I_a$. Since $s(j) = \{a\}$ and $|\load_{ja}|=1$, we obtain
\begin{align*}
    |\Sigma(j,\ell)| 
    = 
    \Big|\sum_{b \in s(j)}  \sum_{b' \in s(\ell)} \load_{jb} C(b,b')\load_{\ell b'}  \Big|  
    &= 
    \Big| \sum_{b' \in s(\ell)}  C(a,b')\load_{\ell b'}  \Big| 
    \\&\le 
    \sum_{b' \in s(\ell)}  |C(a,b')\load_{\ell b'}| 
    \\&=
    \bm 1(a \in s(\ell))  C(a,a) |\load_{\ell a}| +
    \sum_{b' \in s(\ell), b'\ne a} |C(a,b')\load_{\ell b'}|.
\end{align*}
Here, the index set of the last sum is non-zero since $\ell \notin I_a$. As a consequence, since $C \in \diagdom{K}$, see \eqref{eq:diagdom},
\begin{align*}
    |\Sigma(j,\ell)| 
    < 
    \bm 1(a \in s(\ell))  C(a,a)|\load_{\ell a}| +
    C(a,a)\sum_{b' \in s(\ell), b'\ne a} |\load_{\ell b'}|
    = C(a,a) \sum_{b' \in s(\ell)} |\load_{\ell b'}| \le C(a,a),
\end{align*}
where the last inequality follows from $\load \in \setloadpurerv$; see \eqref{eq:pure-loading-matrices-rv}. The proof of \eqref{item:identifieabilty-1} is finished.

\medskip
\textit{Proof of \eqref{item:identifieabilty-2}:}  The result from \eqref{item:identifieabilty-1} implies that $S_j(\Sigma) \cap I(\load) \ne \emptyset$ for any $j \in I(\load)$, so it remains to consider $j \in J:= [d] \setminus I(\load)$. For such an index, we have
\begin{align*}
        M_j(\Sigma) 
        = 
        \max_{\ell \in [d]} |\Sigma(j,\ell)| 
        &=  
        \max_{\ell \in [d] } \Big|\sum_{b' \in s(\ell)}\load_{\ell b'}\sum_{b \in s(j)}  \load_{jb} C(b,b')  \Big| 
        \\&\le 
        \max_{\ell \in [d]} \Big( \max_{b' \in s(\ell)} \Big| \sum_{b \in s(j)} \load_{jb} C(b,b') \Big| \Big) \times \Big(\sum_{b' \in s(\ell)}|\load_{\ell b'}|  \Big)
        \\&\le
        \max_{\ell \in [d]} \max_{b' \in s(\ell)} \Big| \sum_{b \in s(j)} \load_{jb} C(b,b') \Big|,
        %\\&= 
        %\max_{\ell \in [d]} \Big| \sum_{a \in s(j)} \load_{ja} C(a,b^*) \Big|,
\end{align*}
where we have used the row-sum condition from $\load \in \setloadpurerv$. The inner maximum over $b'\in s(\ell)$ in the last line is attained at some $b^* \in s(\ell) \subset [K]$. For any $\ell \in I_{b^*}(\load)$, which is non-empty since $\load \in \setloadpurerv$, a direct computation yields $|\Sigma(j,\ell)| = |\sum_{b \in s(j)} \load_{jb} C(b,b^*)|$, which implies that $\emptyset \ne I_{b^*}(\load) \subset S_j(\Sigma)$ by definition of $S_j$. In particular,  $S_j(\Sigma) \cap I(\load) \ne \emptyset$ as asserted.

\medskip
\textit{Proof of \eqref{item:identifieabilty-3}:}  For the necessity part, assume $j \in I(\load)$. Then $j \in I_a(\load)$ for some $a \in [K]$. The result in \eqref{item:identifieabilty-1} implies that $S_j(\Sigma) = I_a(\load)$ and $M_j(\Sigma) = C(a,a)$. Take any $\ell \in S_j(\Sigma)$. Since $S_j(\Sigma) = I_a(\load)$, we have $\ell \in I_a(\load)$, and the result in (1) gives $M_\ell(\Sigma)  = C(a,a) = M_j(\Sigma)$, as required. 
For the sufficiency part, assume $M_j(\Sigma) = M_\ell(\Sigma)$ for all $\ell \in S_j(\Sigma)$. We want to show that $j \in I(\load)$. The result in \eqref{item:identifieabilty-2} allows to choose some $\ell \in S_j(\Sigma) \cap I(\load)$. Then $\ell \in I_b(\load)$ for some $b \in [K]$, and \eqref{item:identifieabilty-1} yields $M_\ell(\Sigma) = C(b,b)$. Since $\ell \in S_j(\Sigma)$, we also have $|\Sigma(j,\ell)| = M_j(\Sigma)$. The assumption $M_j(\Sigma) = M_\ell(\Sigma)$ hence implies that $|\Sigma(j,\ell)| = C(b,b)$. But then $j \in I_b(\load) \subset I(\load)$ as required, since otherwise $|\Sigma(j,\ell)| < C(b,b)$ by~\eqref{item:identifieabilty-1}. 

\medskip
\textit{Proof of \eqref{item:identifieabilty-4}:}  Reflexivity of $\sim_\Sigma$ is immediate. For proving symmetry, suppose $j,\ell \in I(\load)$ satisfy $j \sim_\Sigma \ell$. Choose $a\in[K]$ such that $j \in I_a(\load)$. It follows from \eqref{item:identifieabilty-1} that $S_j(\Sigma)=I_a(\load)$; hence, by definition of $\sim_\Sigma$, we have $\ell \in I_a(\load)$. Applying \eqref{item:identifieabilty-1} again, we obtain that $S_\ell(\Sigma) = I_a(\load)$. Hence, $j \in S_\ell(\Sigma)$, which means that $\ell \sim_\Sigma j$. For proving transitivity, suppose $j,\ell,k \in I(\load)$ satisfy $j \sim_\Sigma \ell$ and $\ell \sim_\Sigma k$. Again, choose $a\in[K]$ such that $j \in I_a(\load)$. As shown in the symmetry proof, $j \sim_\Sigma \ell$ implies that $S_j(\Sigma) = S_\ell(\Sigma) =I_a(\load)$. In particular, $\ell \in I_a(\load)$, and then $\ell \sim_\Sigma k$ implies that $S_\ell(\Sigma) = S_k(\Sigma) =I_a(\load)$. Hence, $j \in S_k(\Sigma)$, which means that $k\sim_\Sigma j$ as required.
  
Finally, each equivalence class under $\sim_\Sigma$, i.e., $[j]_{\Sigma} := \{\ell \in I(\load) : \ell \sim_\Sigma j\}$, corresponds to one pure variable set $I_a$, with $a \in [K]$. More precisely, for each $j \in I(\load)$, there exists $a \in [K]$ such that $j \in I_a(\load)$, and we claim that $[j]_\Sigma=I_a(\load)$. Indeed, if $\ell \in I_a(\load)$, then $S_\ell(\Sigma) =I_a(\load)$ by \eqref{item:identifieabilty-1}, so $j \in S_\ell(\load)$ and hence $\ell \sim_\Sigma j$. Conversely, if $\ell\in [j]_{\Sigma}$ then $j \sim_\Sigma \ell$ by symmetry, which means that $\ell \in S_j(\Sigma)=I_a(\load)$, with the last equality following from \eqref{item:identifieabilty-1}.
\end{proof}

\begin{proof}[Proof of Theorem~\ref{thm:identifieability-regular-variation}]
Clearly, $\alpha^{(1)}=\alpha^{(2)}=:\alpha$. Further, write
\[
\Phi_\bX = \Phi_{\alpha^{(1)}, K^{(1)},\load^{(1)}, \Phifactor^{(1)}} = \Phi_{\alpha^{(2)}, K^{(2)},\load^{(2)}, \Phifactor^{(2)}} 
\quad \text{ and } \quad
\shittyTPDM = \int_{\simplex{d-1}} \bx\bx^\top \, \Phi_{\bX}(\diff \bx).
\]
As a consequence of Proposition~\ref{prop:tpdm-latent-factor-linear-model-regular-variation} and Lemma~\ref{lem:identifiability_pure}\,\eqref{item:identifieabilty-3}, we have $I:=I(\load^{(1)}) = I(\load^{(2)})$. Further, Lemma~\ref{lem:identifiability_pure}\,\eqref{item:identifieabilty-4} implies that $K:=K^{(1)}=K^{(2)}$, and that there exists a permutation $\pi:[K] \to [K]$ such that $I_a(\load^{(1)}) = I_{\pi(a)}(\load^{(2)})$. Let $\Pi=(\be_{\pi(1)}, \dots, \be_{\pi(K)}) = (\indicator{\{a=\pi(b)\}})_{a,b\in[K]} \in [0,1]^{K\times K}$. Next, for each $a\in[K]$ and each $j\in I_a(\load^{(1)})$, we have $\load_{j\cdot}^{(1)} \in \{\pm \be_a^\top\}$ and $\load_{j\cdot}^{(2)} \in \{\pm \be_{\pi(a)}^\top\}$. In particular, there exists a sign $s_a \in \{-1,1\}$ such that $\load_{ja}^{(1)}=s_a\load_{j\pi(a)}^{(2)}$. Writing $D=\diag(s_1, \dots, s_K) \in \R^{K \times K}$, and $P=\Pi \cdot D =(s_1\be_{\pi(1)}, \dots, s_K\be_{\pi(K)})=(s_b\indicator{\{a=\pi(b)\}})_{a,b\in[K]}$, the equation $\load_{ja}^{(1)}=s_a\load_{j\pi(a)}^{(2)}$ can be rewritten as
$
\load^{(1)}_{ja} = (\load^{(2)} \cdot P)_{ja},
$
and we have shown that it holds for all $ a \in [K]$ and $j \in I_a(\load^{(1)})$.
In other words, 
\begin{align} \label{eq:permutation-load}
\load^{(1)}_{j\cdot } =  (\load^{(2)} \cdot P)_{j\cdot} \qquad \forall j \in I,
\end{align}
and we now show that the same equation also holds for $j \in [d] \setminus I$. For that purpose, define $C^{(r)} := C_{\alpha^{(r)}, \load^{(r)}, \Phifactor^{(r)}}$, and note that
\begin{align} \label{eq:tpdm-rv-decomposition}
\sum_{b \in [K]}  \sum_{b' \in [K]} \load_{jb}^{(r)} C^{(r)}(b,b')\load_{\ell b'} ^{(r)}  = \shittyTPDM(j,\ell)
\end{align}
by Proposition~\ref{prop:tpdm-latent-factor-linear-model-regular-variation}, with the right-hand side not depending on $r$.  Fix $a,b\in[K]$, and let $j\in I_a(\load^{(r)})$ and $\ell \in I_b(\load^{(r)})$. The previous display then yields
$
C^{(r)}(a,b) =  \load_{ja}^{(r)} \load_{\ell b} ^{(r)}\shittyTPDM(j,\ell),
$
where have used that $\load_{ja}^{(r)}, \load_{\ell b} ^{(r)} \in \{-1,1\}$. Hence, for $j\in I_a(\load^{(1)})=I_{\pi(a)}(\load^{(2)})$ and $\ell \in I_b(\load^{(1)})=I_{\pi(b)}(\load^{(2)})$,
\[
C^{(1)}(a,b) = \load_{ja}^{(1)}\load_{\ell b} ^{(1)} \shittyTPDM(j,\ell) = s_a s_b \load_{j\pi(a)}^{(2)}\load_{\ell \pi(b)} ^{(2)} \shittyTPDM(j,\ell)
=
s_a s_b C^{(2)}(\pi(a),\pi(b)), 
\]
where we have used that $\load_{ja}^{(1)}=s_a\load_{j\pi(a)}^{(2)}$ and  $\load_{\ell b}^{(1)}=s_b\load_{\ell\pi(b)}^{(2)}$. The formula in the previous display holds for all $a,b \in [K]$, which can concisely be written as $C^{(1)} = P^\top C^{(2)} P$.

Fix $a \in [K]$, $j \in I_a(\load^{(r)})$ and $\ell \in [d] \setminus I$. Observing that $\load_{j\cdot}^{(r)} \in \{\pm \be_a^{\top}\}$, Equation~\eqref{eq:tpdm-rv-decomposition} implies that
\[
\load_{j,a}^{(r)}\shittyTPDM(j,\ell) 
=  
\sum_{b\in[K]}  C^{(r)}(a,b)\load_{\ell,b}^{(r)}
=C^{(r)}_{a\cdot} \bm \beta^{(r)}(\ell),
\]
where $\bm \beta^{(r)}(\ell) = (\load_{\ell,\cdot}^{(r)})^\top$ denotes the $\ell$th row of $\load^{(r)}$ (transposed) and where $C^{(r)}_{a\cdot}=(C_{a1}^{(r)}, \dots, C_{aK}^{(r)})$ is the $a$th row of $C^{(r)}$. Since the previous equation holds for all $j \in I_a$, we get that 
\[
C^{(r)}_{a\cdot} \bm \beta^{(r)}(\ell) = \frac{1}{|I_a(\load^{(r)})|} \sum_{j \in I_a(\load^{(r)})} \load_{j,a}^{(r)}\shittyTPDM(j,\ell) =: \theta_a^{(r)}(\ell) \qquad \forall a \in[K], \ell \in [d] \setminus I,
\]
or, in matrix-vector notation,
\[
C^{(r)} \bm \beta^{(r)}(\ell)  = \bm \theta^{(r)}(\ell) := (\theta_1^{(r)}(\ell), \dots, \theta_K^{(r)}(\ell))^\top \qquad \forall  \ell \in [d] \setminus I.
\]
Observe that, since $I_a(\load^{(1)})=I_{\pi(a)}(\load^{(2)})$ and $\load_{ja}^{(1)}=s_a\load_{j\pi(a)}^{(2)}$, we have 
$\theta_a^{(1)}(\ell)  = s_a \theta_{\pi(a)}^{(2)}(\ell) = (P^\top\bm \theta^{(2)}(\ell))_a$, that is,  $\bm \theta^{(1)}(\ell) = P^\top \bm \theta^{(2)}(\ell)$. Since $C^{(r)}$ is invertible by assumption, the previous display can be rewritten as $\bm \beta^{(r)}(\ell) = (C^{(r)})^{-1} \bm \theta^{(r)}(\ell)$. 
The relations $C^{(1)} = P^\top C^{(2)} P$ and $\bm \theta^{(1)}(\ell) = P^\top \bm \theta^{(2)}(\ell)$ shown above then yield
\[
        \bm \beta^{(1)}(\ell) 
        = (P^\top C^{(2)} P)^{-1} P^\top \bm \theta^{(2)}(\ell) 
        = P^\top (C^{(2)})^{-1} P P^\top \bm \theta^{(2)}(\ell) 
        = P^\top (C^{(2)})^{-1} \bm \theta^{(2)}(\ell) 
        = P^\top \bm \beta^{(2)}(\ell),
\]
which means that $\load^{(1)}_{\ell\cdot} =  (\load^{(2)} \cdot P)_{\ell\cdot}$ for all $\ell \in [d] \setminus I$ as required. Together with \eqref{eq:permutation-load}, we get that $\load^{(1)} = \load^{(2)}P$.

It remains to show that $\Phifactor^{(1)} = L_{P^\top}\, \# \, \Phifactor^{(2)}$, or, equivalently, that $L_P\, \# \, \Phifactor^{(1)} = \Phifactor^{(2)}$, where $L_P: \simplex{K-1} \to \simplex{K-1}$ is defined by $L_P(\bz) = P \bz$. Recall the function $f_\load: \simplex{K-1} \to \simplex{d-1}$ defined by $f_\load(\bz)=\load \bz / \| \load \bz\|$. In view of $\load^{(1)} = \load^{(2)}P$, we have $f_{\load^{(1)}}(\bz) = f_{\load^{(2)}P}(\bz) =f_{\load^{(2)}}(P\bz) = (f_{\load^{(2)}} \circ L_P)(\bz)$ for all $\bz \in \simplex{K-1}$.

Fix an arbitrary Borel set $D \subset \simplex{K-1}$. Choose a Borel set $C \subset \simplex{d-1}$ such that $f_{\load^{(2)}}^{-1}(C) = D$; since $f_{\load^{(2)}}$ is injective, we may for instance take $C=f_{\load^{(2)}}(D)$. On the one hand, by definition in \eqref{eq:phi-linear-factor}, we have 
\[
\Phi_\bX(C) 
= 
\Phi_{\alpha, K,\load^{(2)}, \Phifactor^{(2)}}(C)
=
(\Phifactor^{(2)})^{\alpha, \load^{(2)}}(D)
\]
On the other hand,  
\begin{align*}
\Phi_\bX(C)  
= 
\Phi_{\alpha, K,\load^{(1)}, \Phifactor^{(1)}}(C)
&=
(\Phifactor^{(1)})^{\alpha, \load^{(1)}}(f_{\load^{(1)}}^{-1}(C))
= (\Phifactor^{(1)})^{\alpha, \load^{(1)}}(L_P^{-1}(D)) = (L_P \, \# \, (\Phifactor^{(1)})^{\alpha, \load^{(1)}})(D).
\end{align*}
Since $D$ was arbitrary, we have shown that $L_P \, \# \, (\Phifactor^{(1)})^{\alpha, \load^{(1)}} = (\Phifactor^{(2)})^{\alpha, \load^{(2)}}$.
As a consequence, by the change of variable formula, for each Borel set $D \subset \simplex{K-1}$,
\begin{align*}
c_{\alpha, \load^{(2)}, \Phifactor^{(2)}} \Phifactor^{(2)}(D) 
&= 
\int_D \| \load^{(2)} \bm z\|^{-\alpha} \, (\Phifactor^{(2)})^{\alpha, \load^{(2)}}(\diff \bz)
\\&=
\int_{L_P^{-1}(D)} \| \load^{(2)} P \bm z\|^{-\alpha} \, (\Phifactor^{(1)})^{\alpha, \load^{(1)}}(\diff \bz)
\\&=
\int_{L_P^{-1}(D)} \| \load^{(1)}  \bm z\|^{-\alpha} \, (\Phifactor^{(1)})^{\alpha, \load^{(1)}}(\diff \bz) = c_{\alpha, \load^{(1)}, \Phifactor^{(1)}} \Phifactor^{(1)}(L_P^{-1}(D)),
\end{align*}
where we have used that $\load^{(1)} P^\top = \load^{(2)}$.
Take $D=\simplex{K-1}$ in the previous to equation to deduce that $c_{\alpha, \load^{(1)}, \Phifactor^{(1)}}=c_{\alpha, \load^{(2)}, \Phifactor^{(2)}}$, which then implies
$
\Phifactor^{(2)}(D)  = \Phifactor^{(1)}(L_P^{-1}(D)).
$
Since $D$ was arbitrary, we have shown that $L_P\, \# \, \Phifactor^{(1)} = \Phifactor^{(2)}$ as required.
\end{proof}

\section{Further auxiliary results}
\label{sec:auxiliary-results}

\begin{lemma}
    \label{lem:vanishing_noise}
    Let $\bX_1$ and $\bX_2$ be two random vectors with values in $\R^{d_1}$ and $\R^{d_2}$ respectively where $d_1, d_2 \in \mathbb{N}_{\ge 1}$. Assume $\bX_1$ is regularly varying with tail index $\alpha>0$, auxiliary function $g(\cdot)$ and associated exponent measure $\nu_{\bX_1}$. Moreover, assume that 
   $\Prob(\|\bX_2 \|>x) =o(x^{-\alpha})$ as $x \to \infty$. Then $(\bX_1, \bX_2)$ is regularly varying  with tail index $\alpha>0$, auxiliary function $g(\cdot)$ and associated exponent measure $\nu_{\bX_1}  \otimes \delta_{\bm 0}$. If $d_1 = d_2 = d$, then both $\bX_1 + \bX_2$ and $\bX_1 \vee \bX_2$ are regularly varying with tail index $\alpha$, auxiliary function $g(\cdot)$ and associated exponent measure $\nu_{\bX_1}$.
\end{lemma}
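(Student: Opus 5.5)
The natural route is the characterization of vague convergence in $\Mb(\Eb_0^{d_1+d_2})$ via test functions. I would first treat the joint vector $(\bX_1,\bX_2)$ and then obtain the statements for $\bX_1+\bX_2$ and $\bX_1\vee\bX_2$ from the continuous mapping theorem for regular variation (Proposition 2.1.12 in \cite{kulik2020heavy}).

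\textbf{Joint vector.} Take a bounded Lipschitz function $f\ge 0$ on $\R^{d_1+d_2}$ whose support lies in $\{\|(\bx_1,\bx_2)\|>\eps\}$ for some $\eps>0$; such functions determine convergence in $\Mb$. Use a product-type norm $\|(\bx_1,\bx_2)\|=\|\bx_1\|\vee\|\bx_2\|$, which is harmless by equivalence of norms. Fix $\delta\in(0,\eps)$ and write
\[
g(x)\Exp[f(x^{-1}(\bX_1,\bX_2))]=g(x)\Exp[f(x^{-1}(\bX_1,\bm 0))]+g(x)\Exp\big[\{f(x^{-1}(\bX_1,\bX_2))-f(x^{-1}(\bX_1,\bm 0))\}\big].
\]
The first term converges to $\int f(\bx_1,\bm 0)\,\nu_{\bX_1}(\diff\bx_1)=\int f\,\diff(\nu_{\bX_1}\otimes\delta_{\bm 0})$. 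This holds because $\bx_1\mapsto f(\bx_1,\bm 0)$ is continuous, bounded and supported in $\{\|\bx_1\|>\eps\}$.

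For the remainder, split on the event $\{\|\bX_2\|>\delta x\}$. On that event the difference is at most $2\|f\|_\infty$. Its contribution is therefore $\le 2\|f\|_\infty\, g(x)\Prob(\|\bX_2\|>\delta x)$, which tends to $0$ because $g$ is regularly varying of index $\alpha$ and $\Prob(\|\bX_2\|>x)=o(x^{-\alpha})$.

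On the complementary event, the difference is nonzero only if one of the two arguments lies in the support of $f$. Since $\delta<\eps$, this forces $\|\bX_1\|>\eps x$, and by the Lipschitz property the difference is at most $\mathrm{Lip}(f)\,\delta$. The contribution is thus $\le \mathrm{Lip}(f)\,\delta\, g(x)\Prob(\|\bX_1\|>\eps x)$, whose limsup is $\mathrm{Lip}(f)\,\delta\,\nu_{\bX_1}(\{\|\bx_1\|>\eps\})$. Letting $\delta\downarrow 0$ completes this part. The resulting limit measure is nonzero, so it is a genuine exponent measure with the same $g$ and $\alpha$.

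\textbf{Sum and maximum.} Apply the continuous mapping result to $h(\bx_1,\bx_2)=\bx_1+\bx_2$ and to $h(\bx_1,\bx_2)=\bx_1\vee\bx_2$. Both maps are continuous and positively homogeneous. The limit is $(\nu_{\bX_1}\otimes\delta_{\bm 0})\circ h^{-1}$. Since $h(\bx_1,\bm 0)=\bx_1$ for the sum, this equals $\nu_{\bX_1}$.

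The step I expect to be the main obstacle is the maximum. Here $\bx_1\vee\bm 0$ differs from $\bx_1$ when $\bx_1$ has negative coordinates. The map also fails the condition that $h^{-1}$ of sets bounded away from zero is bounded away from zero, since $h(\bx_1,\bm 0)$ can be $\bm 0$. Presumably the lemma is only used for nonnegative vectors; if so, I would add that restriction and then get $h(\bx_1,\bm 0)=\bx_1$, and hence limit $\nu_{\bX_1}$. In general I would avoid the mapping theorem altogether and apply the direct Lipschitz splitting argument above to $x^{-1}(\bX_1\vee\bX_2)$. The only change is the bound $\|\bX_1\vee\bX_2-\bX_1\vee\bm 0\|\le\|\bX_2\|$. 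This yields the limit $\nu_{\bX_1}\circ(\cdot\vee\bm 0)^{-1}$, which equals $\nu_{\bX_1}$ when $\bX_1$ is nonnegative.
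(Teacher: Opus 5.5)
\textbf{Gap: the light-tail step, and in fact the statement as written.} The step that fails is ``$g(x)\Prob(\|\bX_2\|>\delta x)\to 0$ because $g$ is regularly varying of index $\alpha$ and $\Prob(\|\bX_2\|>x)=o(x^{-\alpha})$.'' Write $g(x)=x^\alpha\ell(x)$ with $\ell$ slowly varying. The hypothesis then only gives $g(x)\Prob(\|\bX_2\|>\delta x)=\delta^{-\alpha}\ell(x)\cdot o(1)$, which need not vanish when $\ell(x)\to\infty$.

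This cannot be patched inside your argument, because the lemma is false as literally stated in that regime. Take $d_1=d_2=1$ and nonnegative $X_1,X_2$ with $\Prob(X_1>x)=x^{-\alpha}/\log x$ and $\Prob(X_2>x)=x^{-\alpha}/\log\log x$ for large $x$. Then $X_1$ is regularly varying with $g(x)=x^\alpha\log x$, and every admissible auxiliary function is asymptotically proportional to this one. Also $X_2$ satisfies $\Prob(X_2>x)=o(x^{-\alpha})$, yet $g(x)\Prob(X_2\ge x)\ge\log x/\log\log x\to\infty$. So the closed set $\{(y_1,y_2):|y_2|\ge 1\}$, which is bounded away from zero and null for $\nu_{X_1}\otimes\delta_{0}$, receives unbounded mass. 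Likewise $g(x)\Prob(X_1+X_2>x)\to\infty\neq\nu_{X_1}((1,\infty))=1$.

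The hypothesis that makes the lemma true is $g(x)\Prob(\|\bX_2\|>x)\to 0$, i.e.\ $\Prob(\|\bX_2\|>x)=o(\Prob(\|\bX_1\|>x))$. The stated hypothesis implies it whenever $g(x)=O(x^\alpha)$. The same correction is needed in Proposition~\ref{prop:genuine-linear-factor-dependence-regular-variation}, although the paper only uses it with $\bE=\bm 0$. Under the corrected hypothesis your step holds, since $g(x)/g(\delta x)\to\delta^{-\alpha}$. The rest of your argument then gives a complete, self-contained proof: Lipschitz test functions, the split on $\{\|\bX_2\|>\delta x\}$, the Lipschitz bound on the complement, and the mapping theorem for the sum. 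The paper itself only points to Lemma~5.1 of \cite{boulin2026structured}, so there is no independent route to compare against.

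\textbf{The maximum.} Your caution here is justified and exposes a second defect of the statement. For $\R^d$-valued vectors the claim fails already with $\bX_2=\bm 0$. Then $\bX_1\vee\bX_2=\bX_1\vee\bm 0$ has limit measure $\nu_{\bX_1}\circ(\cdot\vee\bm 0)^{-1}$, which differs from $\nu_{\bX_1}$. It is even null if $\nu_{\bX_1}$ is concentrated on $(-\infty,0]^d$. The cited lemma treats nonnegative vectors, and this part does not survive the ``minor adaptation''.

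Your stated reason is slightly off, though. The map $h(\bx_1,\bx_2)=\bx_1\vee\bx_2$ is continuous and positively homogeneous with $h(\bm 0)=\bm 0$, so preimages of sets bounded away from zero are bounded away from zero. The only issue is the form, and possible nullity, of the pushed-forward measure, exactly as your direct Lipschitz argument shows.
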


\begin{proof}[Proof of Lemma~\ref{lem:vanishing_noise}]
    The proof is a minor adaptation of the proof of Lemma 5.1 in \cite{boulin2026structured}; a lemma that treats the special case where $d_1=d_2$ and where the coordinates of $\bm X_j$ are non-negative. 
\end{proof}

\begin{lemma} \label{lem:constant-in-front-of-L}
    Consider the situation of Lemma 2.1 in \cite{boulin2026structured}. Let $\| \cdot \|$ be an arbitrary norm on $\R^d$. The constant $\varsigma=\varsigma_{\bm Y}^{\| \cdot\|} = \nu_{\bm Y}(\{\bm y : \| \bm y \| \ge 1\})$ defined in their item (iii) satisfies
    \[
    \varsigma_{\bm Y}^{\| \cdot\|} 
    =
    d\int_{\simplexpos{d-1}(\| \cdot\|_1)}  \| \bm \lambda\| \, \diff\Phi_{\bm Y}^{\|\cdot\|_1}(\bm \lambda)
    \]
    where $\Phi_{\bm Y}^{\|\cdot\|_1}$ is the spectral measure of $\bm Y$ with respect to the 1-norm.
\end{lemma}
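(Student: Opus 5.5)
The plan is to compute $\varsigma_{\bm Y}^{\|\cdot\|}=\nu_{\bm Y}(\{\bm y:\|\bm y\|\ge 1\})$ in polar coordinates with respect to the $1$-norm and then integrate out the radial part. Since $\bm Y$ has standard Pareto margins, it is regularly varying with index $\alpha=1$ and auxiliary function $g(x)=x$. By \eqref{eq:polar-decomposition} applied with the $1$-norm, the image of $\nu_{\bm Y}$ under $\bm y\mapsto(\|\bm y\|_1,\bm y/\|\bm y\|_1)$ equals $\varsigma_{\bm Y}^{\|\cdot\|_1}\,(\nu_1\otimes\Phi_{\bm Y}^{\|\cdot\|_1})$, where $\nu_1(\diff r)=r^{-2}\diff r$. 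The proof therefore reduces to two facts: $\varsigma_{\bm Y}^{\|\cdot\|_1}=d$, and a radial integration.

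\emph{Step 1 ($\varsigma^{\|\cdot\|_1}_{\bm Y}=d$).}
\begin{compactenum}[(a)]
\item For each $j\in[d]$ the set $\{\bm y: y_j>1\}$ is bounded away from zero.
\item Its boundary $\{y_j=1\}$ is $\nu_{\bm Y}$-null. This follows from the homogeneity $\nu_{\bm Y}(cA)=c^{-1}\nu_{\bm Y}(A)$: the sets $\{y_j=c\}$, $c>0$, are disjoint, so only countably many of them can carry positive mass, and by scaling they all carry the same positive multiple of a common value, which must therefore be zero.
\item By (a), (b) and the Portmanteau theorem for $\Mb(\Eb_0^d)$-convergence, $\nu_{\bm Y}(\{y_j>1\})=\lim_{x\to\infty} x\,\Prob(Y_j>x)=1$.
\item In polar coordinates, $\{y_j>1\}$ corresponds to $\{r\lambda_j>1\}$. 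Hence
\[
1=\varsigma_{\bm Y}^{\|\cdot\|_1}\int \int_{1/\lambda_j}^\infty r^{-2}\,\diff r\,\Phi_{\bm Y}^{\|\cdot\|_1}(\diff\bm\lambda)=\varsigma_{\bm Y}^{\|\cdot\|_1}\int\lambda_j\,\Phi_{\bm Y}^{\|\cdot\|_1}(\diff\bm\lambda).
\]
\item Sum over $j\in[d]$ and use $\sum_j\lambda_j=1$ on $\simplexpos{d-1}(\|\cdot\|_1)$. This gives $d=\varsigma_{\bm Y}^{\|\cdot\|_1}$.
\end{compactenum}
Equivalently, Step 1 follows from the moment constraint in \eqref{eq:set-of-spectral-dependence-measures} with $p=1$, which gives all marginal expectations equal to $d^{-1}$.

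\emph{Step 2 (radial integration).} In polar coordinates, $\{\|\bm y\|\ge1\}$ is $\{r\|\bm\lambda\|\ge1\}$, using homogeneity of the norm. Therefore
\[
\nu_{\bm Y}(\{\|\bm y\|\ge1\})=d\int_{\simplexpos{d-1}(\|\cdot\|_1)}\int_{1/\|\bm\lambda\|}^\infty r^{-2}\,\diff r\,\Phi_{\bm Y}^{\|\cdot\|_1}(\diff\bm\lambda)=d\int\|\bm\lambda\|\,\Phi_{\bm Y}^{\|\cdot\|_1}(\diff\bm\lambda).
\]
This uses $\|\bm\lambda\|>0$ on the sphere. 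Whether the set is defined with $\ge$ or $>$ is irrelevant, because the level set $\{\|\bm y\|=1\}$ is $\nu_{\bm Y}$-null by the same scaling argument as in Step 1(b).

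The only delicate point is Step 1(c), identifying $\nu_{\bm Y}(\{y_j>1\})$ with the marginal limit. It relies on that set being a continuity set bounded away from zero, which the homogeneity argument of Step 1(b) supplies.
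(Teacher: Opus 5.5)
Your proof is correct and is essentially the paper's proof: Step~2 is exactly the paper's polar-decomposition-plus-Fubini computation, and Step~1 adds a self-contained derivation of $\varsigma_{\bm Y}^{\|\cdot\|_1}=d$ (from $\nu_{\bm Y}(\{y_j>1\})=1$ summed over $j$), which the paper simply invokes as known, together with a justification for treating $\ge 1$ and $>1$ interchangeably. Drop the ``equivalently'' aside, though: in the paper the $p=1$ case of \eqref{eq:set-of-spectral-dependence-measures} (Lemma~\ref{lem:set-of-spectral-measures}) is itself proved using the present lemma, so citing it here would be circular.
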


\begin{proof}
    Consider the function $T: [0,\infty)^d \setminus \{0\} \to (0,\infty) \times \simplexpos{d-1}(\| \cdot\|_1)$ defined by 
    \[
    T(\bm y) = \Big( \| \bm y \|_1, \frac{\bm y}{ \| \bm y \|_1} \Big).
    \]
    Then
    \[
    \big\{\bm y : \| \bm y \|>1\big\} = T^{-1} \big(\big\{ (r, \bm \lambda)\in (0,\infty) \times \simplexpos{d-1}(\| \cdot\|_1): r > 1/\|\bm \lambda\|  \big\}\big).
    \]
    As a consequence, by definition of the spectral measure $\Phi_{\bm Y}^{\|\cdot\|_1}$   in (2.3) in \cite{boulin2026structured},
    \begin{align*}
    \varsigma_{\bm Y}^{\| \cdot\|} 
    &= 
    \nu_{\bm Y}(\{\bm y \in [0,\infty)^d \setminus \{0\}: \| \bm y \| \ge 1\})
    \\&=
    \varsigma^{\| \cdot\|_1}  \cdot (\nu_1 \otimes \Phi_{\bm Y}^{\|\cdot\|_1})\big(\big\{ (r, \bm \lambda)\in (0,\infty) \times \simplexpos{d-1}(\| \cdot\|_1): r > 1/\|\bm \lambda\|  \big\}\big)
    \\&=
    d \int_{\simplexpos{d-1}(\| \cdot\|_1)}  \| \bm \lambda\| \, \diff\Phi_{\bm Y}^{\|\cdot\|_1}(\bm \lambda),
    \end{align*}
    where we have used Fubini's theorem and the fact that $\varsigma^{\| \cdot\|_1}=d$.
\end{proof}

\begin{lemma} \label{lem:set-of-spectral-measures}
    Let $p \in [1,\infty]$. Then, the set of spectral dependence measures with respect to the $\|\cdot \|_p$-norm, $\setspectraldepposd{\| \cdot \|_p}$, satisfies
    \[
     \setspectraldepposd{\| \cdot \|_p}
    =
    \Big\{ \Psi \text{ prob.\, measure on } \simplexpos{d-1}(\| \cdot\|_p) \, \Big| \, \Exp_{\bm \Lambda \sim \Psi}[\Lambda_1] =  \dots = \Exp_{\bm \Lambda \sim \Psi}[\Lambda_d] \in [d^{-1},d^{-1/p}]\Big\},
    \]
    where we interpret $d^{-1/\infty}=d^0=1$.
\end{lemma}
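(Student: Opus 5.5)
We prove both inclusions. Throughout we use the description of the spectral dependence measure from Lemma~2.1 in \cite{boulin2026structured}: $\Psi_\bX$ is the spectral measure $\Phi_\bY$ (w.r.t.\ $\|\cdot\|_p$) of a regularly varying $\bY$ with index $1$ and standard Pareto margins.

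\emph{Necessity.} Let $\Psi\in\setspectraldepposd{\|\cdot\|_p}$, with associated $\bY$, exponent measure $\nu_\bY$ and constant $\varsigma=\varsigma_\bY^{\|\cdot\|_p}$. Standard Pareto margins force $\nu_\bY(\{y_j>1\})=1$ for every $j\in[d]$. By the polar decomposition \eqref{eq:polar-decomposition} with $\alpha=1$,
\[
\nu_\bY(\{y_j>1\})=\varsigma\int \int \mathbf 1(r\lambda_j>1)\,r^{-2}\,\diff r\,\Psi(\diff\bm\lambda)=\varsigma\,\Exp_{\bm\Lambda\sim\Psi}[\Lambda_j].
\]
Hence all marginal means coincide and equal $\mu:=1/\varsigma$. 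For the range of $\mu$, use $\|\bm\lambda\|_p=1$ and the norm comparisons $\|\bm\lambda\|_\infty\le\|\bm\lambda\|_p\le\|\bm\lambda\|_1\le d^{1-1/p}\|\bm\lambda\|_p$. Summing over $j$ gives $d\mu=\Exp\|\bm\Lambda\|_1$, which lies in $[1,d^{1-1/p}]$. Therefore $\mu\in[d^{-1},d^{-1/p}]$.

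\emph{Sufficiency.} Let $\Psi$ be a probability measure on $\simplexpos{d-1}(\|\cdot\|_p)$ with common marginal mean $\mu\in[d^{-1},d^{-1/p}]$.
\begin{compactenum}[(i)]
\item Let $R$ be standard Pareto and $\bm\Lambda\sim\Psi$ independent, and set $\bY:=\mu^{-1}R\bm\Lambda$.
\item The calculation in the proof of Proposition~\ref{prop:genuine-linear-factor-dependence} gives $\Prob(Y_j>x)=\mu^{-1}\Exp[\Lambda_j]/x=1/x$ for $x\ge1$. This requires $\mu^{-1}\Lambda_j\le1$ a.s.
\item Since $R\bm\Lambda$ has spectral measure $\Psi$ (a polar-coordinates computation, as in the proof of Proposition~\ref{prop:genuine-linear-factor-dependence-regular-variation}), $\bY$ is regularly varying of index $1$ with spectral measure $\Psi$ and Pareto tails.
\end{compactenum}

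The main obstacle is the condition in (ii), because $\Lambda_j$ can be as large as $1>\mu$. I would fix it by replacing $\mu^{-1}R\Lambda_j$ coordinatewise by $\max(\mu^{-1}R\Lambda_j,\tilde Y_j)$, where the $\tilde Y_j$ are chosen to correct the margins only below level $\mu^{-1}$. Modifying a bounded region does not change the tail, so the exponent measure stays the same; Lemma~\ref{lem:vanishing_noise} controls the effect of such light-tailed modifications. Alternatively, one may apply a monotone marginal transformation that is the identity on $[\mu^{-1},\infty)$, which produces exactly standard Pareto margins without affecting $\nu_\bY$. The exponent measure of this corrected vector is $\mu^{-1}(\nu_1\otimes\Psi)\circ T$, so its spectral measure, and hence its spectral dependence measure, is $\Psi$. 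This shows $\Psi\in\setspectraldepposd{\|\cdot\|_p}$.

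Finally, note that the case $d=1$ is consistent with the statement: there $\mu=1=d^{-1}$.
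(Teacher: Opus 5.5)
Your proof is correct, and it differs from the paper's in two places.

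For necessity, you obtain $\Exp[\Lambda_j]=1/\varsigma$ exactly as the paper does; the paper phrases $\nu_\bY(\{y_j>1\})=1$ as $L(\bm e_j)=1$. For the range of $\mu$, however, you work directly on the $p$-sphere: you apply $1=\|\bm\lambda\|_p\le\|\bm\lambda\|_1\le d^{1-1/p}$ to $d\mu=\Exp\|\bm\Lambda\|_1$. The paper instead first rewrites $\varsigma^{\|\cdot\|_p}=d\int\|\bm\lambda\|_p\,\Phi^{\|\cdot\|_1}_\bY(\diff\bm\lambda)$ via Lemma~\ref{lem:constant-in-front-of-L}, and then bounds $\|\cdot\|_p$ on the $1$-simplex. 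Both rest on the same $\ell_1$--$\ell_p$ comparison. Yours avoids the change of norm and shows at once that the interval constraint holds automatically for any probability measure on $\simplexpos{d-1}(\|\cdot\|_p)$ with equal marginal means. The paper's version additionally exhibits measures attaining every value of $\varsigma$ in $[d^{1/p},d]$.

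For sufficiency, the paper only cites the construction $R\bm\Lambda$ from Lemma~2.2.2 in \cite{kulik2020heavy}. It leaves implicit why marginal standardization to unit Pareto does not change the spectral measure, which is precisely where the equal-means hypothesis is used. Your explicit construction makes this visible, which is a genuine gain in completeness.

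The one step you assert without checking is that suitable correcting variables $\tilde Y_j$ exist. Take $\tilde Y_j$ independent of everything, with distribution function $G_j(x)=(1-x^{-1})/\Prob(\mu^{-1}R\Lambda_j\le x)$ on $[1,\mu^{-1})$, and $G_j=1$ on $[\mu^{-1},\infty)$. With $c_j=\Lambda_j/\mu$, a short computation shows that the logarithmic derivative of $G_j$ has the sign of $\Exp[(c_j-1)\mathbf 1\{c_j\ge x\}]$, which is $\ge 0$ for $x\ge 1$. Hence $G_j$ is a genuine distribution function. Then $\max(\mu^{-1}R\Lambda_j,\tilde Y_j)$ is exactly standard Pareto, and since $\tilde Y_j\le\mu^{-1}$, Lemma~\ref{lem:vanishing_noise} leaves the exponent measure unchanged.

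Your alternative via a deterministic monotone transformation needs one adjustment. When $\Psi(\{\lambda_j=0\})>0$, the variable $\mu^{-1}R\Lambda_j$ has an atom at $0$, so no deterministic map can produce a continuous margin. In that case you need a randomized probability integral transform.
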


\begin{proof}
    `$\supset$' follows from the construction in Lemma 2.2.2 in \cite{kulik2020heavy}. 
    For `$\subset$', we only need to show the constraints on the marginal expectations. By Lemma 2.1(iii) in \cite{boulin2026structured}, we have
    \[
    1 = L(\bm e_j) = \varsigma^{\| \cdot\|_p} \Exp_{\bm \Lambda_j \sim \Psi}[\Lambda_1]
    \]
    for any $j \in [d]$, with $\varsigma^{\| \cdot\|_p}=\varsigma_{\bm Y}^{\| \cdot\|_p}$ as in Lemma~\ref{lem:constant-in-front-of-L}.
    In particular, all marginal expectations are the same and equal to $1/\varsigma^{\| \cdot\|_p}$, and it remains to show that $\varsigma^{\| \cdot\|_p} \in [d^{1/p},d]$, with any value from the interval being attained. For that purpose, note that
    $
    \max\big\{ \| \bm x \|_p : \bm x \in[0,\infty)^d, \| \bm x \|_1=1 \big\} = 1,
    $
    which is attained at every unit vector. As a consequence, by Lemma~\ref{lem:constant-in-front-of-L},
    \[
    \varsigma^{\| \cdot\|_p} 
    = d\int_{\simplexpos{d-1}(\| \cdot\|_1)}  \| \bm \lambda\|_p \, \diff\Phi^{\|\cdot\|_1}(\bm \lambda)
    \le d,
    \]
    with equality if $\Phi^{\|\cdot\|_1}=d^{-1}\sum_{j \in [d]} \delta_{\bm e_j}$. Likewise, 
    $
    \min\big\{ \| \bm x \|_p :\bm x \in[0,\infty)^d, \| \bm x \|_1=1 \big\} = d^{1/p-1},
    $
    with equality at $\bm x = (1/d, \dots, 1/d)$. As a consequence,  by Lemma~\ref{lem:constant-in-front-of-L},
    \[
    \varsigma^{\| \cdot\|_p} 
    = d\int_{\simplexpos{d-1}(\| \cdot\|_1)}  \| \bm \lambda\|_p \, \diff\Phi^{\|\cdot\|_1}(\bm \lambda)
    \ge d^{1/p},
    \]
    with equality if $\Phi^{\|\cdot\|_1}=\delta_{(1/d, \dots, 1/d)}$. Finally, $ \varsigma^{\| \cdot\|_p} $ can attain any value in the interval $[d^{1/p}, d]$; simply consider a convex combination of the maximizing and minimizing measures.
\end{proof}

\begin{lemma}
\label{lem:geometry}
Let $\|\cdot\|_{1}$ and $\|\cdot\|_{2}$ be two norms on $\R^d$, and let $\simplex{d-1}_1$ and $\simplex{d-1}_2$ denote their associated unit spheres. Likewise, let $|\cdot|_{1}$ and $|\cdot|_{2}$ be two norms on $\R^K$, with associated spheres $\simplex{K-1}_{1}$ and $\simplex{K-1}_{2}$. Suppose that $\bX \in \R^d$ is of latent linear $K$-factor form with respect to the observation space norm $\|\cdot\|_2$ and the latent norm $|\cdot|_{2}$; that is, the spectral measure of $\bX$ with respect to $\|\cdot\|_2$ is given by
\[
\Phi_{2}
    = 
f_{\load}^{2 \to 2} \, \# \,
\phi_{2}^{\alpha, \load, 2}
\]
for some $K \in \N_{\ge 1}$, a full column rank matrix $\load \in \R^{d \times K}$, and a probability measure $\phi_{2}$ on $\simplex{K-1}_{2}$. Here, for $i,j\in\{1,2\}$, $f_{\load}^{i \to j}: \simplex{K-1}_i \to \simplex{d-1}_j$ is given by $\bz \mapsto \load \bz/\| \load \bz\|_j$ and $\phi_{i}^{\alpha, \load, j}$ is defined as in \eqref{eq:cZB}, but with $\Phifactor$ replaced by $\phi_{i}$ and with $\| \cdot \|$ replaced by $\| \cdot \|_j$; i.e.,
\[
\phi_{i}^{\alpha, \load, j}(\diff \bz) = c_{\alpha, \load, \phi_i,j}^{-1} \|\load \bz \|_j^\alpha \, \phi_{i}(\diff z)
\quad  \text{ with } \quad 
c_{\alpha, \load, \phi_i,j}
=
\int_{\simplex{K-1}_i}\| \load \bz \|_j^\alpha \, \phi_{i}(\diff z).
\]
Then the following holds:
\begin{compactenum}
\item \label{item:lemma_geometry_1} \textbf{Invariance under changes of the ambient-space norm.}  
Changing the norm on $\R^d$ from $\|\cdot\|_{2}$ to $\|\cdot\|_{1}$ leaves the latent representation unchanged: the spectral measure of $\bX$ with respect to $\|\cdot\|_1$ is given by
\[
    \Phi_{1}
    = 
    f_{\load}^{2 \to 1} \, \# \, {\phi}_{2}^{\alpha, \load, 1}.
\]
\item \label{item:lemma_geometry_2} \textbf{Reparameterization under changes of the latent-space norm.} 
$\bX$ is of latent linear $K$-factor form with respect to $|\cdot|_1$, with the same tail index $\alpha$, the same loading matrix $\load$, but a different factor spectral measure $\phi_1$. More specifically, 
if we define $\phi_{1}$ on $\simplex{K-1}_1$ by
\[
\phi_{1}(\diff \bz) = c^{-1} |\bz|_2^{-\alpha} (\varphi \,\#\, \phi_2)(\diff \bz) 
\quad \text{ with } \quad 
c = c_{\alpha,\varphi,\phi_{2}} = \int_{\simplex{K-1}_1}|\bz|_2^{-\alpha} (\varphi  \,\#\, \phi_{2})(\diff \bz), 
\]
where $\varphi : \simplex{K-1}_{2} \to \simplex{K-1}_{1}$ is defined by $\varphi(\bz) =\bz / |\bz|_{1}$, then
\[
\Phi_{2} = f_{\load}^{1 \to 2} \, \# \,\phi_1^{\alpha, \load, 2}.
\]
\end{compactenum}
\end{lemma}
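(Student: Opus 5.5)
The plan is to work directly at the level of exponent measures and use that the representation $\Phi_2 = f_\load^{2\to2}\,\#\,\phi_2^{\alpha,\load,2}$ encodes a polar decomposition of $\nu_\bX$ through a latent measure on $\R^K$. Concretely, define the measure $\mu$ on $\Eb_0^K$ by $\mu\circ T_2^{-1} = \nu_\alpha\otimes\phi_2$, where $T_2(\bz)=(|\bz|_2,\bz/|\bz|_2)$. The computation in the proof of Proposition~\ref{prop:genuine-linear-factor-dependence-regular-variation} shows that $\mu(h^{-1}(\cdot))$, with $h(\bz)=\load\bz$, has spectral measure $f_\load^{2\to2}\,\#\,\phi_2^{\alpha,\load,2}=\Phi_2$ with respect to $\|\cdot\|_2$. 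Since a spectral measure together with the index $\alpha$ determines the exponent measure up to a positive constant, I would conclude that $\nu_\bX = \xi\,\mu\circ h^{-1}$ for some $\xi>0$. This identity is norm-free, and both items will be read off from it by re-running the polar computation with different norms.

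For item~\ref{item:lemma_geometry_1}, I would repeat the computation from the proof of Proposition~\ref{prop:genuine-linear-factor-dependence-regular-variation} with $\|\cdot\|_1$ on the ambient space while keeping the latent polar coordinates with respect to $|\cdot|_2$. For a Borel $C\subset\simplex{d-1}_1$ this gives
\[
\nu_\bX\big(\{\bx:\|\bx\|_1>1,\ \bx/\|\bx\|_1\in C\}\big)=\xi\int_{\simplex{K-1}_2}\|\load\bz\|_1^\alpha\,\bm 1_C\big(\load\bz/\|\load\bz\|_1\big)\,\phi_2(\diff\bz).
\]
The only ingredient is $\int\bm 1\{r\|\load\bz\|_1>1\}\alpha r^{-\alpha-1}\diff r=\|\load\bz\|_1^\alpha$. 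After normalizing, this is exactly $f_\load^{2\to1}\,\#\,\phi_2^{\alpha,\load,1}$. Full column rank of $\load$ guarantees that $\|\load\bz\|_1$ is bounded away from zero, so all the tilts are well defined.

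For item~\ref{item:lemma_geometry_2}, I would re-express the latent measure $\mu$ in $|\cdot|_1$-polar coordinates. Write $\bz=r\bm\lambda$ with $\bm\lambda\in\simplex{K-1}_2$, and set $\bm\omega=\varphi(\bm\lambda)\in\simplex{K-1}_1$ and $s=r|\bm\lambda|_1$. The key step, and the one where I expect the main bookkeeping obstacle, is the change of variables
\[
\mu\big(\{\bz:|\bz|_1>t,\ \bz/|\bz|_1\in D\}\big)=\int_{\simplex{K-1}_2}\bm 1_D(\varphi(\bm\lambda))\,|\bm\lambda|_1^{\alpha}\,t^{-\alpha}\,\phi_2(\diff\bm\lambda).
\]
Using $|\varphi(\bm\lambda)|_2=1/|\bm\lambda|_1$, this equals $t^{-\alpha}\int_D|\bm\omega|_2^{-\alpha}(\varphi\,\#\,\phi_2)(\diff\bm\omega)=c\,t^{-\alpha}\phi_1(D)$. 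Hence $\mu\circ T_1^{-1}=c\,(\nu_\alpha\otimes\phi_1)$, so $\mu$ is also of the required product form with latent norm $|\cdot|_1$ and angular measure $\phi_1$, which is a probability measure by the choice of $c$. Applying the computation of Proposition~\ref{prop:genuine-linear-factor-dependence-regular-variation} once more, now with latent norm $|\cdot|_1$ and ambient norm $\|\cdot\|_2$, yields $\Phi_2=f_\load^{1\to2}\,\#\,\phi_1^{\alpha,\load,2}$.

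The constants $\xi$ and $c$ cancel on normalization to probability measures, so they need only be tracked up to positive factors.
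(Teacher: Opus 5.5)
Your argument is correct, but it is organized differently from the paper's. You introduce one norm-free object, the $(-\alpha)$-homogeneous latent measure $\mu$ on $\Eb_0^K$ with $\nu_\bX=\xi\,\mu\circ h^{-1}$. You then derive both items by changing polar coordinates on one side only and rerunning the radial integration from the proof of Proposition~\ref{prop:genuine-linear-factor-dependence-regular-variation}.

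The paper never lifts to $\R^K$. For (1) it polar-decomposes $\nu_\bX$ itself with respect to $\|\cdot\|_2$ and integrates out the radius to get $\Phi_1(C)\propto\int\bm 1_C(\bm\lambda/\|\bm\lambda\|_1)\,\|\bm\lambda\|_1^\alpha\,\Phi_2(\diff\bm\lambda)$. Only then does it substitute $\Phi_2=f_\load^{2\to2}\,\#\,\phi_2^{\alpha,\load,2}$; this is essentially your computation in a different order. For (2) it stays on the latent spheres: it writes $f_\load^{2\to2}=f_\load^{1\to2}\circ\varphi$ and checks $\varphi\,\#\,\phi_2^{\alpha,\load,2}=\phi_1^{\alpha,\load,2}$ by a single change of variables with $\varphi^{-1}(\bz)=\bz/|\bz|_2$.

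The paper's item (2) is therefore more elementary. It is a pure identity between pushforwards of probability measures. It needs no exponent measures and no appeal to the fact that a homogeneous measure is determined up to a constant by its angular part. It also avoids checking that $\mu\circ h^{-1}\in\Mb(\Eb_0^d)$, which in your route uses the full column rank of $\load$.

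In exchange, your route makes transparent why both invariances hold: the latent linear form is just the statement that $\nu_\bX$ is a linear image of a homogeneous measure on $\R^K$. Your policy of normalizing only at the end also handles constants cleanly. By contrast, if you track $\varsigma_1,\varsigma_2$ explicitly as the paper does, plugging in the whole sphere yields $\varsigma_1/\varsigma_2=c_{\alpha,\load,\phi_2,1}/c_{\alpha,\load,\phi_2,2}$ rather than the paper's claimed $\varsigma_1=\varsigma_2$. The final formula for $\Phi_1$ is unaffected.
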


\begin{proof}
(1) 
By definition of the spectral measure in \eqref{eq:definition-spectral-measure}  and Proposition 2.2.3 in \cite{kulik2020heavy} we have, for any Borel set $C \subset \simplex{d-1}_1$,
\begin{align*}
\Phi_1(C) 
&= 
\varsigma_1^{-1} \cdot \nu_{\bX}\Big( \Big\{ \bx \in \Eb_0^d : \| \bx \|_1 > 1, \frac{\bx}{\| \bx \|_1} \in C\Big\} \Big)
\\&=
\varsigma_1^{-1} \cdot \nu_{\bX}\Big( \Big\{ \bx \in \Eb_0^d : \| \bx \|_1 > 1, \frac{\bx}{\| \bx \|_2} \in h^{-1}(C)\Big\} \Big)
\\ &=
\varsigma_1^{-1}\varsigma_2
\int_{\simplex{d-1}_2} \int_0^\infty \indicator{\{ \|r\bm \lambda\|_1>1, \bm \lambda \in h^{-1}(C) \}} \, \alpha r^{-\alpha-1} \, \diff r \, \Phi_2(\diff \bm \lambda)
\\&=
\varsigma_1^{-1}\varsigma_2
\int_{\simplex{d-1}_2}  \indicator{ \{\bm \lambda \in h^{-1}(C) \}}\| \bm \lambda\|_1^\alpha \, \, \Phi_2(\diff \bm \lambda)
\\&=
\varsigma_1^{-1}\varsigma_2
\int_{\simplex{d-1}_2}  \indicator{ C}\Big(\frac{\bm \lambda}{\|\bm \lambda \|_1}\Big) \| \bm \lambda\|_1^\alpha \, \, \Phi_2(\diff \bm \lambda)
\end{align*}
where $h:\simplex{d-1}_2 \to \simplex{d-1}_1, \bm \lambda \mapsto \bm \lambda/\|\bm \lambda\|_1$ and where $\varsigma_j=\nu_\bX(\{ \bx \in \Eb_0^d: \| \bx \|_j >1 \})$.
By assumption on $\Phi_2$ and the change of variable formula, the right-hand side of the previous display can be written as
\begin{align*}
    \int_{\simplex{d-1}_2}  \indicator{C}\Big(\frac{\bm \lambda}{\|\bm \lambda \|_1}\Big) \| \bm \lambda\|_1^\alpha \, \, \Phi_2(\diff \bm \lambda)
    &=
    \int_{\simplex{d-1}_2}  \indicator{C}\Big(\frac{\bm \lambda}{\|\bm \lambda \|_1}\Big) \| \bm \lambda\|_1^\alpha \, \, \Phi_2(\diff \bm \lambda)
    \\&= 
    \int_{\simplex{K-1}_2}  \indicator{C}\Big(\frac{\load \bz}{\|\load \bz \|_1}\Big) \Big\| \frac{\load\bz}{\| \load \bz\|_2}\Big\|_1^\alpha \, \, 
    \phi_{2}^{\alpha, \load, 2}(\diff \bz)
    \\&= 
    c_{\alpha, \load, \phi_2,2}^{-1}
    \int_{\simplex{K-1}_2}  \indicator{C}\Big(\frac{\load \bz}{\|\load \bz \|_1}\Big) \| \load\bz \|_1^\alpha \, \, \phi_{2}(\diff \bz)
\end{align*}
Altogether, 
\begin{align*}
\Phi_1(C) 
=
\frac{\varsigma_2}{\varsigma_1c_{\alpha, \load, \phi_2,2}} \int_{\simplex{K-1}_2}  \indicator{C}\Big(\frac{\load \bz}{\|\load \bz \|_1}\Big) \| \load\bz \|_1^\alpha \, \, \phi_{2}(\diff \bz).
\end{align*}
Plugging in $C=\simplex{K-1}_1$, we obtain that $\varsigma_1=\varsigma_2$, which then implies
\begin{align*}
\Phi_1(C) 
=
\int_{\simplex{K-1}_2}  \indicator{C}\Big(\frac{\load \bz}{\|\load \bz \|_1}\Big)  \, \, \phi_{2}^{\alpha, \load, 1}(\diff \bz) = \phi_{2}^{\alpha, \load, 1}((f_\load^{2 \to 1})^{-1}(B)) = (f_\load^{2 \to 1} \, \#\, \phi_{2}^{\alpha, \load, 1})(B)
\end{align*}
as asserted.

(2) 
Note that $f_{\load}^{2 \to 2} = f_{\load}^{1 \to 2} \circ \varphi$, and that $(f \circ g) \# \mu = f \# (g \# \mu)$. Hence, 
\[
    \Phi_{2}
    =
    f_{\load}^{2 \to 2} \, \# \, \phi_2^{\alpha, \load, 2}
    = 
    f_{\load}^{1 \to 2} \, \# \, ( \varphi \,\#\, \phi_2^{\alpha, \load, 2}),
\]
so it remains to show that $\varphi \,\#\, \phi_2^{\alpha, \load, 2} = \phi_1^{\alpha, \load, 2}$. For an arbitrary Borel set $C \subset \simplex{K-1}_1$, we have, by the change of variable formula and the fact that $\varphi^{-1}(\bz) = \bz / |\bz |_2$,
\begin{align*}
    (\varphi \,\#\, \phi_2^{\alpha, \load, 2})(C)
    &=
    \frac1{c_{\alpha, \load, \phi_2,  2}} \int_{\varphi^{-1}(C)} \| B \bm z \|_2^\alpha \, \phi_2(\diff \bz)
    \\&=
    \frac1{c_{\alpha, \load, \phi_2,  2}} \int_{C} \| B \varphi^{-1}(\bm z) \|_2^\alpha \, (\varphi \,\#\,\phi_2)(\diff \bz) 
    \\&=
    \frac1{c_{\alpha, \load, \phi_2,  2}} \int_{C} \| B \bm z \|_2^\alpha |\bz|_2^{-\alpha} \, (\varphi \,\#\,\phi_2)(\diff \bz) 
    =
    \frac{c}{c_{\alpha, \load, \phi_2, 2}} \int_{C} \| B \bm z \|_2^\alpha \, \phi_1(\diff \bz) 
\end{align*}
If we plug in $C=\simplex{K-1}_1$, we obtain that 
\[
\frac{c}{c_{\alpha, \load, \phi_2, 2}} 
= 
\Big( \int_C \| B \bm z \|_2^\alpha \, \phi_1(\diff \bz) \Big)^{-1} = c_{\alpha, \load, \phi_1,2}^{-1}.
\]
Using this formula, the right-hand side of the penultimate display is $\phi_1^{\alpha, \load, 2}(C)$ as required.
\end{proof}

\section{More details on the case study}
\label{sec:details-on-case-study}

\subsection{Vertical Extrapolation of Wind Speeds to Hub Height}

As discussed in Section~\ref{sec:case-study}, the high-resolution HOSTRADA data set (spatial resolution 1\,km, temporal resolution 1\,hour) provides wind-speed measurements at a height of 10\,m only. Since modern wind turbines typically operate at a hub height of approximately 100\,m, an extrapolation procedure is required. We adopt a power-law model, which is standard in wind energy applications; see, for example, \cite{cai2021wind, jourdier2020evaluation, schallenberg2013methodological}. Specifically, the wind speed $v_h$ (in m\,s$^{-1}$) at height $h$ (in m above ground) is assumed to satisfy
\begin{align} \label{eq:power-law}
v_h = v_{10} \left( \frac{h}{10\,\mathrm{m}} \right)^{\shear},
\end{align}
where $v_{10}$ denotes the wind speed at the reference height of 10\,m and $\shear$ is the Hellmann (or friction) exponent. Depending on time of day, atmospheric stability, and surface characteristics, this exponent typically ranges between 0.1 and 0.6, with a commonly used benchmark value of 0.14 for neutral conditions over open terrain.

Solving \eqref{eq:power-law} for $\shear$ yields
\[
\shear = \frac{\ln(v_h) - \ln(v_{10})}{\ln(h) - \ln(10)}.
\]
We use this formula to estimate $\shear$ from the ERA5 data set, which provides hourly wind speeds at both 10\,m and 100\,m, albeit on a coarser spatial grid (approximately 27.5\,km resolution). We assume that $\shear$ varies with location and time of day. Specifically, we compute \textit{hour-of-day monthly means}: for each hour and each of the 15 ERA5 grid cells within the study region, $\shear$ is averaged over all days of the corresponding month, resulting in 24 representative values per month and location. 
The extrapolation formula \eqref{eq:power-law} is then applied to each observation in the HOSTRADA data set after spatial matching, where a HOSTRADA grid cell is assigned to the ERA5 grid cell containing its center.

\subsection{Bootstrap Confidence Intervals}

To quantify estimation uncertainty, we employ a bootstrap procedure to construct confidence intervals for both the empirical estimator $\hat p_\alpha^{\mathrm{emp}}(\bm w)$ from \eqref{eq:estimator-tail-parameter-empirical-benchmark} and the model-based estimator $\hat p_\alpha(\bm w) := \hat p(\bm x, \mathcal J_\alpha)$ from \eqref{eq:estimator-tail-parameter}. 
We generate $B=999$ bootstrap samples by resampling the observed data with replacement. For the empirical estimator, the procedure is straightforward: $\hat p_\alpha^{\mathrm{emp}}(\bm w)$ is recomputed for each bootstrap sample. The model-based estimator is treated analogously. For each bootstrap sample, we re-estimate the marginal quantities with $k(j)=k=135$, apply the tuning-parameter selection scheme described in Section~\ref{subsec:hyperparameter-and-model-validation} with $k'=k=135$, and use the resulting factor estimates $\hat A$ and $\hat K$ at the selected hyperparameter to compute $\hat p_\alpha(\bm w)$.

In the following, let $\hat p \in \{ \hat p_\alpha(\bm w), \hat p_\alpha^{\mathrm{emp}}(\bm w) \}$ denote a generic estimator of $p_\alpha(\bm w)$. Repeating the above procedure for all $B$ bootstrap samples yields bootstrap replicates $\hat p^{(1)}, \dots, \hat p^{(B)}$, from which we construct basic bootstrap confidence intervals \citep{DavisonHinkley1997}.  
Specifically, let $q_\beta^*(\hat p)$ denote the empirical $\beta$-quantile of $\hat p^{(1)}, \dots, \hat p^{(B)}$. The basic bootstrap confidence interval for $\hat p$ at level $1-\beta \in (0,1)$ is then given by $\mathrm{CI}_{1-\beta} = [\hat L_\beta, \hat U_\beta]$, where
\begin{align*}
    \hat L_\beta &= \max \big\{ 0, \, 2\hat p - q_{1-\beta/2}^*(\hat p) \big\}, 
    \qquad 
    \hat U_\beta = \min \big\{ 1, \, 2\hat p - q_{\beta/2}^*(\hat p) \big\}.
\end{align*}

\bibliographystyle{apalike}
\bibliography{pick}
\end{document}